\newcommand{\poly}{\text{poly}}
\newcommand{\B}{\mbox{$\{0,1\}$}}
\newcommand{\Bn}{\mbox{$\{0,1\}^n$}}
\newcommand\dnk{\mbox{$\ket{D^n_k}$}\xspace}
\newcommand\Path{{\tt Path}}
\newcommand\Grid{{\tt Grid}}
\newcommand\D{{\sf Divide}}
\newcommand\U{{\sf U}}
\newtheorem{theorem}{Theorem}
\newtheorem{corollary}[theorem]{Corollary}
\newtheorem{lemma}[theorem]{Lemma}
\newtheorem{definition}[theorem]{Definition}
\title{Depth-Efficient Quantum Circuit Synthesis for Deterministic Dicke State Preparation}
 \author[]{Pei Yuan \footnote{Email: peiyuan@tencent.com} }
\author[]{Shengyu Zhang \footnote{Email: shengyzhang@tencent.com}}
 \affil[]{Tencent Quantum Laboratory}
\date{}
\begin{document}

\maketitle
\begin{abstract}
    The $n$-qubit $k$-weight Dicke states \dnk, defined as the uniform superposition of all computational basis states with exactly $k$ qubits in state $\ket{1}$, form a basis of the symmetric subspace and represent an important class of entangled quantum states with broad applications in quantum computing. We propose deterministic quantum circuits for Dicke state preparation under two commonly seen qubit connectivity constraints: 
    \begin{enumerate}
        \item All-to-all qubit connectivity: our circuit has depth $O(\log(k)\log(n/k)+k)$, which improves the previous best bound of $O(k\log(n/k))$.
        \item Grid qubit connectivity ($(n_1\times n_2)$-grid, $n_1\le n_2$):
        \begin{enumerate}
            \item For $k\ge n_2/n_1$, we design a circuit with depth $O(k\log(n/k)+n_2)$, surpassing the prior $O(\sqrt{nk})$ bound.
            \item For $k< n_2/n_1$, we design an optimal-depth circuit with depth $O(n_2)$. 
        \end{enumerate}
    \end{enumerate} 
    Furthermore, we establish the depth lower bounds of $\Omega(\log(n))$ for all-to-all qubit connectivity and $\Omega(n_2)$ for $(n_1\times n_2)$-grid connectivity constraints, demonstrating the near-optimality of our constructions.
\end{abstract}

\section{Introduction}
Quantum algorithms harness fundamental phenomena such as entanglement and coherence to achieve computational advantages over their classical counterparts. Over the past decades, Over the past decades, significant progress has been made in developing quantum algorithms for machine learning \cite{biamonte2017quantum}, solving linear and differential equations \cite{harrow2009quantum,berry2014high,childs2020quantum,an2023linear} and simulating Hamiltonians \cite{berry2015simulating,low2017optimal,low2019hamiltonian}. A critical component in many of these algorithms is quantum state preparation, which encodes a $2^n$-dimensional complex vector into an $n$-qubit quantum state. General quantum state preparation has been extensively investigated with optimal bounds established \cite{zhang2022quantum,sun2023asymptotically,yuan2023optimal,yuan2024does,luo2024circuit,zi2025constant} in recent years. 

While general quantum state preparation requires circuits of exponential depth or ancilla qubits—rendering even optimal constructions impractical—many quantum algorithms rely on specific entangled states that admit significantly more efficient implementations. A prominent example is the Dicke state, the uniform superposition of all computational basis states with a fixed Hamming weight. Formally, the Dicke state preparation problem is defined as follows: For any integer $0\le k \le n$, given an $n$-qubit initial state $\ket{0}^{\otimes n}$, prepare the $(n,k)$-Dicke state
\begin{equation}\label{eq:Dicke_state}
        \ket{D^n_k}:=\frac{1}{\sqrt{\binom{n}{k}}}\sum_{x\in \{0,1\}^n: \atop |x|=k}\ket{x},
    \end{equation}
where $|x|$ denotes the Hamming weight of the $n$-bit string $x$, i.e. the number of $1$'s. Notably, the $(n,n-k)$-Dicke state can be easily obtained by applying $X^{\otimes n}$ to the $(n,k)$-Dicke state. Thus, without loss of generality, we restrict our analysis to $0\le k\le \lfloor n/2\rfloor$ throughout this work. 

Dicke states play a vital role across diverse domains of quantum information science. They are fundamental to quantum networks, quantum tomography, and quantum game theory \cite{dicke1954coherence,murao1999quantum,childs2000finding,ozdemir2007necessary,toth2010permutationally}. In quantum algorithms, Dicke states serve as initial states in variational quantum algorithms such as the Quantum Alternating Operator Ansatz (QAOA) to solve the $k$-vertex cover problem \cite{cook2020quantum}. They also enable key applications in quantum coding theory, including permutation-invariant quantum codes for quantum deletion channels \cite{ouyang2021permutation} and quantum error correction protocols \cite{ouyang2014permutation,ouyang2022finite}. 

Owing to their broad applicability, small-scale Dicke states have been experimentally demonstrated in various physical systems over the past two decades, including trapped ions \cite{hume2009preparation,lamata2013deterministic}, atomic ensembles \cite{xiao2007generation}, photonic systems \cite{wieczorek2009experimental} and superconducting circuits \cite{aktar2022divide}.
In addition to experimental results, theoretical results of circuit complexity for Dicke state preparation have also been extensively investigated \cite{bartschi2019deterministic,wang2021preparing,bartschi2022short,aktar2022divide,buhrman2024state,piroli2024approximating,yu2024efficient}. Quantum circuit costs are typically measured by size (gate count), depth (layer count), and the number of ancilla are the typical cost measures, corresponding to the preparation time complexity, execution time complexity and the space complexity of the circuit, respectively. For Dicke state, 
%
Ref. \cite{cruz2019efficient}  presented a quantum circuit of depth $O(\log (n))$ and size $O(n)$ to prepare the W-state, the special Dicke state $\ket{D^n_1}$.
Under path graph connectivity, i.e. two-qubit gates can be applied only on qubits $i$ and $i+1$ for some $i\in\{1,\ldots,n-1\}$, \cite{bartschi2019deterministic} proposed circuits for Dicke state \dnk with $O(n)$ depth and  $O(nk)$ size.  
If all-to-all qubit connectivity is available, then the depth can be reduced to $O(k\log(n/k))$ \cite{bartschi2022short}. For 2D-grid connectivity graph, which is a commonly seen one for many physical implementation of quantum computers, \cite{bartschi2022short} showed a construction with depth $O(\sqrt{nk})$ when the grid is of size $n_1\times n_2$ with $n_2/n_1\le k \le n/2$ and $n_1n_2=n$. 

\paragraph{Beyond unitary circuits.} All the above Dicke state preparation circuits do not include measurement and ancillary qubits. Quantum circuit complexity was also studied when measurements are allowed or ancilla are available. Ref. \cite{buhrman2024state} gave a protocol to prepare the Dicke state in Local Alternating Quantum-Classical Computations (\textsf{LAQCC}), which consists of alternating layers of quantum and classical circuits and measurement, and is constrained to a grid connectivity constraint. 
For $k=O(\sqrt{n})$, the paper showed that an $(n,k)$-Dicke state can be prepared by a quantum circuit of depth $O(1)$ in \textsf{LAQCC} using $O(n^2\log(n))$ ancillary qubits, or $O(\log(n))$ for arbitrary $k$ with $O(\poly(n))$ ancillary qubits. Ref. \cite{yu2024efficient} further reduced depth to  $O(poly\log(n))$ with $O(poly\log(n))$ ancilla, later optimized to $O(n\log(n))$ ancilla \cite{liu2024low}. 

\paragraph{Our results.} In this paper we focus on \textit{deterministic} quantum circuits for preparing $(n,k)$-Dicke states without measurements or ancilla, under two commonly seen qubit connectivity models: 
\begin{enumerate}
    \item All-to-all: Our circuit has a depth of $O(\log(k)\log(n/k)+k)$, which improves the previous best bound of $O(k\log(n/k))$ \cite{bartschi2022short}. 
    \item We then consider the $(n_1\times n_2)$-grid connectivity constraint, for which one can assume without loss of generality that $n_1\le n_2$. 
    \begin{enumerate}
        \item For $k\ge n_2/n_1$, we construct a circuit of depth $O(k\log(n/k)+n_2)$, which surpassing the previous one of $O(\sqrt{nk})$ \cite{bartschi2022short}. 
        \item For $k<n_2/n_1$, we design a circuit of depth $O(n_2)$, which is provably optimal. 
    \end{enumerate}
\end{enumerate} 
The optimality comes from our lower bound results. Specifically, we prove the depth lower bounds $\Omega(\log(n))$ and $\Omega(n_2)$ for all-to-all and $(n_1\times n_2)$-grid qubit connectivity constraint, respectively. {We conjecture that $\Omega(k)$ is also a lower bound even for all-to-all qubit connectivity, which would imply that our constructions are all depth-optimal (up to a logarithmic factor).} The results of circuit depth for the Dicke state preparation are summarized in Table \ref{tab:Dicke}. 

Our circuit results also extend to generation of arbitrary symmetric states composed of computational basis of Hamming weight at most $k$, while preserving depth cost for both connectivity models. 


\begin{table}[]
    \centering
    
    \begin{tabular}{c c c c }
    \hline
    \hline
        qubit connectivity & r  esults & range of $k$ & circuit depth    \\
    \hline
        \multirow{4}{*}{all-to-all} &\cite{cruz2019efficient} & $1$ & $O(\log(n))$    \\
        &\cite{bartschi2022short} & $[1,n/2]$ & $O(k\log(n/k))$  \\
        &\textbf{ours} (Thm. \ref{thm:Dicke_unitary_noconstraint_improve}) & $[1,n/2]$ & $O(\log(k)\log(n/k)+k)$    \\
        &\textbf{ours} (Thm.  \ref{thm:lb_diam}) & $[1,n/2]$ & $\Omega(\log(n))$   \\

        \hline
         \multirow{4}{*}{$ (n_1\times n_2)$-grid} &\cite{bartschi2022short} & $[n_2/n_1, n/2]$ & $O(\sqrt{nk})$  \\
        &\textbf{ours} (Coro.\ref{coro:Dicke_grid_improve})  & $[n_2/n_1, n/2]$ & $O(k\log(n/k)+n_2)$    \\
        &\textbf{ours} (Coro.\ref{coro:Dicke_grid_improve}) & $[1, n_2/n_1]$ & $O(n_2)$    \\
        
        &\textbf{ours} (Thm.\ref{thm:lb_diam}) & $[1,n/2]$ & $\Omega(n_2)$    \\
        \hline
        \hline
    \end{tabular}
    \caption{The circuit depth of deterministic preparation of $(n,k)$-Dicke state ($1\le k\le n/2$) without ancilla or measurement for the complete and 2D-grid graphs. The 2D-grid is of dimension $n_1\times n_2$, with $n_1\le n_2$ and $n=n_1n_2$.}
    \label{tab:Dicke}
\end{table}


The remainder of this paper is structured as follows. Section \ref{sec:pre} introduces key notations and reviews relevant prior work. In Section \ref{sec:circuit}, we present our main results: quantum circuits for Dicke state preparation both with all-to-all connectivity and under grid connectivity constraints. Section \ref{sec:lowerbound} establishes fundamental depth lower bounds for these preparation schemes. We conclude with a summary of our findings and discuss potential extensions in Section \ref{sec:conclusion}.

\section{Preliminaries}
\label{sec:pre}
This section introduces key notations and relevant results used throughout the paper.

\paragraph{Notation} Let $\Bn$ denote the set of all $n$-bit strings. We define $[n] = \{1,2,\ldots, n\}$ and $[n]_0 = \{0,1,2,\ldots, n\}$. For a bit string $x\in \Bn$, its Hamming weight $|x|$ counts the number of 1s in $x$. For a qubit set of qubits $S\subseteq [n]$, denote by $\ket{\psi}_S$ an $|S|$-qubit state $\ket{\psi}$ supported on qubits in $S$. When $S = \{i\}$ is a singleton, we simplify this to $\ket{\psi}_{i}$. 

We define the following quantum gates. 
\begin{enumerate}
    \item Toffoli gate ${\sf Tof}^{S}_{t}(x)$: an $(|S|+1)$-qubit gate where $S$ is the set of control qubits, $t$ is the target qubit and $x\in\{0,1\}^{|S|}$ gives the activation pattern. The gate is defined as ${\sf Tof}^{S}_{t}(x)\ket{y}_S\ket{a}_t=\ket{y}_S\ket{a\oplus [x=y]}_t$ with $[x=y]$ is the indicator function ($[x=y]=1$ if $x=y$ and $[x=y]=0$ otherwise). 
    \item CNOT gate ${\sf CNOT}^{s}_{t}$: ${\sf CNOT}^s_t\ket{x}_s\ket{y}_t=\ket{x}_s\ket{x\oplus y}_t$.  
    \item SWAP gate ${\sf SWAP}^{s}_{t}$: ${\sf SWAP}^{s}_{t}\ket{x}_s\ket{y}_t=\ket{y}_s\ket{x}_t$. 
\end{enumerate}

Throughout this work, we consider \textit{standard quantum circuits} composed exclusively of 1- and 2-qubit gates. A circuit is called a \textit{CNOT circuit} if it contains only CNOT gates.

\paragraph{Qubit connectivity} We model qubit connectivity constraints using an undirected graph graph $G = (V,E)$, where the vertex set $V$ represents the set of qubits and the edge $E$ specifies allowed two-qubit interactions. A two-qubit gate can be applied to qubits $i,j\in V$ if and only if $(i,j)\in E$. We refer to $G$ as the \textit{constraint graph} of the circuit and we say that the circuit is \textit{under $G$ constraint}. Important special cases include the following. 
\begin{enumerate}
    \item All-to-all qubit connectivity: $G = K_n$, the complete graph.
    \item 2D-grid connectivity: $G = \Grid_n^{n_1,n_2}$, an $(n_1\times n_2)$-grid with $n=n_1n_2$ qubits (assuming $n_1\le n_2$ without loss of generality). 
    \item Linear connectivity: $G = \Path_n$, an $n$-vertex path graph.
\end{enumerate}

We summarize several known circuit implementations that will be used in our constructions.
\begin{lemma} [\cite{baker2019decomposing,multi-controlled-gate}]\label{lem:tof}
An $n$-qubit Toffoli gate admits two implementations: (1) it can be implemented by a standard quantum circuit of $O(n)$ depth and size without using any ancillary qubits, and (2) also by one with $O(\log n)$ depth and $O(n)$ size using $n-1$ ancillary qubits.
\end{lemma}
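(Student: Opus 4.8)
The statement collects two standard constructions, so the plan is to exhibit each circuit explicitly and verify its depth, size, and ancilla count; everything then reduces to the fact that a single $3$-qubit Toffoli decomposes into a constant number of $1$- and $2$-qubit gates (the usual $6$-\Cnot\ gadget), so replacing every $3$-qubit Toffoli at the end preserves the asymptotic depth and size while producing a standard circuit.

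For the ancilla-assisted bound (2), I would build a balanced binary \emph{AND-tree} over the $n-1$ control qubits. Pair the controls and, for each pair, write the logical AND of the two controls into a fresh ancilla via a $3$-qubit ${\sf Tof}$; this halves the number of live wires, and since the pairs are disjoint these Toffolis all run in one parallel layer. Iterating gives a complete binary tree of depth $\lceil \log_2(n-1)\rceil$ whose root ancilla carries the conjunction of every control. Apply a single ${\sf CNOT}$ from the root ancilla to the target, then run the tree backwards to uncompute all ancilla to $\ket{0}$. The tree has $n-2$ internal nodes, hence $O(n)$ Toffolis in total (compute plus uncompute only doubles the count), $O(\log n)$ depth, and at most $n-1$ clean ancilla, giving exactly the claimed bounds after Toffoli decomposition.

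The ancilla-free bound (1) is the delicate part, because an isolated $n$-qubit Toffoli has no spare wire on which to store intermediate ANDs. The idea, following the recursive decomposition of Barenco et al., is to use the gate's own qubits as \emph{borrowed} (dirty) ancillae that are returned to their original state. Concretely, split the $n-1$ controls into two halves $A$ and $B$ of size $\Theta(n)$, pick one qubit $b$ to act as a shared scratch wire, and use the ``V-chain'' identity
\begin{equation*}
    C^{n-1}X \;=\; \Lambda_{A}(b)\,\Lambda_{B\cup\{b\}}(t)\,\Lambda_{A}(b)\,\Lambda_{B\cup\{b\}}(t),
\end{equation*}
where each factor is a multi-controlled $X$ with strictly fewer controls. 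The key point is that while realizing $\Lambda_{A}(b)$ the qubits of $B$ are idle and may be borrowed as dirty ancillae, and symmetrically the qubits of $A$ serve as borrowed ancillae for $\Lambda_{B\cup\{b\}}(t)$; the compute--uncompute structure guarantees every borrowed qubit is restored. Each half is then expanded by a linear ``conditional-NOT staircase'' of Toffolis terminating at $3$-qubit gates, contributing $O(n)$ gates and $O(n)$ depth, with no clean ancilla introduced anywhere.

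The main obstacle I anticipate is precisely this bookkeeping in (1): one must check that the supply of borrowed ancillae never runs short at any level of the recursion (which is why the construction needs $n$ large enough, with the small cases handled directly) and that the staircases uncompute exactly, so that the net action on every borrowed qubit is the identity regardless of its incoming state. Once the borrowing is accounted for, summing the (geometric) series of subcircuit sizes yields the claimed $O(n)$ depth and size, and the final pass replacing each $3$-qubit Toffoli by a constant-cost $1$- and $2$-qubit gadget completes the proof.
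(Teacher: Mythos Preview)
The paper does not prove this lemma; it simply cites it from \cite{baker2019decomposing,multi-controlled-gate}, so there is no ``paper's proof'' to match. Your sketch for part (2) is the standard balanced AND-tree argument and is correct.

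For part (1), however, there is a genuine gap. You split the $n-1$ controls into $A\cup B$ and then ``pick one qubit $b$ to act as a shared scratch wire'', but in the strictly ancilla-free setting every one of the $n$ qubits is already either a control or the target $t$---there is no leftover wire to play the role of $b$. If you take $b$ to be one of the controls, a direct calculation of your four-gate identity shows the net effect on $t$ is $t\mapsto t\oplus(\bigwedge A)(\bigwedge B)$, independent of the original value of $b$; that implements $\Lambda_{A\cup B}(t)$, not $\Lambda_{A\cup B\cup\{b\}}(t)$, so the gate fires even when the control $b$ is $0$. What you have described is exactly the Barenco et al.\ one-\emph{borrowed}-ancilla construction, which yields $O(n)$ only when at least one extra qubit (outside the $n$-qubit Toffoli) is available.

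The genuinely zero-ancilla $O(n)$ constructions in the cited references use different mechanisms: either a recursion into controlled $2^{-j}$-th roots of $X$ arranged so that repeated sub-blocks cancel rather than double, or relative-phase Toffoli chains in the style of Maslov that tolerate (and later correct) phase garbage on the controls. To complete part (1) you would need to replace the V-chain identity with one of these arguments; the borrowing trick alone does not close the case.
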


\begin{lemma}[\cite{sun2023asymptotically}]\label{lem:CNOT_sum}
A unitary transformation $U_{add}(S,t)$ implementing
\begin{equation}\label{eq:CNOT_sum}
    \ket{x_1x_2\cdots x_n}_S\ket{k}_t\xrightarrow{U_{add}(S,t)} \ket{x_1x_2\cdots x_n}_S|\oplus_{i=1}^n x_i \oplus k\rangle_t, \qquad \forall x_1,\ldots,x_n,~k\in\B
\end{equation}
can be realized by a standard quantum circuit of depth $O(\log(n))$.
\end{lemma}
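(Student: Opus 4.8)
The plan is to realize the fan-in of $n$ CNOTs targeting the single qubit $t$ as a balanced binary tree rather than a serial chain. A naive implementation applies $\mathsf{CNOT}^{i}_{t}$ for each $i\in S$ one after another; since these gates all share the target $t$, they cannot be parallelized, forcing depth $\Theta(n)$. To reach $O(\log n)$ I would instead accumulate the parity $\bigoplus_{i=1}^n x_i$ into a single source qubit by a tree-structured reduction, copy it once onto $t$, and then reverse the reduction to restore the inputs.

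Concretely, label the qubits of $S$ as $1,\dots,n$ and proceed in three stages. In the \emph{reduction} stage I would run $\lceil\log_2 n\rceil$ layers, where in layer $j$ I apply the gates $\mathsf{CNOT}^{\,i+2^{j-1}}_{\,i}$ over all indices $i\equiv 1\pmod{2^{j}}$ (omitting any pair whose control $i+2^{j-1}$ exceeds $n$ when $n$ is not a power of two). Because within each layer the controls and targets are pairwise disjoint, every layer has depth $1$, and after all layers qubit $1$ holds $\bigoplus_{i=1}^n x_i$; this stage costs $O(\log n)$ depth. The \emph{copy} stage is the single gate $\mathsf{CNOT}^{1}_{t}$, writing $\bigoplus_i x_i \oplus k$ into $t$. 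The \emph{uncomputation} stage applies the reduction circuit in reverse order; since each $\mathsf{CNOT}$ is self-inverse, this exactly undoes the reduction and restores $x_1,\dots,x_n$ to their original values, again in $O(\log n)$ depth.

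The correctness argument rests on viewing the reduction as a unitary $R$ acting only on $S$ with $R\ket{x_1\cdots x_n}_S=\ket{\,\bigoplus_i x_i,\,\ast\,}_S$ (the parity sitting in the first coordinate), so the whole circuit equals $R^{-1}\,\mathsf{CNOT}^{1}_{t}\,R$. This conjugation structure guarantees that the source register returns to $\ket{x_1\cdots x_n}_S$ while $t$ is updated by exactly the parity, which is precisely Eq.~\eqref{eq:CNOT_sum}; summing the three stages gives total depth $O(\log n)$. The only point needing care is the indexing of the tree layers when $n$ is not a power of two, which I would handle by conceptually padding to the next power of two and simply dropping any gate whose control index exceeds $n$; this preserves both the per-layer disjointness (hence unit depth) and the invariant that qubit $1$ accumulates the full parity. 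That bookkeeping is the only mildly delicate part, and everything else follows from the reversibility of CNOT circuits.
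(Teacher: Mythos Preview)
The paper does not supply its own proof of this lemma; it is quoted verbatim from \cite{sun2023asymptotically} and used as a black box. Your tree-based parity reduction (compute the full XOR into qubit~$1$ of $S$ via $\lceil\log_2 n\rceil$ layers of pairwise-disjoint CNOTs, copy once into $t$, then uncompute) is correct and is exactly the standard argument behind the cited result, including the handling of non-power-of-two $n$ by dropping out-of-range gates.
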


\begin{lemma}[\cite{sun2023asymptotically}]\label{lem:copy}
A copying unitary $U_{copy}$ satisfying
\begin{equation}
  \ket{x}\ket{0^{tn}}\xrightarrow{U_{copy}} \ket{x}\underbrace{\ket{x}\ket{x}\cdots \ket{x}}_{t~\text{copies of}~\ket{x}},\qquad x\in \Bn  
\end{equation}
admits a CNOT circuit of depth $O(\log t)$ and size $O(tn)$.
\end{lemma}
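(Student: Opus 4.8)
The plan is to exploit the fact that $U_{copy}$ acts independently and identically on each of the $n$ bit positions. Writing $x = x_1 x_2 \cdots x_n$, the target register $\ket{0^{tn}}$ decomposes into $t$ blocks of $n$ qubits, and the desired map sends the bit $x_i$ (held on qubit $i$ of the source block) into position $i$ of every one of the $t$ target blocks, while leaving the other positions untouched. Since the qubits touched while copying $x_i$ are disjoint from those touched while copying $x_j$ for $i \ne j$, it suffices to build a circuit that copies a single source bit onto $t$ fresh $\ket{0}$ qubits in depth $O(\log t)$ and size $O(t)$, and then run $n$ such circuits simultaneously.

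For the single-bit fan-out, I would use the standard binary-doubling (tree) strategy. Let $q_0$ be the source qubit holding $x_i$ and let $q_1, \ldots, q_t$ be the $t$ zero-initialized targets. In round $j = 1, 2, \ldots$, each of the $2^{j-1}$ qubits that currently hold $x_i$ applies one CNOT to a previously-untouched $\ket{0}$ target, thereby doubling the number of copies from $2^{j-1}$ to $2^j$. Because a CNOT with a $\ket{0}$ target performs $\ket{b}\ket{0} \mapsto \ket{b}\ket{b}$ in the computational basis, an easy induction shows that after round $j$ exactly $\min(2^j, t+1)$ qubits carry the value $x_i$. Hence $\lceil \log_2(t+1)\rceil = O(\log t)$ rounds suffice to fill all $t+1$ qubits (source plus $t$ copies). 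Within a single round the CNOTs act on pairwise-disjoint qubit pairs, so each round is one circuit layer, giving depth $O(\log t)$; the total CNOT count is exactly $t$.

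Combining the $n$ parallel fan-out circuits keeps the depth at $O(\log t)$ and makes the total size $n \cdot t = O(tn)$, as claimed; correctness follows since on the computational basis the whole circuit realizes $\ket{x}\ket{0^{tn}} \mapsto \ket{x}\ket{x}\cdots\ket{x}$, and by linearity this determines the unitary $U_{copy}$. The only real care needed—rather than a genuine obstacle—is the scheduling bookkeeping: one must fix, for each bit $i$ and each round $j$, which source-holding qubit copies into which fresh target so that (a) the pairs within a round are disjoint and (b) the copies for different bits never collide. A clean way to do this is to index the $t+1$ copies of bit $i$ by the integers $0, 1, \ldots, t$ and, in round $j$, let copy $m < 2^{j-1}$ write into copy $m + 2^{j-1}$ whenever the latter index is $\le t$; disjointness is then immediate from the binary structure.
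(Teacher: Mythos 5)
Your proof is correct: the paper itself states this lemma without proof (it is imported from the cited reference), and your bitwise-parallel binary-doubling fan-out is exactly the standard argument used there, with the depth $O(\log t)$, size $t$ per bit, and disjointness across the $n$ bit positions all justified properly. Nothing is missing.
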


\begin{lemma}[\cite{jiang2020optimal}]\label{lem:cnotcircuit}
    Any $n$-qubit CNOT circuit can be parallelized to depth $O\left(\log(n)+\frac{n^2}{(n+m)\log(n+m)}\right)$ using $m\ge 0$ ancillary qubits.
\end{lemma}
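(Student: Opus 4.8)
The plan is to discard the given gate sequence and instead re-synthesize from scratch the linear map that it computes. First I would record the standard dictionary between CNOT circuits and linear algebra over $\mathbb{F}_2$: the gate ${\sf CNOT}^{s}_{t}$ sends $\ket{x}\mapsto\ket{x'}$ with $x'_t=x_t\oplus x_s$ and all other coordinates unchanged, i.e.\ it acts on the length-$n$ vector $x$ as left multiplication by the elementary transvection $I+e_te_s^{\top}$. Composing the gates of any $n$-qubit CNOT circuit therefore realizes $\ket{x}\mapsto\ket{Mx}$ for a single $M\in GL(n,\mathbb{F}_2)$, and conversely every such $M$ factors into transvections. Hence it suffices to prove the following resynthesis statement: for every $M\in GL(n,\mathbb{F}_2)$ there is a CNOT circuit on $N:=n+m$ qubits, of depth $O\!\left(\log n+\tfrac{n^2}{N\log N}\right)$, that maps $x\mapsto Mx$ on the first $n$ qubits while returning the $m$ ancilla to $\ket{0}$. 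The value of this reduction is that we may build a \emph{fresh} low-depth circuit for $M$ rather than trying to parallelize the (possibly highly sequential) original one.

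The key building block is a batched matrix--vector primitive: given an arbitrary Boolean matrix $A\in\mathbb{F}_2^{p\times q}$, a register holding $x\in\mathbb{F}_2^{q}$, and a zeroed $p$-qubit target, add $Ax$ into the target. I would implement it with the Four-Russians idea together with parallel fan-out. Partition the $q$ inputs into blocks of size $g=\Theta(\log N)$; using the copying unitary of Lemma~\ref{lem:copy} broadcast each block to fresh ancilla and, in parallel over all blocks, precompute the $2^{g}$ subset-parities of each block, building a lookup table in depth $O(\log N)$. Each of the $p$ outputs is then a fixed XOR of $\lceil q/g\rceil$ table entries (one per block, selected by the rows of $A$); these are formed by copying the needed entries to the targets and applying the logarithmic-depth adder of Lemma~\ref{lem:CNOT_sum}, again in parallel across targets. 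Reversing the copy steps uncomputes all scratch space. The $g=\Theta(\log N)$ grouping is exactly what replaces the naive $pq$ additions by $O(pq/\log N)$ of them, producing the $\log N$ in the denominator of the final bound.

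To handle a general $M$, I would use a block-triangular ($PLU$-type) factorization over $\mathbb{F}_2$, computed classically offline, reducing the synthesis of $M$ to a bounded number of stages of the matrix--vector primitive applied to $b\times b$ blocks. The Four-Russians size bound gives $O(n^2/\log N)$ CNOT gates in total; choosing the block size $b$ and the number of parallel copies so that essentially all $N$ qubits are active in each layer packs these gates into $O\!\left(n^2/(N\log N)\right)$ layers, while the additive $O(\log n)$ is the critical-path depth needed to collapse up to $n$ inputs into a single output parity. (As a sanity check, $m=0$ gives depth $O(n/\log n)$, and $N=\mathrm{poly}(n)$ large gives $O(\log n)$, both as expected.)

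The main obstacle is the scheduling and uncomputation bookkeeping rather than any single clever step. A single input coordinate may be required as a control in many output parities at once, which one layer forbids, so the fan-out via copying must be interleaved with the additions so that no qubit is used twice in a layer while every stage still has depth only $O(\log N)$; simultaneously, each of the $m$ ancilla must be restored to $\ket{0}$ so that the net transformation is exactly $x\mapsto Mx$. Getting the block sizes, copy counts, and stage count to align so that the $O(n^2/\log N)$ gates really compress to $O(n^2/(N\log N))$ layers — without leaking an extra logarithmic factor — and matching the corresponding counting lower bound that makes the bound tight, is the delicate part of the argument.
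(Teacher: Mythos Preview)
The paper does not give its own proof of this lemma; it is quoted verbatim from \cite{jiang2020optimal} and used as a black box. Your sketch is essentially the construction in that reference: the identification of an $n$-qubit CNOT circuit with a matrix $M\in GL(n,\mathbb{F}_2)$, the offline $PLU$/block-triangular factorization, and the Four-Russians grouping of width $\Theta(\log N)$ combined with parallel fan-out and XOR trees are exactly the ingredients Jiang et al.\ use, and your sanity checks ($m=0$ giving $O(n/\log n)$, large $m$ giving $O(\log n)$) match their stated endpoints. So there is nothing in the present paper to compare against, and your outline is the right one for the cited result; the only caveat is the one you already flag, namely that the ancilla budget for the $2^{g}$-entry tables and the copy schedule have to be set carefully so that everything fits in $N=n+m$ qubits per stage without an extra logarithmic loss.
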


\begin{lemma}[\cite{yuan2024full}]\label{lem:permutation}
    For any permutation $\pi\in S_n$, the corresponding \textit{permutation unitary} $U_\pi^n$, defined as
    \begin{equation}\label{eq:permutation}
        U^n_\pi\ket{x_1x_2\cdots x_n}=\ket{x_{\pi(1)}x_{\pi(2)}\cdots x_{\pi(n)}}, \qquad \forall x_i\in\{0,1\}^n, \quad\forall i\in[n],
    \end{equation}
    can be implemented by a standard quantum circuit consisting of depth $O(n_2)$ under $\Grid_n^{n_1,n_2}$ constraint. 
\end{lemma}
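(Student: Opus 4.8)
The plan is to realize $U_\pi^n$ entirely by a network of nearest-neighbor SWAP gates that physically route each qubit to its target grid location. Since any SWAP network implements a permutation unitary and every permutation is reachable by such a network, it suffices to route the permutation $\pi$ (or $\pi^{-1}$, depending on convention) within depth $O(n_2)$. I would adopt the classical three-phase mesh-routing strategy: Phase A performs an independent permutation inside each column (vertical SWAPs only), Phase B performs an independent permutation inside each row (horizontal SWAPs only), and Phase C again permutes inside each column. Because all swaps in a single phase act on disjoint edges of a common orientation, each phase decomposes into $n_2$ (resp. $n_1$) independent one-dimensional routing problems on paths of length $n_1$ (resp. $n_2$).

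The correctness of the three-phase scheme rests on a combinatorial claim about the intermediate column permutation of Phase A. I would encode the column-level behavior of $\pi$ as a bipartite multigraph $H$ whose two sides are the $n_2$ source columns and the $n_2$ destination columns, inserting one edge from the source column to the destination column of each of the $n$ qubits. Since each column holds exactly $n_1$ qubits and $\pi$ is a bijection, $H$ is $n_1$-regular. By König's edge-coloring theorem (equivalently, Hall's theorem applied to regular bipartite graphs), an $n_1$-regular bipartite multigraph decomposes into $n_1$ perfect matchings, which I would label by the $n_1$ rows. Phase A then rearranges the qubits inside each source column so that the qubit belonging to matching $\ell$ lands in row $\ell$. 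The perfect-matching property guarantees that after Phase A every row contains exactly one qubit destined for each column, so the qubits in any fixed row have distinct target columns forming a permutation of $[n_2]$; hence Phase B (a row permutation) can deliver every qubit to its correct column, after which Phase C (a column permutation) places each qubit in its correct row.

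For the depth accounting I would invoke the standard fact that an arbitrary permutation of $m$ tokens on a $\Path_m$ can be routed by nearest-neighbor transpositions in depth at most $m$ (e.g.\ via odd-even transposition sort by destination index). Applying this to each independent sub-problem, Phase A and Phase C each cost depth $O(n_1)$ and Phase B costs depth $O(n_2)$, for a total of $O(n_1 + n_2) = O(n_2)$ using $n_1 \le n_2$. Every gate used is a SWAP acting on two grid-adjacent qubits, so the circuit respects the $\Grid_n^{n_1,n_2}$ constraint and is a standard quantum circuit built from 2-qubit gates.

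I expect the main obstacle to be establishing the existence of the intermediate permutation of Phase A, i.e.\ the reduction to König's theorem and the verification that the resulting row-wise target columns genuinely form a permutation of $[n_2]$; once that balancing step is in place, the depth bounds for the individual one-dimensional routings and the orientation-disjointness of the gates within each phase are routine.
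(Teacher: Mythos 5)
Your proof is correct and complete. The paper itself does not prove this lemma --- it imports it from the cited reference \cite{yuan2024full} --- but your argument is the standard three-phase (column--row--column) mesh-routing construction, with the intermediate column arrangement justified by the decomposition of an $n_1$-regular bipartite multigraph into $n_1$ perfect matchings and each one-dimensional sub-routing realized in linear depth by odd--even transposition, yielding total depth $O(n_1)+O(n_2)+O(n_1)=O(n_2)$; this is precisely the argument underlying the cited result, so nothing further is needed.
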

\begin{lemma}
[\cite{yuan2023optimal}] \label{lem:multi-QSP}
For any integers $k,m\ge 0$, $n>0$ and any $n$-qubit quantum states $\{\ket{\psi_x}:x\in \{0, 1\}^k\}$, the following $(k,n)$-controlled quantum state preparation, or $(k,n)$-CQSP, 
\begin{equation}    
    \ket{x}\ket{0^{n}}\to\ket{x}\ket{\psi_x}, \qquad \forall x\in \{0, 1\}^k
\end{equation}
can be implemented by a standard quantum circuit of depth $O\left(n+k+\frac{2^{n+k}}{n+k+m}\right)$ with $m$ ancillary qubits. 
\end{lemma}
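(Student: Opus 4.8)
The plan is to reduce the $(k,n)$-CQSP to a single ``large'' diagonal unitary on $n+k$ qubits together with a constant number of control-independent structural circuits, so that the $k$ control qubits merely inflate the diagonal part from $n$ to $n+k$ qubits while leaving everything else essentially unchanged. Concretely, I would first recall the optimal uncontrolled state-preparation construction of \cite{sun2023asymptotically}, in which the preparation of an $n$-qubit state $\ket{\psi}=\sum_{y}\alpha_y\ket{y}$ is written as a product of a constant number of \emph{amplitude-independent} CNOT/permutation circuits and a constant number of diagonal unitaries $D$ that carry \emph{all} dependence on the amplitudes $\{\alpha_y\}$ (phase data enters a diagonal directly, and magnitude data is turned into phase data through the single-qubit Hadamard identity $H\,\mathrm{diag}(e^{i\theta},e^{-i\theta})\,H$). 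The point to verify is that the surrounding CNOT/permutation layers implement a fixed combinatorial (binary-tree) structure and therefore do not depend on which state is being prepared.

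Given such a decomposition, lifting it to the controlled setting is clean. Because the structural CNOT/permutation circuits are identical for every control value $x$, they remain \emph{uncontrolled} operations acting only on the $n$-qubit target register; by Lemma~\ref{lem:cnotcircuit} (or simply an $O(n)$-depth ancilla-free realization) each costs depth at most $O(n)\le O(n+k)$. The only genuinely $x$-dependent part is each diagonal unitary, which becomes a \emph{uniformly controlled} diagonal unitary $\sum_{x\in\{0,1\}^k}\ket{x}\bra{x}\otimes D_x$. The central observation is that a uniformly controlled diagonal unitary is itself just a single diagonal unitary on the full $(n+k)$-qubit register, acting as $\ket{x}\ket{y}\mapsto e^{i\theta_{x,y}}\ket{x}\ket{y}$. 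In particular it preserves the control register and keeps the output block-diagonal, so the overall map is exactly $\ket{x}\ket{0^{n}}\mapsto\ket{x}\ket{\psi_x}$ with the control emerging unentangled and unchanged.

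I would then invoke the known depth bound for diagonal unitaries \cite{sun2023asymptotically}: an $N$-qubit diagonal unitary can be implemented in depth $O\!\left(N+\tfrac{2^{N}}{N+m}\right)$ with $m$ ancillary qubits. Applying this with $N=n+k$ gives depth $O\!\left((n+k)+\tfrac{2^{n+k}}{n+k+m}\right)$ for the controlled-diagonal part, which dominates the $O(n+k)$ cost of the finitely many structural layers and the single Hadamard layer. Summing the $O(1)$ pieces yields the claimed total depth $O\!\left(n+k+\tfrac{2^{n+k}}{n+k+m}\right)$.

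The main obstacle is establishing the isolation property used in the first step: that optimal QSP can be arranged so that \emph{all} amplitude dependence sits inside a constant number of diagonal unitaries with fixed surrounding circuitry. A naive qubit-by-qubit multiplexed-rotation cascade would instead use $\Theta(n)$ amplitude-dependent diagonal unitaries of sizes $1,2,\dots,n$, whose additive $O(N)$ overheads sum to $\Theta(n^2+nk)$ and break the target bound (even though their main terms still telescope to $O(2^{n+k}/(n+k+m))$). Keeping the number of expensive diagonal unitaries at $O(1)$ --- equivalently, ensuring the additive overhead stays $O(n+k)$ rather than $O(n^2)$ --- is exactly what I would import from the optimal-QSP machinery of \cite{sun2023asymptotically}, after which the controlled case follows immediately from the diagonal-merging observation above.
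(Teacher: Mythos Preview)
The paper does not contain a proof of this lemma: it is quoted verbatim from \cite{yuan2023optimal} as a known result (see the Preliminaries section, where it appears alongside the other cited lemmas with no accompanying argument). So there is no ``paper's own proof'' to compare your proposal against.

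As for the substance of your sketch: the overall strategy---absorb the control register by observing that a uniformly controlled diagonal is itself an $(n+k)$-qubit diagonal, then invoke the optimal diagonal-unitary depth bound---is indeed the route taken in \cite{yuan2023optimal}. Your identification of the ``main obstacle'' is accurate: the work is in arranging uncontrolled QSP so that only $O(1)$ amplitude-dependent diagonal blocks appear, with all surrounding circuitry fixed and state-independent. You assert this is available from \cite{sun2023asymptotically} but do not verify it, and this is not a triviality one can wave past---the standard binary-tree/UCG decomposition of QSP has $\Theta(n)$ amplitude-dependent stages, and collapsing these to $O(1)$ diagonals requires the specific Gray-code / phase-kickback reorganization developed in those papers. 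So your proposal is a correct high-level plan that leans on the right black box, but the step you flag as an obstacle is genuinely the entire content of the proof; what you have written is closer to a reduction \emph{to} that result than a proof of it.
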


\section{Quantum circuit for Dicke state preparation}
\label{sec:circuit}

This section presents out circuit constructions for preparing Dicke states. We begin by recalling a basic framework from prior work \cite{bartschi2019deterministic, bartschi2022short}, which our approach builds upon. Subsequent subsections detail optimized implementations for all-to-all qubit connectivity (Section \ref{sec:all-to-all}) and grid constrained connectivity (Section \ref{sec:grid}).

We first recall a unitary from \cite{bartschi2019deterministic}, 
\textit{$(n,k)$-Dicke state unitary} $\U^n_k(S)$, which acts on a qubit set $S$ of size $n$ and generates the $(n,\ell)$-Dicke state on input $\ket{0^{n-\ell} 1^\ell}$, for any $\ell \le k$. That is,
    \begin{align}\label{eq:Dicke_unitary}
        \U^n_k(S)\ket{0^{n-\ell} 1^\ell}_S=\ket{D^n_\ell}_S, \quad & \forall \ell \in [k]_0,
    \end{align}
where $\ket{D^n_\ell}$ is the $(n,\ell)$-Dicke state. {Note that this constitutes a slightly stronger requirement than the standard Dicke state \dnk preparation for a fixed $k$, as it needs to handle all $\ell\le k$ simultaneously.}
\begin{lemma}[\cite{bartschi2019deterministic}]\label{lem:Dicke_unitary_path}
    The $(n,k)$-Dicke state unitary $\U^n_k(S)$ can be implemented by a standard quantum circuit of depth $O(n)$ and size $O(nk)$ under the $\Path_n$ constraint, without ancillary qubits.
\end{lemma}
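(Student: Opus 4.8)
The plan is to build $\U^n_k(S)$ recursively, peeling off one qubit at a time with a local ``split-and-shift'' primitive as the inductive block. The engine is the binomial recursion for Dicke states: splitting on the last qubit of $S$ gives
\[
\ket{D^n_\ell} = \sqrt{\tfrac{n-\ell}{n}}\,\ket{D^{n-1}_{\ell}}\ket{0} + \sqrt{\tfrac{\ell}{n}}\,\ket{D^{n-1}_{\ell-1}}\ket{1},
\]
which holds for all $\ell\in[k]_0$ simultaneously and is equivalent to Pascal's rule $\binom{n}{\ell}=\binom{n-1}{\ell}+\binom{n-1}{\ell-1}$. This suggests a unitary $\mathsf{SCS}_{n,k}$ acting on $S$ that, on each canonical input, produces exactly these branching weights while returning the first $n-1$ qubits to canonical form:
\[
\mathsf{SCS}_{n,k}\ket{0^{n-\ell}1^\ell} = \sqrt{\tfrac{n-\ell}{n}}\,\ket{0^{n-1-\ell}1^\ell}\ket{0} + \sqrt{\tfrac{\ell}{n}}\,\ket{0^{n-\ell}1^{\ell-1}}\ket{1}, \qquad \ell\in[k]_0.
\]
Then $\U^n_k(S)=\U^{n-1}_k(S\setminus\{n\})\cdot\mathsf{SCS}_{n,k}$, and unrolling expresses $\U^n_k$ as an ordered product of the $n-1$ blocks $\mathsf{SCS}_{i,\min(i-1,k)}$ applied for $i=n,n-1,\ldots,2$.

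Next I would give the explicit gate-level form of $\mathsf{SCS}_{n,k}$ as a nearest-neighbor cascade: a two-qubit rotation near the top of the active window that initiates the weighting, followed by a chain of $O(k)$ two- and three-qubit controlled rotations on \emph{consecutive} qubits, each responsible for one possible boundary position between the $0$-block and the $1$-block. Because the active window spans only qubits $\{n-k,\ldots,n\}$ and every gate touches a constant number of adjacent positions, the block respects the $\Path_n$ constraint and uses no ancilla (each three-qubit gate decomposes into $O(1)$ nearest-neighbor two-qubit gates). Correctness is then an induction on $n$: assuming $\U^{n-1}_k$ is correct, applying it to the first $n-1$ qubits of $\mathsf{SCS}_{n,k}\ket{0^{n-\ell}1^\ell}$ reproduces the splitting identity above, hence $\ket{D^n_\ell}$, for every $\ell\le k$ at once.

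For the resource bounds, each block $\mathsf{SCS}_{i,\min(i-1,k)}$ contains $O(\min(i,k))$ gates, so the $n-1$ blocks total $\sum_{i\le k}O(i)+\sum_{i>k}O(k)=O(nk)$ gates, giving the claimed size. The depth requires more care: laying the blocks out sequentially would give $O(nk)$, since each block alone has depth $O(k)$. The key observation is that consecutive blocks $\mathsf{SCS}_{i,k}$ and $\mathsf{SCS}_{i-1,k}$ act on windows overlapping in all but one qubit and shifted by one position, so their cascades interleave into a single diagonal ``staircase''; scheduling the gates along anti-diagonals of this staircase collapses the total depth to $O(n+k)=O(n)$, since $k\le n/2$.

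I expect the main obstacle to be exactly this depth-collapsing argument: one must produce an explicit layering of the $O(nk)$ gates into $O(n)$ time steps that simultaneously respects (i) the nearest-neighbor adjacency of $\Path_n$ and (ii) the causal dependencies linking a gate of one block to the overlapping gates of its neighbors. By contrast, pinning down the amplitudes of the three-qubit rotations so that the telescoping product yields $\ket{D^n_\ell}$ for \emph{all} $\ell\le k$ simultaneously (the slightly stronger requirement emphasized above) is mechanical but must be tracked carefully through the induction.
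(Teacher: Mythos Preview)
Your proposal is correct and reproduces precisely the split-and-cyclic-shift (SCS) construction of B\"artschi--Eidenbenz, which is the work the lemma cites; the present paper does not supply its own proof of this statement but merely imports it. The recursive peel-off via $\mathsf{SCS}_{n,k}$, the $O(k)$-gate local cascade on adjacent qubits, and the anti-diagonal staircase scheduling that compresses depth from $O(nk)$ to $O(n)$ are exactly the ingredients of the cited construction, so nothing is missing.
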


One crucial subroutine for preparing Dicke states is a unitary which creates a superposition of states $\ket{0^{k-i}1^i}\ket{0^{k+i-\ell}1^{\ell-i}}$ with different $i\le \ell$. More precisely, let $m\ge k$ and $n-m\ge k$, the \textit{divide unitary} $\D^{n,m}_k(S_1,S_2)$ operates on disjoint $k$-qubit sets $S_1$ and $S_2$, satisfying
\begin{align}\label{eq:wdb}
    \D^{n,m}_k(S_1,S_2)\ket{0^k}_{S_1}\ket{0^{k-\ell}1^{\ell}}_{S_2}=\frac{1}{\sqrt{\binom{n}{\ell}}}\sum_{i=0}^{\ell} \sqrt{\binom{m}{i}\binom{n-m}{\ell-i}}\ket{0^{k-i}1^i}_{S_1}\ket{0^{k+i-\ell}1^{\ell-i}}_{S_2}, \quad &  \forall \ell\in[k]_0,
\end{align}
with the convention $\binom{s}{t}=0$ if $s<t$.
\begin{lemma}[\cite{bartschi2022short}]\label{lem:WDB_path}
    The \textit{divide unitary} $\D^{n,m}_k(S_1,S_2)$ acting on $2k$ adjacent qubits can be implemented by a quantum circuit of depth $O(k)$ and size $O(k^2)$ under $\Path_{2k}$ constraint, using no ancillary qubits.
\end{lemma}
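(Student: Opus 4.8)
The plan is to construct $\D^{n,m}_k(S_1,S_2)$ directly as a ladder of local ``weight-transfer'' gates on the path, adapting the split-and-cyclic-shift (SCS) technique of \cite{bartschi2019deterministic} that underlies Lemma \ref{lem:Dicke_unitary_path}, but replacing the binomial rotation angles by angles tuned to the hypergeometric profile in \eqref{eq:wdb}. I would first fix the linear layout of the $2k$ qubits along $\Path_{2k}$ as $S_1$ followed by $S_2$ with the $1$'s right-packed in each register, so that the only amplitudes $\D$ must create are over the $\ell+1$ right-packed configurations indexed by $i$, the number of $1$'s transferred into $S_1$; this reduces the target to producing the one-parameter family of amplitudes $\sqrt{\binom{m}{i}\binom{n-m}{\ell-i}/\binom{n}{\ell}}$ along a chain of nearest-in-$i$ configurations.

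The elementary building block is the three-qubit SCS-type gate acting on adjacent path qubits that fixes the all-zero and all-one local patterns and rotates the single-excitation pair by a chosen angle $\theta$ (a controlled-$R_y$ dressed with CNOTs, realizable in $O(1)$ depth on neighbouring qubits and moving one quantum of weight across a link with amplitude $\cos\theta$ or $\sin\theta$). I would then assemble $\Theta(k^2)$ such gates into $O(k)$ SCS layers---equivalently a triangular array of $2k-1$ parallelizable anti-diagonals---each layer spreading the weight one position further from the $S_1$--$S_2$ cut. Because every gate acts only on neighbouring qubits and each anti-diagonal consists of disjoint gates, this yields depth $O(k)$ and size $O(k^2)$ under $\Path_{2k}$ using no ancilla, matching the claimed bounds.

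The core of the argument is choosing the angles and proving correctness. I would fix the angle in each transfer position to encode the conditional hypergeometric factor---so that after each SCS layer the ratio of the amplitudes of the $i$- and $(i-1)$-configurations advances toward $\sqrt{\frac{(m-i+1)(\ell-i+1)}{i\,(n-m-\ell+i)}}$---and then verify by induction over the layers that the final amplitude of $\ket{0^{k-i}1^i}_{S_1}\ket{0^{k+i-\ell}1^{\ell-i}}_{S_2}$ equals the prescribed value, exactly as in the SCS analysis of \cite{bartschi2019deterministic} but with the new angles. The main obstacle, and the step that deserves the most care, is that a single fixed ladder with $\ell$-independent angles must reproduce the $\ell$-dependent amplitudes simultaneously for every $\ell\in[k]_0$, as \eqref{eq:wdb} demands; the subtlety is genuinely quantum, since several gate-action histories can funnel into the same output string and their amplitudes must add coherently to the intended square root rather than to the corresponding probability. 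I would resolve this the same way the SCS construction does: the cyclic-shift structure keeps the $1$'s contiguous and makes the relevant per-layer transition depend only on the local occupation of a link, so that the induced amplitude recursion is the same formula for all input weights and a gate acting on an empty link is simply the identity---letting the ladder automatically ``stop'' once all $\ell$ ones have been distributed.
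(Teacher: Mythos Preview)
The paper does not supply its own proof of this lemma: it is quoted verbatim as a result of \cite{bartschi2022short}, so there is nothing in the present paper to compare your proposal against. Your plan is, however, essentially the construction carried out in that reference---a path-local SCS ladder with rotation angles chosen so that the induced amplitude recursion reproduces the hypergeometric weights of \eqref{eq:wdb}---so at the level of strategy you are on the right track and the depth and size counts you state are the correct ones.

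The one place where your proposal is still only a promise rather than an argument is exactly the point you flag: you assert that a \emph{single} ladder with $\ell$-independent angles reproduces the $\ell$-dependent amplitudes for all $\ell\in[k]_0$, and you appeal to the SCS contiguity trick to justify this, but you do not actually exhibit the angles or the identity that makes the induction close. In \cite{bartschi2022short} this is the substantive step: the angle attached to the gate at a fixed pair of qubit positions is determined by those positions (and by $n,m,k$) alone, and a short calculation with Vandermonde-type binomial identities shows that the resulting product of cosines and sines telescopes to $\sqrt{\binom{m}{i}\binom{n-m}{\ell-i}/\binom{n}{\ell}}$ regardless of $\ell$. Until you write down those angles and carry out that verification, the $\ell$-uniformity remains the gap in your proof; everything else in your outline is routine once that is in place.
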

For notational convenience, we may drop the sets and shorten $\U^n_k(S)$ and $\D^{n,m}_k(S_1,S_2)$ to $\U^n_k$ and $\D^{n,m}_k$, respectively, when the sets are clear from the context.


With the above setup, we now sketch the circuit framework of the $n$-qubit Dicke state unitary $\U^{n}_k(S)$ (in Lemma \ref{lem:Dicke_unitary_path}), which also underlies both our unconstrained (Theorem \ref{thm:Dicke_unitary_noconstraint_improve}) and grid-constrained (Theorem \ref{thm:Dicke_unitary}) optimizations. 
A Dicke state unitary $\U^n_k$ can be realized as follows. For any $\ell\in[k]_0$,
\begin{align*}
       \ket{0^{n-\ell}1^\ell}&=\ket{0^{\lfloor n/2\rfloor-k}}_{T_1}\ket{0^k}_{S_1}\ket{0^{\lceil n/2\rceil-k}}_{T_2}\ket{0^{k-\ell}1^\ell}_{S_2}\\
   \xrightarrow{\D^{n,\lfloor n/2\rfloor}_k(S_1,S_2)}&~ \frac{1}{\sqrt{\binom{n}{\ell}}}\sum_{i=0}^{\ell} \sqrt{\binom{\lfloor n/2 \rfloor}{i}\binom{\lceil n/2 \rceil}{\ell-i}}\ket{0^{\lfloor n/2\rfloor -k}}_{T_1}\ket{0^{k-i}1^i}_{S_1}\ket{0^{\lceil n/2\rceil-k}}_{T_2} \ket{0^{k+i-\ell}1^{\ell-i}}_{S_2}& \text{(by Eq. \eqref{eq:wdb})}\\
   \xrightarrow{\U^{\lfloor n/2 \rfloor}_{k}(T_1\cup S_1)\otimes \U^{\lceil n/2\rceil}_{k}(T_2\cup S_2)} &~  \frac{1}{\sqrt{\binom{n}{\ell}}}\sum_{i=0}^{\ell} \sqrt{\binom{\lfloor n/2\rfloor}{i}\binom{\lceil n/2\rceil}{\ell-i}}\ket{D^{\lfloor n/2\rfloor}_i}_{T_1\cup S_1}\ket{D^{\lceil n/2\rceil}_{\ell-i}}_{T_2\cup S_2} & \text{(by Eq. \eqref{eq:Dicke_unitary})}\\
   =&~ \frac{1}{\sqrt{\binom{n}{\ell}}}\sum_{i=0}^{\ell}\sum_{x_1: x_1\in \{0,1\}^{\lfloor n/2\rfloor}, \atop |x_1|=i}\ket{x_1}_{T_1\cup S_1}\sum_{x_2: x_2\in \{0,1\}^{\lceil n/2\rceil},\atop |x_2|=\ell-i}\ket{x_2}_{T_2\cup S_2} & \text{(by Eq. \eqref{eq:Dicke_state})}\\
   =&~ \frac{1}{\sqrt{\binom{n}{\ell}}} \sum_{x: x\in \{0,1\}^n,\atop |x|=\ell}\ket{x}=\ket{D^n_\ell} & \text{(by Eq. \eqref{eq:Dicke_state})}\\
   =&~ \U^n_k\ket{0^{n-\ell}1^\ell}. &\text{(by Eq. \eqref{eq:Dicke_unitary})}
\end{align*}
The above shows that a Dicke state unitary $\U^n_k$ admits a divide-and-conquer approach via a recursive decomposition into one divide unitary $\D^{n,\lfloor n/2\rfloor}_k$ and two smaller-scale Dicke state unitaries $\U^{\lfloor n/2\rfloor}_k$ and $\U^{\lceil n/2\rceil}_k$, i.e., 
\begin{align}\label{eq:framework}
    \U^n_k=(\U^{\lfloor n/2\rfloor}_k\otimes \U^{\lceil n/2\rceil}_k)\D^{n,\lfloor n/2\rfloor}_k    
\end{align} 
We can recurse on $\U^{\lfloor n/2\rfloor}_k$ and $\U^{\lceil n/2\rceil}_k$ until all the Dicke state unitaries consist of $O(k)$ qubits. Namely, $\U^n_k$ can be implemented by at most $\lfloor \log(n/k)\rfloor$ layers of divide unitaries and one layer of $O(k)$-qubit Dicke state unitaries, where the $j$-th layer consists of $2^{j-1}$ divide unitaries. Also see an example of circuit framework for $\U^9_2$ in Fig. \ref{fig:unk_framework}.

\begin{figure}[hbt]
\centerline 
{\Qcircuit @C=1em @R=0.8em {
&  \multigate{8}{\U^9_2}& \qw\\
&  \ghost{\U^9_2}& \qw\\
&  \ghost{\U^9_2}& \qw\\
&  \ghost{\U^9_2}& \qw\\
&  \ghost{\U^9_2}& \qw & & & \Longrightarrow\\
&  \ghost{\U^9_2}& \qw\\
&  \ghost{\U^9_2}& \qw\\
&  \ghost{\U^9_2}& \qw\\
&  \ghost{\U^9_2}& \qw\\
}
\\
\Qcircuit @C=1em @R=0.8em {
& \qw  & \multigate{3}{\U^4_2}& \qw\\
& \qw & \ghost{\U^4_2}& \qw\\
& \multigate{1}{\D^{9,4}_2}  & \ghost{\U^4_2}& \qw\\
& \ghost{\D^{9,4}_2}  &  \ghost{\U^4_2}& \qw\\
 & \qw\qwx[3]\qwx[-1]   & \multigate{4}{\U^5_2}& \qw & & & \Longrightarrow\\
 & \qw  & \ghost{\U^5_2}& \qw\\
 & \qw  & \ghost{\U^5_2}& \qw\\
 & \multigate{1}{\D^{9,4}_2}  &\ghost{\U^5_2}& \qw\\ 
 & \ghost{\D^{9,4}_2}  & \ghost{\U^5_2} & \qw\\
}
\\
\Qcircuit @C=1em @R=0.8em {
& \qw &  \multigate{1}{\D^{4,2}_2} & \multigate{1}{\U^2_2}& \qw\\
& \qw & \ghost{\D^{4,2}_2} \qwx[1]& \ghost{\U^2_2}& \qw\\
& \multigate{1}{\D^{9,4}_2} & \multigate{1}{\D^{4,2}_2} & \multigate{1}{\U^2_2}& \qw\\
& \ghost{\D^{9,4}_2} & \ghost{\D^{4,2}_2} &  \ghost{\U^2_2}& \qw\\
 & \qw  & \multigate{1}{\D^{5,2}_2} & \multigate{1}{\U^2_2}& \qw\\
 & \qw & \ghost{\D^{5,2}_2} \qwx[2]& \ghost{\U^2_2}& \qw\\
 & \qw & \qw & \multigate{2}{\U^3_2}& \qw\\
 & \multigate{1}{\D^{9,4}_2} \qwx[-4] &  \multigate{1}{\D^{5,2}_2} &\ghost{\U^3_2}& \qw\\ 
 & \ghost{\D^{9,4}_2}  & \ghost{\D^{5,2}_2} & \ghost{\U^3_2} & \qw\\
}}
    \caption{An example of circuit framework for $\U^9_2$.}
    \label{fig:unk_framework}
\end{figure}
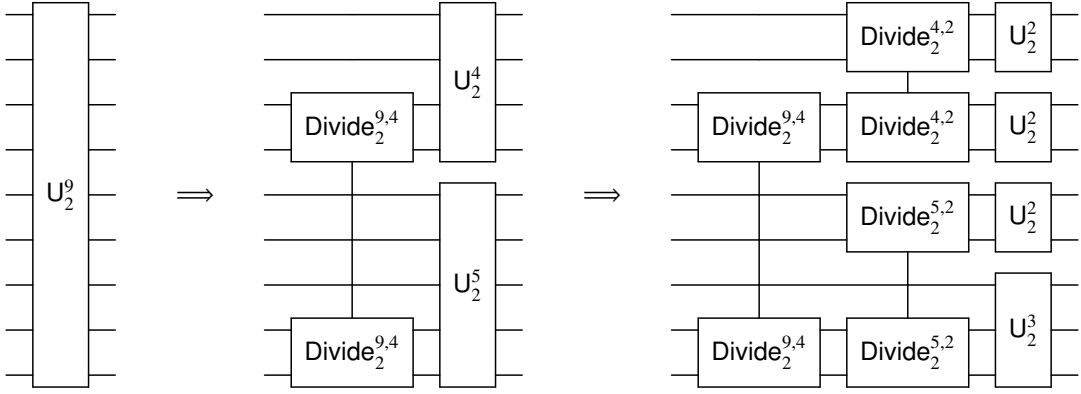


\subsection{Dicke state preparation with all-to-all qubit connectivity}
\label{sec:all-to-all}
The above framework \cite{bartschi2022short} for implementing the $\U^n_k$ consists of $O(\log(n/k))$ layers of $2k$-qubit divide unitaries followed by one layer of small-scale Dicke state unitaries $\U^{O(k)}_k$. This construction yields an overall circuit depth of $O(k \log (n/k))$. Our work achieves significant depth compression with three key ingredients: First, we observe that in each layer, the divide unitaries act only on a subset of qubits, allowing idle qubits to serve as temporary ancilla. 
Second, we adopt a hybrid encoding approach: we use one-hot encoding to move 1s for better parallelization, binary encoding for efficient superposition generation. The transform between these encodings as well as the unary encoding in the divide unitary can be implemented in low depth. 

Our improved divide unitary implementation proceeds through four phases: 
(1) Encode the input basis states in binary form (Lemma \ref{lem:binary}), and then use a CQSP circuit to create the superposition  share the same amplitude as the RHS of Eq. \eqref{eq:wdb}.
(2) Convert the binary encoding into a more sparse one-hot encoding, allowing more parallel implementation of the divide unitary's $1$-bit distribution in the basis states. 
(3) Use plus and minus operations (Lemmas \ref{lem:unitary_minus} and \ref{lem:unitary_plus}) on the one-hot encoded basis to effectively move 1s to the right positions.
(4) Transform one-hot to unary encoding as required by the divide unitary.
 
Next we will present the formal construction, starting at a few lemmas for achieving the above encoding transform and plus/minus operations.



For any number $\ell\in[k]_0$, there are three natural encodings of $\ell$ by strings in $\B^k$
\begin{enumerate}
    \item binary encoding: $\ket{0^{k-\lceil \log(k+1)\rceil}(\ell)_2}$, where $(\ell)_2$ denotes the binary representation of integer $\ell$ by $\ell_{\lceil \log(k+1)\rceil}\cdots \ell_2 \ell_1\in \B^{\lceil\log(k+1)\rceil}$ with $\ell=\sum_{j=1}^{\lceil \log(k+1)\rceil}\ell_j2^{j-1}$.
    \item one-hot encoding: $\ket{0^{k-\ell}10^{\ell-1}}$, i.e. the $\ell$-th position (from right) is $1$ and all others are $0$.
    \item unary encoding: $\ket{0^{k-\ell}1^\ell} $, i.e. the first $\ell$ positions (from right) is $1$ and all others are $0$.
\end{enumerate}

\begin{lemma}\label{lem:binary}
    With all-to-all qubit connectivity and $N\ge 2k$ ancillary qubits, the change of unary and one-hot encoding basis 
    \begin{equation}\label{eq:U_uo}
        \ket{0^{k-\ell}1^\ell} \to \ket{0^{k-\ell}10^{\ell-1}}, \quad {\forall \ell\in [k]_0,}
    \end{equation}
    can be realized by a standard quantum circuit $U_{uo}$ of depth $O\big(\log(k)+\frac{k^2}{(N+k)\log(N+k)}\big)$, and the change of one-hot and binary encoding basis 
    \begin{equation}\label{eq:U_ob}
      \ket{0^{k-\ell}10^{\ell-1}} \to \ket{0^{k-\lceil \log(k+1)\rceil}(\ell)_2}, \quad {\forall \ell\in [k]_0,}  
    \end{equation}
    can be realized by a standard quantum circuit $U_{ob}$ of depth $O\left(\log(k)+\frac{k^2}{N+k}\right)$.
    
\end{lemma}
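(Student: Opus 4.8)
The plan is to treat the two encoding changes separately and reduce each to the reversible building blocks already available. First I would fix positions numbered $1$ through $k$ and record the two arithmetic identities that drive everything. For Eq.~\eqref{eq:U_uo}, writing $u_1\cdots u_k$ for the unary string and $o_1\cdots o_k$ for the one-hot string, the single $1$ of the one-hot sits at the unique boundary between the block of $0$s and the block of $1$s, so each one-hot bit is the XOR of two adjacent unary bits, i.e. $o_j=u_j\oplus u_{j-1}$ with the convention $u_0:=0$. For Eq.~\eqref{eq:U_ob}, the $m$-th binary output bit is a fixed parity of the one-hot bits, $b_m=\bigoplus_{j\,:\,\mathrm{bit}_m(\ell(j))=1}o_j$, where $\ell(j)$ is the integer encoded by a $1$ at position $j$. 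Both relations are $\mathbb{F}_2$-linear in the input, which is the key structural fact.

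The map $U_{uo}$ is then immediate: since $o_j=u_j\oplus u_{j-1}$ has an invertible (unit bidiagonal) coefficient matrix over $\mathbb{F}_2$, it extends to a linear bijection on $\mathbb{F}_2^k$ and hence is realized by a $k$-qubit CNOT circuit. Invoking Lemma~\ref{lem:cnotcircuit} with $m=N$ ancillas gives depth $O\big(\log k+\frac{k^2}{(N+k)\log(N+k)}\big)$, exactly the claimed bound; this half is routine once the identity is in place, and reversibility of CNOT circuits hands us the inverse direction for free.

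The one-hot-to-binary direction is the crux. The first thing to notice is that, unlike $U_{uo}$, it \emph{cannot} be a pure CNOT circuit: such a circuit is an $\mathbb{F}_2$-linear bijection and must send the $k$ linearly independent one-hot inputs to $k$ linearly independent outputs, yet all target binary strings lie in the $r=\lceil\log(k+1)\rceil$-dimensional subspace spanned by the low $r$ qubits, which cannot hold $k>r$ independent vectors. Hence some non-linear (Toffoli-type) gates are unavoidable, and I would build $U_{ob}$ in three phases: (i) use the linear parity relation above to write $(\ell)_2$ into $r$ fresh ancillas by a CNOT sub-circuit bounded via Lemma~\ref{lem:cnotcircuit}; (ii) clear the one-hot register by a binary decoder, applying for each position $p$ a multi-controlled-$X$ onto $o_p$ activated by the ancilla pattern $(p)_2$; (iii) move the binary string from the ancillas into the low $r$ qubits of the now-clean main register by CNOTs and restore the ancillas. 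Phase (ii) is the main obstacle and the source of the weaker bound: all $k$ decoder gates are controlled on the same $r$-qubit register, so parallelizing them forces one to replicate the control register (Lemma~\ref{lem:copy}) and apply batches of multi-controlled-$X$ gates (Lemma~\ref{lem:tof}); accounting for how many copies the $N$ ancillas can support yields a cost of the form $O\big(\log k+\frac{k^2}{N+k}\big)$, which notably lacks the $\log(N+k)$-denominator speedup enjoyed by genuine CNOT circuits precisely because of the non-linearity established above. Combining the three phases and taking the dominant term gives the stated depth for $U_{ob}$.
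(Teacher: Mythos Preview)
Your proposal is correct and follows essentially the same route as the paper: for $U_{uo}$ both you and the paper observe that the adjacent-XOR relation is an invertible $\mathbb{F}_2$-linear map and invoke Lemma~\ref{lem:cnotcircuit}, and for $U_{ob}$ both constructions compute $(\ell)_2$ via CNOTs into ancillas, then clear the one-hot register by replicating the binary control (Lemma~\ref{lem:copy}) and firing batches of Toffoli decoders (Lemma~\ref{lem:tof}), before relocating the binary output. The only cosmetic difference is the order of the swap/move step, and your dimension argument explaining why $U_{ob}$ cannot be a pure CNOT circuit is a nice addition not present in the paper; to make the proof complete you would still need to spell out the batch count $\lceil k/p\rceil$ with $p=\Theta(N/\log k)$ copies and the per-batch Toffoli depth, which the paper does explicitly.
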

\begin{proof}
     The $k$ input qubits of $U_{uo}$ are labelled as qubit set $S=\{s_k,s_{k-1},\ldots,s_1\}$. Unitary $U_{uo}$ (Eq. \eqref{eq:U_uo}) can be realized by a CNOT circuit $\prod_{j=1}^{k-1}{\sf CNOT}^{s_{j+1}}_{s_j}$, whose circuit depth can be reduced to  $O\big(\log(k)+\frac{k^2}{(N+k)\log(N+k)}\big)$ using $N$ ancillary qubits according to Lemma \ref{lem:cnotcircuit}.

   To construct the circuit of unitary $U_{ob}$, the $k$ input and $N$ ancillary qubits are labelled as follows:
   The first $k$ input qubits are labelled as qubit set $S=\{s_k,s_{k-1},\ldots,s_1\}$. The first $k$ ancillary qubits are labelled as qubit set $T=\{t_k,t_{k-1},\ldots, t_1\}$. Let $p:=\lfloor\frac{N-k}{\lceil \log(k+1)\rceil}\rfloor$. The second $p\cdot\lceil\log(k+1)\rceil=O(N-k)$ ancillary qubits are divided into $p$ parts of size $\lceil \log(k+1)\rceil$, which are labelled as $A_1,A_2,\ldots, A_p$. The unitary $U_{ob}$ can be implemented in the following two steps:
    \begin{align}
        \ket{0^{k-\ell}10^{\ell-1}}_S\ket{0^{k}}_T\bigotimes_{j=1}^p \ket{0^{\lceil\log(k+1)\rceil}}_{A_j}
        & \to \ket{0^{k-\lceil\log(k+1)\rceil}(\ell)_2}_S \ket{0^{k-\ell}10^{\ell-1}}_T\bigotimes_{j=1}^p \ket{0^{\lceil\log(k+1)\rceil}}_{A_j},\label{eq:1-3}\\
        & \to \ket{0^{k-\lceil\log(k+1)\rceil}(\ell)_2}_S \ket{0^k}_T\bigotimes_{j=1}^p \ket{0^{\lceil\log(k+1)\rceil}}_{A_j}, \label{eq:1-4}
    \end{align}
     for any $\ell\in[k]_0$. In above equations, define $\ket{0^{k-\ell}10^{\ell-1}}$ as $\ket{0^k}$ if $\ell=0$. To implement Eq. \eqref{eq:1-3}, first we apply $\prod_{i=1}^k \prod_{j:~i_j=1, \atop j\in[\lceil\log(k+1)\rceil]} {\sf CNOT}^{s_i}_{t_j} $, which transforms basis $\ket{0^{k-\ell}10^{\ell-1}}_S\ket{0^{k}}_T$ to $\ket{0^{k-\ell}10^{\ell-1}}_S\ket{0^{k-\lceil \log(k+1)\rceil}(\ell)_2}_T$ for any $\ell\in[k]_0$. The depth of this CNOT circuit can be reduced to $O\big(\log(k)+\frac{k^2}{N\log(N)}\big)$ using $O(N-k)$ ancillary qubits in qubit set $A_1\cup \cdots \cup A_p$ based on Lemma \ref{lem:cnotcircuit}. Second, we swap the state of $S$ and $T$ by a $1$-depth circuit $\prod_{j=1}^{k}{\sf SWAP}^{s_i}_{t_j}$, which completes the circuit construction of Eq. \eqref{eq:1-3}. Eq. \eqref{eq:1-4} can be implemented as follows. For any $\ell\in[k]_0$,
    \begin{align*}
         \ket{0^{k-\lceil \log(k+1)\rceil}(\ell)_2}_S\ket{0^{k-\ell}10^{\ell-1}}_{T} \bigotimes_{j=1}^p \ket{0^{\lceil\log(k+1)\rceil}}_{A_j}
       \to & \ket{0^{k-\lceil \log(k+1)\rceil}(\ell)_2}_S\ket{0^{k-\ell}10^{\ell-1}}_T \bigotimes_{j=1}^p \ket{(\ell)_2}_{A_j}\\
       \to & \ket{0^{k-\lceil \log(k+1)\rceil}(\ell)_2}_S\ket{0^k}_T\bigotimes_{j=1}^p \ket{(\ell)_2}_{A_j}\\
       \to & \ket{0^{k-\lceil \log(k+1)\rceil}(\ell)_2}_S \ket{0^k}_T\bigotimes_{j=1}^p \ket{0^{\lceil\log(k+1)\rceil}}_{A_j}
    \end{align*}
   The first line makes $p$ copies of $\ket{(\ell)_2}$ on the qubit sets $A_1,A_2,\ldots,A_p$, which can be implemented by a circuit of depth $O(\log p)$ based on Lemma \ref{lem:copy}. The second line is implemented as follows. We apply $\prod_{i=1}^{p}{\sf Tof}^{A_i}_{t_i}((i)_2)$ to transform the first $p$ qubits $\{t_p,t_{p-1},\ldots,t_1\}$ of register $T$ to $\ket{0^p}$. These Toffoli gates can be realized in parallel of depth $O(\log(k))$ based on Lemma \ref{lem:tof}, since they act on distinct qubits. 
   By similar discussion, each $p$ qubits in register $T$ can be transformed to $\ket{0^p}$ by a circuit of depth $O(\log(k))$. Therefore, the total depth for the second line is $O(\log(k))\cdot \lceil k/p\rceil=O(k\log(k)/p)$. The third line is the inverse of the first line, which has depth $O(\log(p))$. Then the total depth of Eq. \eqref{eq:1-4} is $2\cdot O(\log(p)+O(k\log(k)/p)=O(\log(N/\log(k))+k\log^2(k)/N)$. In summary, the total depth of unitary $U_{ob}$ is $O\big(\log(k)+\frac{k^2}{(N+k)\log(N+k)}\big)+O\big(\log(k)+\frac{k^2}{N\log(N)}\big)+O(\log(N/\log(k))+k\log^2(k)/N)=O\big(\log(k)+\frac{k^2}{N+k}\big)$.
   

\end{proof}

\begin{lemma}\label{lem:unitary_minus}
A unitary transformation $U_{minus}$ satisfying
    \begin{equation}\label{eq:unitary_minus}
        U_{minus}\ket{0^{k-i}10^{i-1}}\ket{0^{k-\ell}10^{\ell-1}}\ket{0^k}=\ket{0^{k-i}10^{i-1}}\ket{0^{k-\ell}10^{\ell-1}}\ket{0^{k-(\ell-i)}10^{(\ell-i)-1}},\forall \ell\in[k]_0, \forall i\in [\ell]_0,
    \end{equation}
    can be realized by a standard quantum circuit of depth $O\left(\log(k)+\frac{k^2}{N+k}\right)$ using $N\ge 0 $ ancillary qubits with all-to-all qubit connectivity.
\end{lemma}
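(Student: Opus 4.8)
The plan is to avoid computing the $O(k^2)$ bilinear products $a_i\wedge b_{i+m}$ directly (which is Toffoli-heavy and forces a delicate parallel scheduling), and instead route the whole computation through the binary encoding, where subtraction is cheap. Writing $\mathrm{oh}(\cdot)$ for the one-hot encoding, the target map \eqref{eq:unitary_minus} sends $\mathrm{oh}(i)\otimes\mathrm{oh}(\ell)\otimes 0^k$ to $\mathrm{oh}(i)\otimes\mathrm{oh}(\ell)\otimes\mathrm{oh}(\ell-i)$. Since $i\le\ell$ the difference $\ell-i$ is always nonnegative, so no modular wraparound can occur. The key observation is that Lemma \ref{lem:binary} already supplies a one-hot$\leftrightarrow$binary conversion $U_{ob}$ (Eq. \eqref{eq:U_ob}) of depth $O(\log k+k^2/(N+k))$, which is \emph{exactly} the bound we are targeting; hence a constant number of such conversions sandwiching an $O(\log k)$-depth subtractor should give the claim.

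Concretely I would proceed in four stages, all with all-to-all connectivity. (1) Apply $U_{ob}$ to the first and second registers, replacing $\mathrm{oh}(i)$ by $(i)_2$ and $\mathrm{oh}(\ell)$ by $(\ell)_2$ in their low $\lceil\log(k+1)\rceil$ bits. (2) Run a reversible binary subtractor that reads $(i)_2,(\ell)_2$ and writes $(\ell-i)_2$ into the low bits of the third register (which starts as $0^k$), uncomputing its borrow bits. (3) Apply $U_{ob}^{-1}$ to the third register; since reversing a circuit preserves depth, this binary$\to$one-hot map has the same cost as $U_{ob}$ and yields $\mathrm{oh}(\ell-i)$. (4) Apply $U_{ob}^{-1}$ to the first and second registers to restore $\mathrm{oh}(i)$ and $\mathrm{oh}(\ell)$. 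The boundary cases are automatic from the convention $0^k\leftrightarrow(0)_2$ built into Lemma \ref{lem:binary}: $i=0$ gives $\ell-0=\ell$, and $\ell=i$ gives difference $0$, leaving the third register at $0^k$.

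For the depth: the subtractor acts on only $O(\log k)$ bits and can be implemented reversibly in depth $O(\log k)$ using $O(\log k)$ scratch ancilla (carved out of the $N$ ancilla), so it is absorbed into the $\log k$ term; the four conversions are run sequentially (reusing the ancilla) and each costs $O(\log k+k^2/(N+k))$. As these are a constant number of stages, the total is $O(\log k+k^2/(N+k))$, as required.

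The one genuine obstacle is the ancilla regime: Lemma \ref{lem:binary} presupposes $N\ge 2k$, whereas \eqref{eq:unitary_minus} is claimed for all $N\ge 0$. This is not a problem in substance, because when $N<2k$ the target bound is $O(\log k+k^2/(N+k))=O(k)$, and $U_{minus}$ admits a direct ancilla-free implementation of depth $O(k)$: the bilinear network $\prod_{i,m}{\sf Tof}(a_i,b_{i+m};c_m)$ consists of $O(k^2)$ Toffoli gates on $3k$ qubits and can be scheduled into $O(k)$ conflict-free layers by the same staircase/distribution argument underlying the $O(k)$-depth path constructions of Lemmas \ref{lem:Dicke_unitary_path} and \ref{lem:WDB_path}. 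Thus the remaining work is (a) verifying the reversible subtractor's compute/uncompute bookkeeping and its $O(\log k)$ depth, and (b) exhibiting the $O(k)$-depth schedule for the small-$N$ fallback; the interesting regime $N\gtrsim k$ is fully handled by the binary detour.
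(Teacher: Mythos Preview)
Your binary-detour approach is correct and genuinely different from the paper's. The paper never converts to binary inside $U_{minus}$: it implements the map directly as a bilinear Toffoli network $C_1\cdot C_2$, where $C_2=\prod_{1\le r<j\le k}{\sf Tof}^{s_r,t_j}_{w_{j-r}}(11)$ computes $\mathrm{oh}(\ell-i)$ for $i\ge 1$ and $C_1$ handles the $i=0$ case. The key combinatorial step is to partition the $\binom{k}{2}$ Toffolis of $C_2$ into $2k-3$ ``diagonal'' groups $C_i^{(1)},C_i^{(2)}$ that are internally qubit-disjoint, giving the $O(k)$ ancilla-free schedule. For $N\ge 3k$ ancilla, the paper then makes $\Theta(N/k)$ copies of the $S,T$ registers, runs disjoint batches of these depth-$1$ groups on different copies, and XORs the partial $W$-results back---a copy/compute/add/uncompute parallelization that yields the $k^2/N$ term.

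Your route trades this scheduling work for a constant number of calls to $U_{ob}$ from Lemma~\ref{lem:binary} plus an $O(\log k)$-depth subtractor on $\lceil\log(k+1)\rceil$ bits; since those calls already cost $O(\log k+k^2/(N+k))$, you hit the target for free once $N\ge 2k$. This is more modular and conceptually cleaner, but it leans on Lemma~\ref{lem:binary} (whose proof is itself nontrivial), whereas the paper's argument is self-contained.

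One small gap in your fallback: the bilinear network $\prod_{i,m}{\sf Tof}(a_i,b_{i+m};c_m)$ does \emph{not} handle $i=0$, since $\mathrm{oh}(0)=0^k$ means no Toffoli fires and the output stays at $\mathrm{oh}(0)$ rather than $\mathrm{oh}(\ell)$. You need the analogue of the paper's $C_1$ (an $O(\log k)$-depth patch that detects $a=0^k$ and copies $b$ into $c$). With that added, your $O(k)$ fallback schedule is exactly the diagonal partition the paper writes out explicitly.
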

\begin{proof}
    Define $\ket{0^k 1 0^{-1}}:=\ket{0^k}$.
    We label the first three $k$ qubits as $S:=\{s_k,s_{k-1},\ldots,s_1\}$, $T:=\{t_k,t_{k-1},\ldots,t_1\}$ and $W:=\{w_k,w_{k-1},\ldots,w_1\}$. 
    {The $N$ ancillary qubits are divided into $q:=\lfloor N/k\rfloor$ parts of size $k$, which are defined as $S_j:=\{s_{k,j},s_{k-1,j},\ldots,s_{1,j}\}$, $T_j:=\{t_{k,j},t_{k-1,j},\ldots,t_{1,j}\}$ and $W_j:=\{w_{k,j},w_{k-1,j},\ldots,w_{1,j}\}$ for any $j\in[\lfloor q/3\rfloor]$.} 
     Eq. \eqref{eq:unitary_minus} can be realized by the following circuit acting on qubit sets $S,T,W$
    \begin{align}\label{eq:C2-original}
        & \underbrace{U_{add}\prod_{j=1}^k{\sf Tof}^{ s_k,t_j}_{w_j}(01)U^\dagger_{add}}_{C_1}\cdot \underbrace{\prod_{r,j:1\le r<j\le k}{\sf Tof}^{s_{r},t_{j}}_{w_{j-r}}(11)}_{C_2},
    \end{align}
    where $U_{add}=U_{add}(S-\{s_k\},s_k)$ act on qubit set $S$.
    The circuit $C_1$ transforms basis state $\ket{0^k}_S\ket{0^{k-\ell}10^{\ell-1}}_T\ket{0^k}_W$ to $\ket{0^k}_S\ket{0^{k-\ell}10^{\ell-1}}_T\ket{0^{k-\ell}10^{\ell-1}}_W$ for any $\ell\in [k]_0$ and leaves other basis states in Eq. \eqref{eq:unitary_minus} unchanged. The circuit $C_2$ transform basis state $\ket{0^{k-i}10^{i-1}}_S\ket{0^{k-\ell}10^{\ell-1}}_T\ket{0^k}_W$ to $\ket{0^{k-i}10^{i-1}}_S\ket{0^{k-\ell}10^{\ell-1}}_T\ket{0^{k-(\ell-i)}10^{(\ell-i)-1}}_W$ for any $\ell\in [k]_0$, $i\in[\ell]$ and leaves other basis states in Eq. \eqref{eq:unitary_minus} unchanged. According to Lemma \ref{lem:CNOT_sum} and the definition of circuit $C_1$, $C_1$ have depth $O(\log(k))+O(k)=O(k)$. Furthermore, changing the order of Toffoli gates leaves circuit $C_2$ unchanged. Therefore, the Toffoli gates in the circuit $C_2$ can be divided into $2k-3$ groups, $C_i^{(1)}(S,T,W)$ for $i\in [k-1]$ and $C_i^{(2)}(S,T,W)$ for $i\in [k-2]$ (see Figure \ref{fig:toffoli-arrangement} (a)):
    \begin{align}\label{eq:C2}
        &\prod_{i=1}^{\lfloor k/2\rfloor}\underbrace{\prod_{j=1}^{i}{\sf Tof}^{s_j,t_{2i+1-j}}_{w_{2(i-j)+1 }}(11)}_{C^{(1)}_i(S,T,W)}\prod_{i=\lfloor k/2\rfloor+1}^{k-1}\underbrace{\prod_{j=1}^{k-i}{\sf Tof}^{s_{i-j+1},t_{i+j}}_{w_{2j-1}}(11)}_{C^{(1)}_i(S,T,W)}
        \cdot \prod_{i=1}^{\lfloor(k-1)/2\rfloor}\underbrace{\prod_{j=1}^i {\sf Tof}^{s_j,t_{2i+2-j}}_{w_{2(i-j+1)}}(11)}_{C_i^{(2)}(S,T,W)} \cdot \prod_{i=\lfloor(k+1)/2\rfloor}^{k-2}\underbrace{\prod_{j=1}^{2\lfloor(k-1)/2\rfloor+1-i} {\sf Tof}^{s_{i-j+1},t_{i+j+1}}_{w_{2j}}(11)}_{C_i^{(2)}(S,T,W)}.
    \end{align}
    Note that Toffoli gates act on distinct qubits in each $C_i^{(1)}(S,T,W)$ and $C_i^{(2)}(S,T,W)$. Namely $C_i^{(1)}(S,T,W)$ and $C_i^{(2)}(S,T,W)$ have depth $1$, which implies circuit $C_2$ have depth $2k-3$. Therefore, $U_{minus}$ can be implemented in depth $O(k)+(2k-3)=O(k)$.

    Now we show how to reduce the circuit depth of $C_1$ and $C_2$ by using $N$ ancillary qubits. Assume that $N\ge 3k$. If $N < 3k$, we do not utilize ancillary qubits. We will show how to realize $C^{(1)}_1,C^{(1)}_2,\ldots,C^{(1)}_{k-1}$. The remaining ${\sf Tof}^{s_k,t_1}_{w_1}, {\sf Tof}^{s_k,t_2}_{w_2},\ldots, {\sf Tof}^{s_k,t_k}_{w_k}$ in circuit $C_1$ and $C^{(2)}_1,C^{(2)}_2,\ldots,C^{(2)}_{k-2}$ can be implemented in the same way.
   \begin{itemize}
       \item Step 1: We make $\lfloor q/3\rfloor$ copies of qubit sets $S$, $T$ on $S_\tau$ and $T_\tau$ by a circuit of depth $O(\log(q))$ based on Lemma \ref{lem:copy}, i.e., for any $i\in [\ell]_0$ and $\ell\in [k]_0$,
   \begin{align*}
       &\ket{0^{k-i}10^{i-1}}_S\ket{0^{k-\ell}10^{\ell-1}}_T\ket{0^k}_W\bigotimes_{\tau=1}^{\lfloor q/3\rfloor}\ket{0^k}_{S_\tau}\ket{0^k}_{T_\tau}\ket{0^k}_{W_\tau}\\
      \xrightarrow{U_{copy}} &\ket{0^{k-i}10^{i-1}}_S\ket{0^{k-\ell}10^{\ell-1}}_T\ket{0^k}_W\bigotimes_{\tau=1}^{\lfloor q/3\rfloor}\ket{0^{k-i}10^{i-1}}_{S_\tau}\ket{0^{k-\ell}10^{\ell-1}}_{T_\tau}\ket{0^k}_{W_\tau}.
   \end{align*}

   \item Step 2: For each $C_i^{(1)}(S,W,T)$, define a corresponding circuit $C_i^{(1)}(S_\tau,W_\tau,T_\tau)$ acting on qubits of $S_\tau,T_\tau,W_\tau$. If there is a Toffoli gate ${\sf Tof}^{s_a,t_b}_{w_{c}}$ in $C_i^{(1)}(S,W,T)$, then there is a Toffoli gate ${\sf Tof}^{s_{a,\tau},t_{b,\tau}}_{w_{c,\tau}}$ in $C_i^{(1)}(S_\tau,W_\tau,T_\tau)$.
   Let $d:=\left\lceil\frac{k-1}{\lfloor q/3\rfloor}\right\rceil$. To implement a $C_{j+(\tau-1)d}^{(1)}(S,W,T)$, we implement $C_{j+(\tau-1)d}^{(1)}(S_\tau,W_\tau,T_\tau)$ on qubit sets $S_\tau, T_\tau, W_\tau$ for all $j\in [d]$, which have circuit depth $d$.
   
   \item Step 3: If we add states in qubits $w_{i,1},w_{i,2},\ldots,w_{i,\lfloor q/3\rfloor}$ to qubit $w_i$ for any $i\in [k]$, then the state of $w_i$ is the same as the state which is obtained by applying $C_1^{(1)}(S,T,W),\ldots,C_{(k-1)}^{(1)}(S,T,W)$ on qubit sets $S,T,W$. The above procedure can be implemented in depth $O(\log(q))$ by Lemma \ref{lem:CNOT_sum}.

   \item Step 4: Restore all qubits in $S_\tau, T_\tau, W_\tau$ for any $\tau\in[\lfloor q/3\rfloor]$ by the inverse circuits of step 2 and 1. The total depth is $O(\log(q))+d=O(\log(q)+d)$.
   \end{itemize} 
   The total depth to implement $C_{1}^{(1)},\ldots,C_{k-1}^{(1)}$ is $O(\log(q))+d+O(\log(d))+O(\log(q)+d)=O(\log(N/k)+k^2/N)$. The $1$-depth circuits ${\sf Tof}^{s_k,t_1}_{w_1}, {\sf Tof}^{s_k,t_2}_{w_2},\ldots, {\sf Tof}^{s_k,t_k}_{w_k}$ and $C^{(2)}_1,C^{(2)}_2,\ldots,C^{(2)}_{k-2}$ can be realized in the same way of depth $O(\log(N/k)+k^2/N)$.

   In summary, $U_{minus}$ can be implemented in depth $O\left(\log(k)+\frac{k^2}{N+k}\right)$ using $N\ge 0$ ancillary qubits.
\end{proof}

\begin{lemma}\label{lem:unitary_plus}
A unitary transformation $U_{plus}$ satisfying
    \begin{equation}\label{eq:unitary_plus}
        U_{plus}\ket{0^{k-i}10^{i-1}}\ket{0^{k-(\ell-i)}10^{(\ell-i)-1}}\ket{0^k}=\ket{0^{k-i}10^{i-1}}\ket{0^{k-(\ell-i)}10^{(\ell-i)-1}}\ket{0^{k-\ell}10^{\ell-1}},\forall \ell\in[k]_0, i\in [\ell]_0,
    \end{equation}
    can be realized by a standard quantum circuit of depth $O\left(\log(k)+\frac{k^2}{N+k}\right)$ using $N\ge 0 $ ancillary qubits with all-to-all qubit connectivity.
\end{lemma}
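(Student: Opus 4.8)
The plan is to mirror the construction of $U_{minus}$ in Lemma \ref{lem:unitary_minus}, exploiting the fact that $U_{plus}$ is the ``additive'' analogue of the ``subtractive'' $U_{minus}$. Writing $j:=\ell-i$, the task is: given one-hot encodings of $i$ and $j$ in the first two registers $S=\{s_k,\dots,s_1\}$ and $T=\{t_k,\dots,t_1\}$, write the one-hot encoding of $\ell=i+j$ into the third register $W=\{w_k,\dots,w_1\}$. A direct reduction to $U_{minus}$ is not available, since the subtraction circuit requires the minuend $\ell$ --- precisely the quantity we are trying to produce --- so I would instead rebuild the circuit with the target index $w_{j-r}$ of $U_{minus}$ replaced by $w_{r+j}$. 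Concretely, decompose $U_{plus}=C_1'\cdot C_2'$ where the main circuit is
\begin{equation*}
  C_2'=\prod_{\substack{r,j\ge 1\\ r+j\le k}}{\sf Tof}^{s_r,t_j}_{w_{r+j}}(11),
\end{equation*}
and $C_1'$ is a boundary circuit handling the cases $i=0$ or $j=0$.

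For correctness, observe that on a valid input the only Toffoli of $C_2'$ whose two controls are both satisfied is ${\sf Tof}^{s_i,t_j}_{w_{i+j}}$, so (when $i,j\ge1$) $C_2'$ flips exactly $w_\ell$, as required, and leaves $W$ untouched whenever $S$ or $T$ is all-zero. The boundary circuit $C_1'$ must then copy the unique nonzero register into $W$: if $S=0^k$ it copies $T$ (giving one-hot of $\ell=j$), and if $T=0^k$ it copies $S$ (giving one-hot of $\ell=i$). Each of these is implemented by the parity-controlled copy trick used for $C_1$ in Lemma \ref{lem:unitary_minus}: apply $U_{add}(S\setminus\{s_k\},s_k)$ to write the parity of the (one-hot) register $S$ into $s_k$, which equals $0$ exactly when $S=0^k$, then apply $\prod_{j}{\sf Tof}^{s_k,t_j}_{w_j}(01)$ to copy $T$ into $W$ under that condition, and finally restore $s_k$; the $T=0^k$ case is symmetric with the roles of $S$ and $T$ swapped. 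Since the two copies fire on mutually exclusive inputs (at most one of $S,T$ is zero when $\ell\ge1$), they do not interfere, and when $i=j=0$ neither fires so $W$ stays $0^k$.

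For the depth, the key difference from Lemma \ref{lem:unitary_minus} is the grouping used to extract depth-one layers: grouping the gates of $C_2'$ by the difference $d:=j-r$ (rather than by the sum, which would collide on the shared target $w_{r+j}$), the gates with fixed $d$ have controls $s_r,t_{r+d}$ and target $w_{2r+d}$, all distinct as $r$ varies, hence form a depth-$1$ layer; there are $O(k)$ values of $d$, giving depth $O(k)$ for $C_2'$, and $C_1'$ is $O(k)$ as in Lemma \ref{lem:unitary_minus}. To reduce the depth with $N$ ancillary qubits, I would reuse the scheme of Lemma \ref{lem:unitary_minus} verbatim: make $\lfloor q/3\rfloor$ copies of $S$ and $T$ with $U_{copy}$ (Lemma \ref{lem:copy}), distribute the $O(k)$ depth-one layers in batches of $\lceil (k-1)/\lfloor q/3\rfloor\rceil$ across the ancilla blocks, and recombine the partial $W$-registers by the addition circuit of Lemma \ref{lem:CNOT_sum}. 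Because exactly one Toffoli of $C_2'$ fires on any valid input, only one ancilla block's $W$-copy is flipped, so the recombining XOR reproduces the correct single $1$; this yields depth $O(\log(N/k)+k^2/N)$, and combining with $C_1'$ gives the claimed $O\big(\log(k)+\frac{k^2}{N+k}\big)$.

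The main obstacle is purely the depth-reduction bookkeeping: one must choose the correct anti-diagonal grouping (by the difference $j-r$, which is the point where the construction genuinely diverges from $U_{minus}$) so that each layer touches distinct qubits, and must verify that the ``fire-exactly-once'' property survives the ancilla-based parallelization so that the final XOR-recombination is valid. The boundary circuit $C_1'$ and the encoding conventions ($\ket{0^k10^{-1}}:=\ket{0^k}$) require the same careful case analysis as in Lemma \ref{lem:unitary_minus} but introduce no new difficulty.
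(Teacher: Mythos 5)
Your proposal is correct and follows essentially the same route as the paper: the same family of Toffoli gates ${\sf Tof}^{s_r,t_j}_{w_{r+j}}$ (the paper indexes them as ${\sf Tof}^{s_r,t_{j-r}}_{w_j}$), an equivalent partition into $2k-3$ depth-one layers along which control and target indices are all distinct, a parity-controlled copy for the degenerate inputs, and the identical copy/distribute/XOR-recombine ancilla scheme inherited verbatim from Lemma \ref{lem:unitary_minus}. The only substantive difference is that your boundary circuit also handles the $T=0^k$ (i.e., $i=\ell\ge 1$) input by symmetrically copying $S$ into $W$, a case that the paper's written $C_1$ (which only conditions on the parity of $S$) does not cover — a small but genuine improvement in completeness.
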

\begin{proof}
      Let $\ket{0^k 1 0^{-1}}:=\ket{0^k}$.
    We label the first three $k$ qubits as $S:=\{s_k,s_{k-1},\ldots,s_1\}$, $T:=\{t_k,t_{k-1},\ldots,t_1\}$ and $W:=\{w_k,w_{k-1},\ldots,w_1\}$. 
     Eq. \eqref{eq:unitary_plus} can be realized by the following circuit acting on qubit sets $S,T,W$
    \begin{align}\label{eq:C2_plus-original}
        & \underbrace{U_{add}\prod_{j=1}^k{\sf Tof}^{ s_k,t_j}_{w_j}(01)U^\dagger_{add}}_{C_1}\cdot \underbrace{\prod_{r,j:1\le r<j\le k}{\sf Tof}^{s_{r},t_{j-r}}_{w_{j}}(11)}_{C_2},
    \end{align}
    where $U_{add}=U_{add}(S-\{s_k\},s_k)$ act on qubit set $S$.
    The circuit $C_1$ transforms basis state $\ket{0^k}_S\ket{0^{k-\ell}10^{\ell-1}}_T\ket{0^k}_W$ to $\ket{0^k}_S\ket{0^{k-\ell}10^{\ell-1}}_T\ket{0^{k-\ell}10^{\ell-1}}_W$ for any $\ell\in [k]_0$ and leaves other basis states in Eq. \eqref{eq:unitary_plus} unchanged. The circuit $C_2$ transform basis state $\ket{0^{k-i}10^{i-1}}_S\ket{0^{k-\ell}10^{\ell-1}}_T\ket{0^k}_W$ to $\ket{0^{k-i}10^{i-1}}_S\ket{0^{k-\ell}10^{\ell-1}}_T\ket{0^{k-(\ell-i)}10^{(\ell-i)-1}}_W$ for any $\ell\in [k]_0$, $i\in[\ell]$ and leaves other basis states in Eq. \eqref{eq:unitary_plus} unchanged. The Toffoli gates in the circuit $C_2$ can be divided into $2k-3$ groups, $C_i^{(1)}(S,T,W)$ for $i\in [k-1]$ and $C_i^{(2)}(S,T,W)$ for $i\in [k-2]$ (see Figure \ref{fig:toffoli-arrangement}(b)):
        \begin{align}\label{eq:C2_plus}
        &\prod_{i=1}^{\lfloor k/2\rfloor}\underbrace{\prod_{j=1}^{i}{\sf Tof}^{s_j,t_{2(i-j)+1 }}_{w_{2i+1-j}}(11)}_{C^{(1)}_i(S,T,W)}\prod_{i=\lfloor k/2\rfloor+1}^{k-1}\underbrace{\prod_{j=1}^{k-i}{\sf Tof}^{s_{i-j+1},t_{2j-1}}_{w_{i+j}}(11)}_{C^{(1)}_i(S,T,W)}
        \cdot \prod_{i=1}^{\lfloor(k-1)/2\rfloor}\underbrace{\prod_{j=1}^i {\sf Tof}^{s_j,t_{2(i-j+1)}}_{w_{2i+2-j}}(11)}_{C_i^{(2)}(S,T,W)} \cdot \prod_{i=\lfloor(k+1)/2\rfloor}^{k-2}\underbrace{\prod_{j=1}^{2\lfloor(k-1)/2\rfloor+1-i} {\sf Tof}^{s_{i-j+1},t_{2j}}_{w_{i+j+1}}(11)}_{C_i^{(2)}(S,T,W)}.
    \end{align}
    The above circuit has the same form of Eq. \eqref{eq:C2}.
    therefore, by the same discussion of Lemma \ref{lem:unitary_minus}, $U_{plus}$ can be implemented in depth $O\left(\log(k)+\frac{k^2}{N+k}\right)$ using $N\ge 0$ ancillary qubits.
\end{proof}

With the above tools, we can reduce the circuit depth of the divide unitary.
\begin{lemma}\label{lem:WDB_ancilla}
    The divide unitary $\D^{n,m}_k(S_1,S_2)$ defined as in Eq.\eqref{eq:wdb} can be implemented by a standard quantum circuit of depth $O\left(\log(k)+\frac{k^2}{k+N}\right)$ using $N$ ($\ge 0$) ancillary qubits. 
\end{lemma}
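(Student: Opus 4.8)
The plan is to realize $\D^{n,m}_k$ through the four-phase strategy sketched before the lemma, moving among the three encodings (binary, one-hot, unary) so that each operation is carried out in whichever encoding makes it cheap. Throughout I treat $S_1$ (initialized to $\ket{0^k}$) and $S_2$ (holding the unary encoding $\ket{0^{k-\ell}1^\ell}$ of $\ell$) as the two working registers, reserving a fresh $k$-qubit ancilla block $W$ together with the remaining ancilla for the subroutines.

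\emph{Phase 1 (amplitudes in binary).} First I convert $S_2$ from unary to binary by applying $U_{ob}U_{uo}$ (Lemma \ref{lem:binary}), so $S_2$ stores $(\ell)_2$. Then I apply a CQSP (Lemma \ref{lem:multi-QSP}), with both control and target registers of width $\lceil\log(k+1)\rceil$, controlled on $S_2=(\ell)_2$, that writes into $S_1$ the binary-encoded superposition $\sum_{i=0}^{\ell}\sqrt{\binom{m}{i}\binom{n-m}{\ell-i}/\binom{n}{\ell}}\,\ket{(i)_2}$. This is a legitimate normalized state precisely because of Vandermonde's identity $\sum_{i=0}^{\ell}\binom{m}{i}\binom{n-m}{\ell-i}=\binom{n}{\ell}$, and its amplitudes are exactly those appearing on the right-hand side of Eq.~\eqref{eq:wdb}.

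\emph{Phases 2--4 (distribute the $1$s, then return to unary).} I convert both registers to one-hot via $U_{ob}^{\dagger}$, giving $S_1=\ket{0^{k-i}10^{i-1}}$ and $S_2=\ket{0^{k-\ell}10^{\ell-1}}$. Next I use $U_{minus}$ (Lemma \ref{lem:unitary_minus}) with inputs $(S_1,S_2)$ and target $W$ to write the one-hot encoding of $\ell-i$ into $W$. Since $\ell=i+(\ell-i)$, applying $U_{plus}^{\dagger}$ (Lemma \ref{lem:unitary_plus}) with inputs $(S_1,W)$ and target $S_2$ erases $S_2$ back to $\ket{0^k}$; a single layer of SWAP gates then moves $\ell-i$ from $W$ into $S_2$ and leaves $W=\ket{0^k}$. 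Finally $U_{uo}^{\dagger}$ applied to $S_1$ and to $S_2$ turns the one-hot encodings of $i$ and $\ell-i$ into the unary encodings $\ket{0^{k-i}1^i}_{S_1}$ and $\ket{0^{k+i-\ell}1^{\ell-i}}_{S_2}$. By linearity over the superposition in $i$, the overall map is exactly Eq.~\eqref{eq:wdb}, and every ancilla (including $W$) is returned to $\ket{0}$.

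\emph{Depth and the case split.} The construction above needs the $k$-qubit block $W$, so it only applies when $N\ge 2k$; for $N<2k$ I instead invoke the path implementation of Lemma \ref{lem:WDB_path}, whose depth $O(k)$ already matches $O(k^2/(k+N))=\Theta(k)$ in that regime. When $N\ge 2k$ I hand $\Theta(N)$ ancilla to each subroutine: Lemma \ref{lem:binary} contributes $O(\log k+k^2/(N+k))$ for the encoding changes, Lemmas \ref{lem:unitary_minus} and \ref{lem:unitary_plus} contribute $O(\log k+k^2/(N+k))$ for $U_{minus}$ and $U_{plus}$, and Lemma \ref{lem:multi-QSP} with $m=\Theta(N)$ contributes $O(\log k+2^{2\lceil\log(k+1)\rceil}/(\log k+N))=O(\log k+k^2/N)$ for the CQSP, while the SWAP layer costs $O(1)$. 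Summing these constantly many phases gives total depth $O(\log k+k^2/(k+N))$, as claimed. The main obstacle I anticipate is making the bookkeeping fully reversible: the only point where three correlated copies coexist is right after $U_{minus}$ (registers $i$, $\ell$, $\ell-i$ all present), and the construction must be arranged so that $\ell$ is \emph{uncomputed} via $U_{plus}^{\dagger}$ rather than discarded, since otherwise the ancilla would stay entangled and the map would fail to be the claimed unitary. A secondary subtlety is the CQSP depth accounting: because both binary registers have width only $\lceil\log(k+1)\rceil$, the generic $2^{n+k}$ factor collapses to $\Theta(k^2)$, and one must verify that supplying $\Theta(N)$ ancilla—together with the fallback to the path circuit once $N$ drops below $\Theta(k)$—keeps the bound tight across the entire range of $N$.
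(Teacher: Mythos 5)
Your proposal is correct and follows essentially the same route as the paper's proof: the same unary $\to$ binary conversion, the same $(\lceil\log(k+1)\rceil,\lceil\log(k+1)\rceil)$-CQSP to load the amplitudes, the same $U_{minus}$ followed by $U_{plus}^{\dagger}$ to compute $\ell-i$ while uncomputing $\ell$, the same swap-and-convert-back-to-unary finish, and the same fallback to Lemma \ref{lem:WDB_path} when $N<2k$. The depth accounting also matches the paper's term by term.
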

\begin{proof}
    Let $\ket{0^{k-\ell}10^{-1}}:=\ket{0^k}$.
    If the number of ancillary qubits $N< 2k$, we implement $\D^{n,m}_k$ by a circuit of depth $O(k)$ according to Lemma \ref{lem:WDB_path}. If the number of ancillary qubits $N\ge 2k$, we implement $\D^{n,m}_k$ as follows. For any $\ell\in[k]_0$,
    \begin{align}
     &\ket{0^k}\ket{0^{k-\ell} 1^\ell}\ket{0^N} \nonumber \\
     \to  &\ket{0^k}\ket{0^{k-\lceil \log(k+1)\rceil} (\ell)_2} \ket{0^N} \label{eq:1} & \text{(by Lemma \ref{lem:binary})}\\
     \to & \frac{1}{\sqrt{\binom{n}{\ell}}}\sum_{i=0}^{\ell} \sqrt{\binom{m}{i}\binom{n-m}{\ell-i}}\ket{0^{k-\lceil \log(k+1)\rceil}(i)_2}\ket{0^{k-\lceil \log(k+1)\rceil} (\ell)_2}\ket{0^N} & \text{(by Lemma \ref{lem:multi-QSP})} \label{eq:2}\\
     \to & \frac{1}{\sqrt{\binom{n}{\ell}}}\sum_{i=0}^{\ell} \sqrt{\binom{m}{i}\binom{n-m}{\ell-i}}\ket{0^{k-i}10^{i-1}}\ket{0^{k-\ell}10^{\ell-1}}\ket{0^N} & \text{(by the inverse of Eq. \eqref{eq:U_ob})} \label{eq:3}\\
     \to & \frac{1}{\sqrt{\binom{n}{\ell}}}\sum_{i=0}^{\ell} \sqrt{\binom{m}{i}\binom{n-m}{\ell-i}}\ket{0^{k-i}10^{i-1}}\ket{0^{k-\ell}10^{\ell-1}}\ket{0^{k-(\ell-i)}10^{(\ell-i)-1}}\ket{0^{N-k}} & \text{(by Lemma \ref{lem:unitary_minus})} \label{eq:4}\\
     \to & \frac{1}{\sqrt{\binom{n}{\ell}}}\sum_{i=0}^{\ell} \sqrt{\binom{m}{i}\binom{n-m}{\ell-i}}\ket{0^{k-i}10^{i-1}}\ket{0^k}\ket{0^{k-(\ell-i)}10^{(\ell-i)-1}} \ket{0^{N-k}} & \text{(by Lemma \ref{lem:unitary_plus})} \label{eq:5}\\
     \to &\frac{1}{\sqrt{\binom{n}{\ell}}}\sum_{i=0}^{\ell} \sqrt{\binom{m}{i}\binom{n-m}{\ell-i}}\ket{0^{k-i}1^i}\ket{0^{k+i-\ell}1^{\ell-i}}\ket{0^k}\ket{0^{N-k}} & \text{(by the inverse of Eq. \eqref{eq:U_uo})} \label{eq:6}\\
     = & \D^{n,m}_k\ket{0^k}\ket{0^{k-\ell} 1^\ell}\ket{0^N} & \text{(by Eq. \eqref{eq:wdb})}
\end{align}
Based on Lemma \ref{lem:binary}, the circuit depth of Eq. \eqref{eq:1} is $O\big(\log(k)+\frac{k^2}{(N+k)\log(N+k)}\big)+O\big(\log(k)+\frac{k^2}{N+k}\big)=O\big(\log(k)+\frac{k^2}{N+k}\big)$ by using both Eqs. \eqref{eq:U_uo} and \eqref{eq:U_ob}. Eq. \eqref{eq:2} is a $(\lceil\log(k+1)\rceil,\lceil\log(k+1)\rceil)$-CQSP, which can be realized by a circuit of depth $O\left(\log(k)+\frac{k^2}{N+\log(k)}\right)$ using $N$ ancillary qubits based on  Lemma \ref{lem:multi-QSP}. Eq. \eqref{eq:3} can be implemented by applying the inverse circuits of Eq. \eqref{eq:U_uo} of depth $O\big(\log(k)+\frac{k^2}{N+k}\big)$ by using $N$ ancillary qubits. Eq. \eqref{eq:4} can be realized in depth $O\big(\log(k)+\frac{k^2}{N}\big)$ by using $N-k$ ancillary qubits. Eq. \eqref{eq:5} can be implemented by a inverse circuit of $U_{plus}$ by using $N-k$ ancillary qubits in Lemma \ref{lem:unitary_plus}, which has depth $O\big(\log(k)+\frac{k^2}{N}\big)$. To implement Eq. \eqref{eq:6}, first we swap the second and the third $k$ qubits by a swap circuit of depth $1$; second we apply the inverse circuit of Eq. \eqref{eq:U_uo} using $N-k$ ancillary qubits, which has depth $O\big(\log(k)+\frac{k^2}{N\log(N)}\big)$ based on Lemma \ref{lem:binary}.
Hence, if there are $N\ge 2k$ ancillary qubits, $\D^{n,m}_k$ can be implemented by a circuit of depth
$2\cdot O\big(\log(k)+\frac{k^2}{N+k}\big)+O\big(\log(k)+\frac{k^2}{N+\log(k)}\big)+2\cdot O\big(\log(k)+\frac{k^2}{N}\big)+O\big(\log(k)+\frac{k^2}{N\log(N)}\big)= O\big(\log(k)+\frac{k^2}{N+k}\big)$.
 In summary, $\D^{n,m}_k$ can be implemented by a circuit of depth $O\big(\log(k)+\frac{k^2}{k+N}\big)$ using $N\ge0$ ancillary qubits.
\end{proof}

\begin{figure}[htbp]
        \centering
        \subcaptionbox{Circuits $C_i^{(1)}(S,T,W)$ and $C_i^{(2)}(S,T,W)$ in Eq. \eqref{eq:C2}.}{
        \includegraphics[width = .45\textwidth]{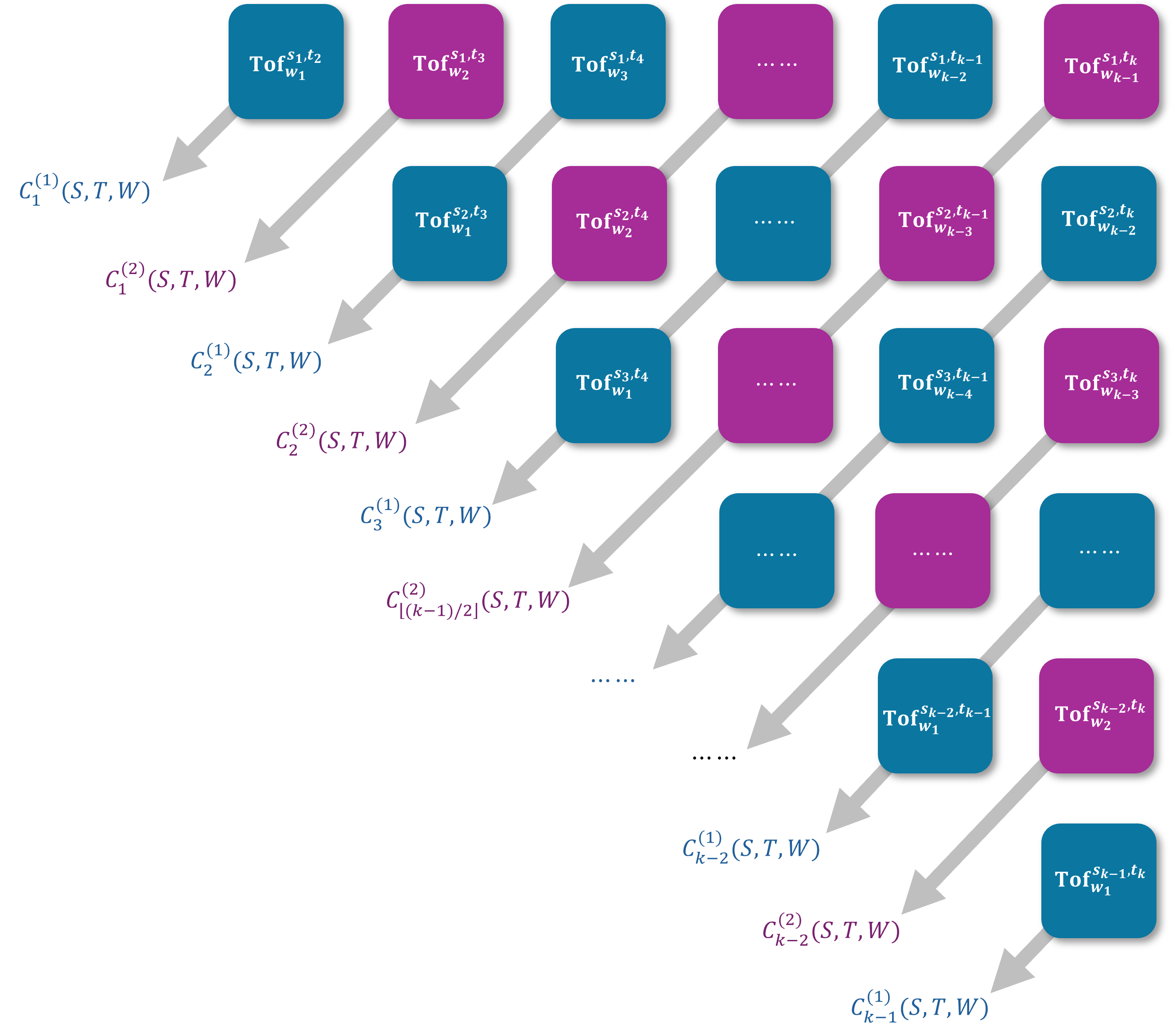}
    }
        \subcaptionbox{Circuits $C_i^{(1)}(S,T,W)$ and $C_i^{(2)}(S,T,W)$ in Eq. \eqref{eq:C2_plus}.}{
        \centering
        \includegraphics[width = .45\textwidth]{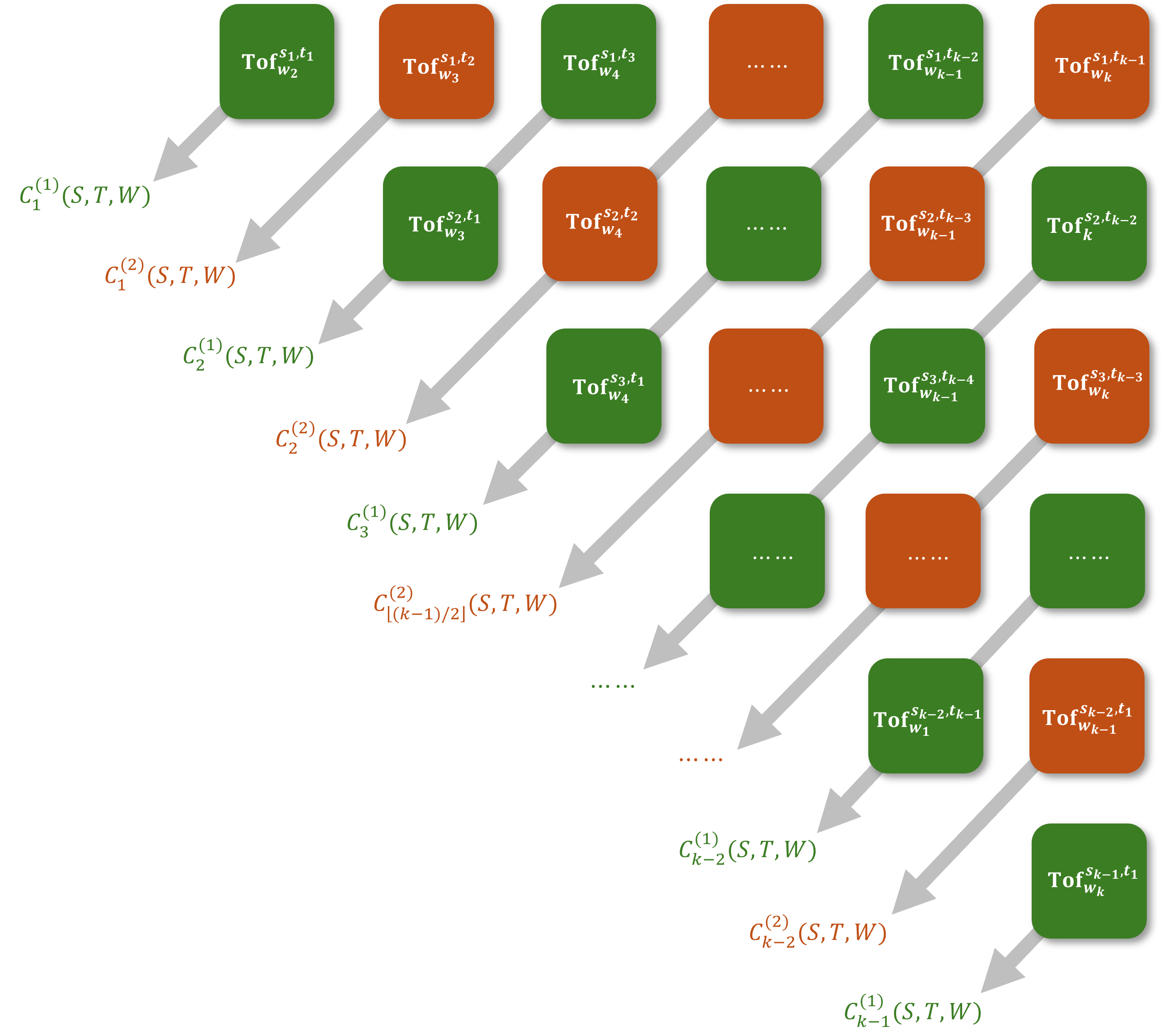}
    }
        \caption{The blocks in (a) and (b) are all Toffoli gates in Eqs.~\eqref{eq:C2-original} and \eqref{eq:C2_plus-original}, respectively. The Toffoli gates on the same arrow (the Toffoli gates in $C_i^{(1)}$ or $C_i^{(2)}$) have distinct control and target qubits.}
        \label{fig:toffoli-arrangement}
    \end{figure}

Lemma \ref{lem:WDB_ancilla} can then be used to construct efficient circuit preparing the Dicke state.
\begin{theorem}\label{thm:Dicke_unitary_noconstraint_improve}
    The Dicke state $\ket{D^n_k}$ can be prepared by a standard quantum circuit of depth $O\big(\log(k)\log(n/k)+k\big)$ with all-to-all qubit connectivity.
\end{theorem}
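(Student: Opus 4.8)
The plan is to instantiate the divide-and-conquer framework of Eq.~\eqref{eq:framework} and charge each recursion layer using the ancilla-assisted divide unitary of Lemma~\ref{lem:WDB_ancilla}, exploiting the fact that every divide unitary touches only $2k$ of the qubits in its subproblem. Unrolling $\U^n_k=(\U^{\lfloor n/2\rfloor}_k\otimes\U^{\lceil n/2\rceil}_k)\D^{n,\lfloor n/2\rfloor}_k$ down to subproblems of size $O(k)$, I would organize the circuit into $L=\lfloor\log(n/k)\rfloor$ layers of divide unitaries followed by a single layer of base-case Dicke state unitaries $\U^{O(k)}_k$. At layer $j$ the $2^{j-1}$ divide unitaries act on pairwise disjoint $2k$-qubit sets $S_1\cup S_2$, so they run in parallel; the point is that within each subproblem of $\approx n/2^{j-1}$ qubits the spectator qubits $T_1\cup T_2$ sit in $\ket{0}$ throughout that layer, furnishing $N_j\approx n/2^{j-1}-2k$ clean ancilla that the divide unitary may borrow and return.

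The second step is the per-layer depth accounting. Substituting $N_j=n/2^{j-1}-2k$ into Lemma~\ref{lem:WDB_ancilla} gives each layer-$j$ divide unitary depth $O\!\left(\log k+\frac{k^2}{k+N_j}\right)=O\!\left(\log k+\frac{k^2}{n/2^{j-1}-k}\right)$, which (since the unitaries of a layer are parallel and borrow disjoint ancilla) is the entire depth of layer $j$. Summing the $\log k$ terms over the $L$ layers contributes $O(\log k\,\log(n/k))$. For the remaining terms I would reindex by writing the subproblem size at a given layer as $2^t k$ for $t=1,\dots,\log(n/k)$, so that the term becomes $\frac{k^2}{(2^t-1)k}=\frac{k}{2^t-1}$; the crucial observation is that this is a geometrically decaying series, whence $\sum_t\frac{k}{2^t-1}\le k\sum_{t\ge 1}\frac{1}{2^t-1}=O(k)$ rather than the naive $O(k\log(n/k))$. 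This collapse to $O(k)$ is exactly where the improvement over the prior $O(k\log(n/k))$ bound originates.

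Finally I would bound the base layer: the $O(n/k)$ leaf unitaries $\U^{O(k)}_k$ act on disjoint $O(k)$-qubit blocks, and by Lemma~\ref{lem:Dicke_unitary_path} each is realizable in depth $O(k)$ (a path circuit is a fortiori valid under all-to-all connectivity), so run in parallel they add only $O(k)$. Combining the three contributions yields total depth $O(\log k\,\log(n/k)+k)$; feeding $\ket{0^{n-k}1^k}$ (produced from $\ket{0^n}$ by $k$ parallel $X$ gates) into $\U^n_k$ then prepares $\ket{D^n_k}$ by Eq.~\eqref{eq:Dicke_unitary}.

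I expect the main obstacle to be verifying rigorously that the spectator qubits really are available as clean ancilla at every level of the recursion---that in each branch of the superposition over $i$ they enter the layer-$j$ divide unitary in $\ket{0}$ and are restored to $\ket{0}$ (the ancilla discipline of Lemma~\ref{lem:WDB_ancilla})---and ensuring the counts $N_j\approx n/2^{j-1}$ survive the floors and ceilings in $\lfloor n/2\rfloor,\lceil n/2\rceil$ accumulated over $\log(n/k)$ levels. The convergence of the geometric sum is the conceptual crux, but it is routine once this ancilla bookkeeping is pinned down.
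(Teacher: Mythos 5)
Your proposal is correct and follows essentially the same route as the paper's proof: the balanced recursion of Eq.~\eqref{eq:framework}, idle qubits in each layer serving as the $N_j=\lfloor n/2^{j-1}\rfloor-2k$ ancilla for Lemma~\ref{lem:WDB_ancilla}, the $O(\log k\,\log(n/k))$ contribution from the per-layer $\log k$ terms, the geometric collapse of $\sum_j k^2/(k+N_j)$ to $O(k)$, and a final parallel layer of $O(k)$-qubit Dicke unitaries via Lemma~\ref{lem:Dicke_unitary_path}. The only cosmetic difference is that the paper additionally dispatches the $k=\Omega(n)$ regime directly by Lemma~\ref{lem:Dicke_unitary_path}, which your accounting also covers.
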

\begin{proof}
Any $\ket{D^n_k}$ can be prepared by applying $\U^n_k$ on state $\ket{0^{n-k}1^k}$.
If $k=\Omega(n)$, $\ket{D^n_k}$ can be realized by a circuit of depth $O(k)$ according to Lemma \ref{lem:Dicke_unitary_path}. If $k=o(n)$, $\U^n_k$ will be implemented as follows.
As previously discussed, a Dicke state unitary can be implemented by a circuit consisting of $d$ layers of $2k$-qubit divide unitaries and one layer of $\ell$-qubit Dicke state unitaries, where $d$ is at most $\lfloor\log(n/k)\rfloor$ and $\ell=O(k)$. There are $2^{j-1}$ divide unitaries $\D^{m_1,m_2}_k$ where $m_1\ge m_2$, $m_1=O(n/2^{j-1})$ and $m_2=O(n/2^{j})$ in the $j$-layer for $j\in[d]$ and $2^{d}$ $O(k)$-qubit Dicke state unitaries in the $(d+1)$-th layers. Since each divide unitary acts on $2k$ qubits, $n-2^{j-1}\cdot 2k=n-2^j k$ are idle in the $j$-th layer, which can be utilized as the ancillary qubits. In the $j$-th layer, each divide unitary is located $N_j=\lfloor(n-2^j k)/2^{j-1}\rfloor=\lfloor n/2^{j-1}\rfloor-2k$ ancillary qubits.
Therefore, the total depth for $\U^n_k$ acting on $\ket{0^{n-k}1^k}$ is 
   \begin{align*}
       &\sum_{j=1}^d\underbrace{ O\left(\log(k)+\frac{k^2}{k+N_j}\right)}_{\text{the depth of the $j$-th layer}}+\underbrace{O(\ell)}_{\text{the depth of the $(d+1)$-th layer}}\\
       = &O(\log(n/k)\log(k))+\sum_{j=1}^dO\left(\frac{k^2}{n/2^{j-1}}\right)+O(n/2^{\log(n/k)-2})\\
     = & O(\log(n/k)\log(k)+k),
   \end{align*}
   where in the first line, the first and second term are obtained by Lemmas \ref{lem:WDB_ancilla} and \ref{lem:WDB_path}.
\end{proof}



\subsection{Dicke state preparation under grid qubit connectivity constraint}
\label{sec:grid}
{We now present an improved construction for implementing  Dicke state preparation under grid $\Grid_n^{n_1,n_2}$ constraint, achieving better depth scaling than prior work \cite{bartschi2022short} while handling all parameter regimes.} 
{When $n_2/n_1\le k \le n/2$, Ref. \cite{bartschi2022short} achieves a circuit depth of $O(\sqrt{nk})$ for $\U^n_{k}$ using the same framework as in Eq.~\eqref{eq:framework} albeit with an unbalanced decomposition: They first decomposed Dicke state unitary as $\U^n_k=(\U^{\sqrt{nk}}_k\otimes \U^{n-\sqrt{nk}}_k) \D^{n, \sqrt{nk}}_k$ and then recursively implemented $\U^{\sqrt{nk}}_k$ and $\U^{n-\sqrt{nk}}_k$. Note that the divide unitary needs to be implemented on $2k$ adjacent qubits (Lemma \ref{lem:WDB_path}), 
{and this unbalanced decomposition is easy to implement under the connectivity constraint. However, the unbalanced recursion leads to a large overall depth. We improve upon \cite{bartschi2022short} by a balanced recursion which can employ better parallelization. The price is that balanced decomposition requires to move 1s to the middle of the current row (or column) in the grid, which brings extra overhead. But we will show that the balanced approach yields greater parallelization benefits than the positioning overhead, resulting in an overall reduction of depth. This is formalized in the next theorem, which not only improves the result in \cite{bartschi2022short}, but also optimally handles the case of $k < n_2/n_1$, which was not studied in \cite{bartschi2022short}.} 
}

\begin{theorem}\label{thm:Dicke_unitary}
    The Dicke state unitary $\U^n_k$ can be implemented by a standard quantum circuit of depth $O(k\log(n/k)+n_2)$ if $k\ge n_2/n_1$, and of depth $O(n_2)$ if $k< n_2/n_1$ under $\Grid_n^{n_1, n_2}$ constraint.
\end{theorem}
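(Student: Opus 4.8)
The plan is to realize the divide-and-conquer identity \eqref{eq:framework} directly on the grid, reusing the two primitives of the path construction---the divide unitary $\D^{n,\lfloor n/2\rfloor}_k$ (Lemma \ref{lem:WDB_path}) and the base-case Dicke unitary on $O(k)$ qubits (Lemma \ref{lem:Dicke_unitary_path})---but inserting a \emph{routing} step before each divide unitary so that its $2k$ operand qubits $S_1\cup S_2$ occupy a contiguous snake segment of the grid, as required to invoke Lemma \ref{lem:WDB_path}. The one structural simplification I would exploit throughout is that $\ket{D^m_\ell}$ is invariant under permutations of its $m$ qubits; hence any routing permutation applied \emph{inside} a sub-grid never has to be inverted, and I only pay for gathering the unary-encoded counts into position, not for restoring positions.

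For the regime $k\ge n_2/n_1$ I would recurse by always cutting the \emph{longer} side of the current sub-grid in half, so that at level $j$ all the congruent, disjoint sub-grids of shape $a\times b$ (with $a\le b$) are handled in parallel. Before the divide unitary at such a node I would apply a permutation, realized in depth $O(b)$ by Lemma \ref{lem:permutation}, that gathers the counts into a $\lceil 2k/a\rceil\times a$ block centered on the cut and snaked into a path of $2k$ cells (well-defined since $2k\le ab$ above the base case). The divide unitary then costs $O(k)$ by Lemma \ref{lem:WDB_path}, and recursion halts once a sub-grid holds $O(k)$ qubits, where Lemma \ref{lem:Dicke_unitary_path} finishes in depth $O(k)$ along a Hamiltonian path of that sub-grid. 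The divide unitaries contribute $O(k)$ per level over the $O(\log(n/k))$ levels, i.e. $O(k\log(n/k))$, while routing contributes $\sum_j O(b_j)$; since the longer side halves from $n_2$ down to $O(\sqrt k)$ (passing through $n_1$ once the sub-grids become balanced), this geometric series telescopes to $O(n_2+n_1)=O(n_2)$, giving the claimed $O(k\log(n/k)+n_2)$.

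For the regime $k< n_2/n_1$ the term $k\log(n/k)$ can exceed $n_2$ on very thin grids, so I would instead decompose \emph{by rows}. Concretely I would (i) create the superposition of per-row occupation numbers $(k_1,\dots,k_{n_1})$ with $\sum_i k_i=k$ and amplitudes $\sqrt{\prod_i\binom{n_2}{k_i}/\binom{n}{k}}$, each $k_i$ held in unary in the first $k$ cells of row $i$, and then (ii) apply the path Dicke unitary $\U^{n_2}_k$ of Lemma \ref{lem:Dicke_unitary_path} independently along every row, sending $\ket{0^{n_2-k_i}1^{k_i}}\mapsto\ket{D^{n_2}_{k_i}}$ in parallel in depth $O(n_2)$. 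Step (i) is the same framework recursion, now with horizontal cuts of the row index: I bisect the set of rows $\log(n_1)$ times, each divide unitary acting on the $2\times k$ contiguous block formed by the two boundary rows' length-$k$ prefixes (contiguous since $k<n_2/n_1\le n_2$), with only vertical movement within columns to gather counts to the boundary. Its routing telescopes to $O(n_1)$ and its divide unitaries cost $O(k\log n_1)$; using $k<n_2/n_1$ together with $\log n_1\le n_1$ gives $k\log n_1\le (n_2/n_1)\log n_1\le n_2$, so step (i), and hence the whole construction, runs in depth $O(n_2)$.

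The main obstacle in both regimes is the routing analysis: I must show that the $2k$ operands can always be assembled into a contiguous grid path at a cost no larger than the current sub-grid's longer side---so that the per-level routing costs form a telescoping geometric series rather than accumulating to $\Theta(n_2\log(n/k))$---while correctly maintaining the invariant that each active sub-region carries its count in the agreed unary layout across the recursion. The subtleties I would check carefully are the contiguity of the $2k$-band when $2k$ exceeds the short side of a sub-grid (forcing a multi-line snake), and the precise statement that permutation-symmetry of Dicke states licenses leaving all intra-sub-grid routings un-inverted; the amplitude bookkeeping in \eqref{eq:wdb}--\eqref{eq:Dicke_unitary} is identical to the path case and requires no change.
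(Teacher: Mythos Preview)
Your Case~1 argument is essentially the paper's: both perform a balanced bisection of the longer side with a routing step at each level, giving $O(k)$ divide cost per level over $O(\log(n/k))$ levels plus routing costs that telescope geometrically to $O(n_2)$. The paper organizes this as $r=\log\sqrt{n/k}$ vertical cuts followed by $r$ horizontal cuts, while you phrase it as ``always cut the longer side''; the analysis is the same.

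For Case~2 your approach is correct but genuinely different. You bisect \emph{rows} in $\log n_1$ balanced levels (step~(i)) and then run the path Dicke unitary $\U^{n_2}_k$ along each row in parallel (step~(ii)); this needs vertical routing, which you correctly bound by $O(n_1+k\log n_1)\le O(n_2)$ via $k<n_2/n_1$ and $\log n_1\le n_1$. The paper instead partitions into $n_2/k$ column-blocks of width $k$ and uses an \emph{unbalanced} peel-off-one-at-a-time decomposition: the divide unitary at step $j$ acts on the two $1\times k$ cells $S_{1,j},S_{1,j+1}$, which are already adjacent in row~$1$, so \emph{no routing is needed at all} in the divide phase. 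The $n_2/k-1$ divides run sequentially in depth $O(n_2)$, and the path Dicke unitaries $\U^{n_1k}_k$ on the $n_1\times k$ column-blocks (each depth $O(n_1k)\le O(n_2)$ by Lemma~\ref{lem:Dicke_unitary_path} along a Hamiltonian snake) are disjoint and hence run in parallel. The paper's version is cleaner because it exploits the natural adjacency of consecutive row-$1$ segments and sidesteps the routing bookkeeping you flag as the ``main obstacle''; your version, on the other hand, keeps the balanced-recursion structure uniform with Case~1 and makes the two regimes look like one scheme.
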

\begin{proof}
    We consider two cases: $k\ge n_2/n_1$ and $k< n_2/n_1$. \\
    { \bf Case 1: $k\ge n_2/n_1$.} First, we show a partition of qubits on grid $\Grid_n^{n_1,n_2}$, see Fig. \ref{fig:grid_partition}. The grid $\Grid_n^{n_1,n_2}$ is partitioned into $n/k$ small grids of size $\sqrt{\frac{n_1k}{n_2}} \times \sqrt{\frac{n_2k}{n_1}}=k$, denoted by $\Grid_k^{\sqrt{\frac{n_1k}{n_2}},\sqrt{\frac{n_2k}{n_1}}}$. Let $r:=\log(\sqrt{n/k})$. In each column and row, there are $2^r~(\sqrt{n/k})$ small grids respectively. The qubit set of the smallest grid in the $i$-th row and $j$-th column is denoted by $S_{i,j}$ for any $i,j\in[2^r]$. Note that $\sqrt{\frac{n_1k}{n_2}},~ \sqrt{\frac{n_2k}{n_1}},~\sqrt{n/k},~n/k$ and $r$ are usually not integers. In practice, we can choose their ceiling values as the actual values, which do not change the order of the final circuit depth. Hence, we assume here that they are all integers for simplicity in this proof.

    Second, we show how to implement the Dicke state unitary under the $\Grid_n^{n_1,n_2}$ constraint. Recall that a divide unitary $\D^{n,m}_k$ can be implemented by a quantum circuit of depth $O(k)$ on $2k$ adjacent qubits constrained by $\Path_{2k}$ if $n,m\ge k$ according to Lemma \ref{lem:Dicke_unitary_path}. Let $P(S,S')$ be a permutation unitary that exchanges the state in $S$ and $S'$ of size $k$. We partition the grid $\Grid_n^{n_1,n_2}$ into left and right grids. The qubit sets of left and right grids are $S_L:=\bigcup_{i\in [2^r],j\in [2^{r-1}]}S_{i,j}$ and $S_R:=\bigcup_{i\in [2^r],j\in [2^r]-[2^{r-1}]}S_{i,j}$.
    Now we show the circuit implementation of a Dicke state unitary $\U^n_k(S_L\cup S_R)$. For any $\ell\in[k]_0$,
     \begin{align*}
          & ~\bigotimes_{(i,j)\in[2^{r}]^2\atop (i,j)\notin\{(1,1),(1.2)\}}\ket{0^k}_{S_{i,j}}\ket{0^{k}}_{S_{1,1}}\ket{0^{k-\ell}1^\ell}_{S_{1,2}}\\
       \xrightarrow{\D^{n,n/2}_k(S_{1,1},~S_{1,2})}&~ \bigotimes_{(i,j)\in[2^{r}]^2\atop (i,j)\notin\{(1,1),(1.2)\}}\ket{0^k}_{S_{i,j}}\frac{1}{\sqrt{\binom{n}{\ell}}}\sum_{i=0}^{\ell} \sqrt{\binom{n/2}{i}\binom{n/2}{\ell-i}}\ket{0^{k-i}1^i}_{S_{1,1}}\ket{0^{k+i-\ell}1^{\ell-i}}_{S_{1,2}} & \text{(by Eq. \eqref{eq:wdb})}\\
        \xrightarrow{P(S_{1,2}, ~S_{1,2^{r-1}+1})}&~ \bigotimes_{(i,j)\in[2^{r}]^2\atop (i,j)\notin\{(1,1),(1.2^{r-1}+1)\}}\ket{0^k}_{S_{i,j}}\frac{1}{\sqrt{\binom{n}{\ell}}}\sum_{i=0}^{\ell} \sqrt{\binom{n/2}{i}\binom{n/2}{\ell-i}}\ket{0^{k-i}1^i}_{S_{1,1}}\ket{0^{k+i-\ell}1^{\ell-i}}_{S_{1,2^{r-1}+1}} & \text{(by Eq. \eqref{eq:permutation})}\\
       \xrightarrow{\U^{n/2}_{k}(S_L)\otimes \U^{n/2}_{k}(S_R)} &~  \frac{1}{\sqrt{\binom{n}{\ell}}}\sum_{i=0}^{\ell} \sqrt{\binom{n/2}{i}\binom{n/2}{\ell-i}}\ket{D^{n/2}_i}_{S_L}\ket{D^{n/2}_{\ell-i}}_{S_R} & \text{(by Eq. \eqref{eq:Dicke_unitary})}\\
       =&~ \frac{1}{\sqrt{\binom{n}{\ell}}} \sum_{x: x\in \{0,1\}^n,\atop |x|=\ell}\ket{x}_{S_L\cup S_R}=\ket{D^n_\ell}_{S_L\cup S_R} & \text{(by Eq. \eqref{eq:Dicke_state})}\\
       =&~ \U^n_k(S_L\cup S_R)\ket{0^{n-\ell}1^\ell}_{S_L\cup S_R}. &\text{(by Eq. \eqref{eq:Dicke_unitary})}
     \end{align*}
In $\Grid^{n_1,n_2}_n$, sets $S_{1,1}$ and $S_{1,2}$ are two adjacent small grids, and there is a path including all qubits of them. Based on Lemma \ref{lem:WDB_path}, we first apply $\D^{n,n/2}_k(S_{1,1},S_{1,2})$ which can be implemented by a circuit of depth $O(k)$ under $\Path_{2k}$ constraint. Second, we exchange the qubit in $S_{1,2}$ and $S_{1,2^{r-1}+1}$ by a circuit of depth $O(n_2/2)$ under the grid constraint according to Lemma \ref{lem:permutation}. Note that $S_{1,1}$ and $S_{1,2^{r-1}+1}$ are located at the top left corners of two grids $S_L$ and $S_R$ respectively. Third, we can apply $\U^{n/2}_k(S_L)$ and $\U^{n/2}_k(S_R)$ simultaneously since they act on distinct grids. Similar to the discussion in the proof Lemma 1, $\U^{n/2}_k(S_L)$ and $\U^{n/2}_k(S_R)$ can be implemented recursively.
For any $i\in[r]$, we divide the grid of size $n_1\times (n_2/2^{i-1}) $ into two equal grids of size $n_1\times (n_2/2^{i}) $ along the vertical direction in the $i$-th recursive step. In the $i$-th step, first we apply a $O(k)$-depth divide unitary on the first two smallest grids in the first row of the left grid according to Lemma \ref{lem:WDB_path}. Second, we permute the second qubit set to the top left qubit set of the right grid by a circuit of depth $O(n_2/2^i)$ by Lemma \ref{lem:permutation}. After all $r$ recursive steps, we only need to simultaneously implement a sequence of $(\sqrt{nk},k)$-Dicke state unitaries acting on grids of size $n_1\times \sqrt{n_2k/n_1}$. Furthermore, these $(\sqrt{nk},k)$-Dicke state unitaries can be implemented recursively in the same way by dividing the grids into two grids, one on top and one at the bottom. Then after $r$ recursive steps,  $(\sqrt{nk},k)$-Dicke state unitaries are decomposed as some divided operators and $k$-qubit Dicke state unitaries.
Let $T(n)$ denote the circuit depth of an $n$-qubit Dicke state unitary. According to Lemma \ref{lem:Dicke_unitary_path}, $T(k)=O(k)$. Then we have 
\begin{align*}
    T(n)= &T(n/2)+O(k)+O(n_2)/2\\
        = & T(n/2^{r})+r\cdot O(k)+\sum_{i=1}^{r} O(n_2/2^i)\\
        =&T(n/2^{r})+O(k\log(n/k))+O(n_2)\\
        = &T(n/2^{r+1})+O(k)+O(n_1/2)+O(k\log(n/k))+O(n_2)\\
        = & T(n/2^{2r})+r\cdot O(k)+\sum_{i=1}^r O(n_1/2^i)+O(k\log(n/k))+O(n_2)\\
        = &T(k)+O(n_1)+O(n_2)+O(k\log(n/k))\\
        =&O(k\log(n/k)+n_2).    
\end{align*}

      {\noindent\bf Case 2: $k< n_2/n_1$.} We partition the grid $\Grid_n^{n_1,n_2}$ into $n/k$ smallest grids $\Grid_k^{1,k}$. For simplicity, assume that $n/k$ and $n_2/k$ are integers. In each row and column, there are $n_2/k$ and $n_1$ smallest grids. The small grid in the $i$-th row and $j$-th column is denoted by $S_{i,j}$ for any $i\in[n_1]$ and $j\in [n_2/k]$. We define $S_j:=\bigcup_{i=1}^{n_1}S_{i,j}$ consisting of all smallest grids in the $j$-th column. Now we show the circuit implementation of $\U^n_k(\bigcup_{j=1}^{n_2/k}S_j)$. For any $\ell\in[k]_0$,
    \begin{align}
       & \ket{0^k}_{S_{1,1}}\ket{0^{k-\ell}1^\ell}_{S_{1,2}} \bigotimes_{(i,j)\in[n_1]\times [n_2/k]\atop (i,j)\neq (1,1),(1,2)}\ket{0^k}_{S_{i,j}}\nonumber\\
      \xrightarrow{\D^{n,n_1k}_k(S_{1,1},S_{1,2})} & \frac{1}{\sqrt{\binom{n}{\ell}}}\sum_{\tau=0}^{\ell}\sqrt{\binom{n_1k}{\tau}\binom{n-n_1k}{\ell-\tau}}\ket{0^{k-\tau}1^\tau}_{S_{1,1}}\ket{0^{n-n_1k+\tau-\ell}1^{\ell-\tau}}_{S_{1,2}}  \bigotimes_{(i,j)\in[n_1]\times [n_2/k]\atop (i,j)\neq (1,1),(1,2)}\ket{0^k}_{S_{i,j}} & \text{(by Eq. \eqref{eq:wdb})}\\
      = & \frac{1}{\sqrt{\binom{n}{\ell}}}\sum_{\tau=0}^{\ell}\sqrt{\binom{n_1k}{\tau}\binom{n-n_1k}{\ell-\tau}}\ket{0^{n_1k-\tau}1^\tau}_{S_{1}}\ket{0^{n-n_1k+\tau-\ell}1^{\ell-\tau}}_{\bigcup_{j=2}^{n_2/k}S_{j}}\\
      \xrightarrow{\U^{n_1k}_k(S_1)\otimes \U^{n-n_1k}_k(\bigcup_{j=2}^{n_2/k}S_{j})} & \frac{1}{\sqrt{\binom{n}{\ell}}}\sum_{\tau=0}^{\ell}\sqrt{\binom{n_1k}{\tau}\binom{n-n_1k}{\ell-\tau}}\ket{D^{n_1k}_\tau}_{S_{1}}\ket{D^{n-n_1k}_{\ell-\tau}}_{\bigcup_{j=2}^{n_2/k}S_{j}} &\text{(by Eq. \eqref{eq:Dicke_unitary})}\\
      =& \ket{D^n_\ell}_{\bigcup_{j=1}^{n_2/k}S_j} &\text{(by Eq. \eqref{eq:Dicke_state})}
    \end{align}
    As discussed above, we apply a $\D^{n,n_1k}_k(S_{1,1},S_{1,2})$ on qubit sets $S_{1,1}$ and $S_{1,2}$ and then apply Dicke state unitaries $\U^{n_1k}_k(S_1)$ and $\U^{n-n_1k}_k(\bigcup_{j=2}^{n_2/k}S_{j})$ on qubit sets $S_1$ and $\bigcup_{j=2}^{n_2/k}S_{j}$ respectively. Furthermore, the Dicke state unitary $\U^{n-n_1k}_k(\bigcup_{j=2}^{n_2/k}S_{j})$ can be implemented in the same way. 
    First, we apply a $\D^{n-n_1k,n_1k}(S_{1,2},S_{1,3})$ on qubit sets $S_{1,2}$ and $S_{1,3}$ and then apply Dicke state unitaries $\U^{n_1k}_k(S_2)$ and $\U^{n-2n_1k}_k(\bigcup_{j=3}^{n_2/k}S_{j})$ simultaneously, and so on. {Let $T(n)$ denote the circuit depth for $\U^{n}_k$. According to Lemmas \ref{lem:Dicke_unitary_path} and \ref{lem:WDB_path}, the circuit depth of $\D^{n,n_1k}_k(S_{1,1},S_{1,2})$ and $\U^{n_1k}_k$ are $O(k)$ and $O(n_1k)$ under the path constraints. Then we have }
    \begin{align*}
        T(n) = O(k)+T(n-n_1k)=2O(k)+T(n-2n_1k)=j\cdot O(k)+T(n-jn_1k)=(n_2/k-1)\cdot O(k) + T(n_1k)=O(n_2),
    \end{align*}
    where $T(n_1k)=O(n_1k)\le O(n_2)$ based on Lemma \ref{lem:Dicke_unitary_path}.
\end{proof}

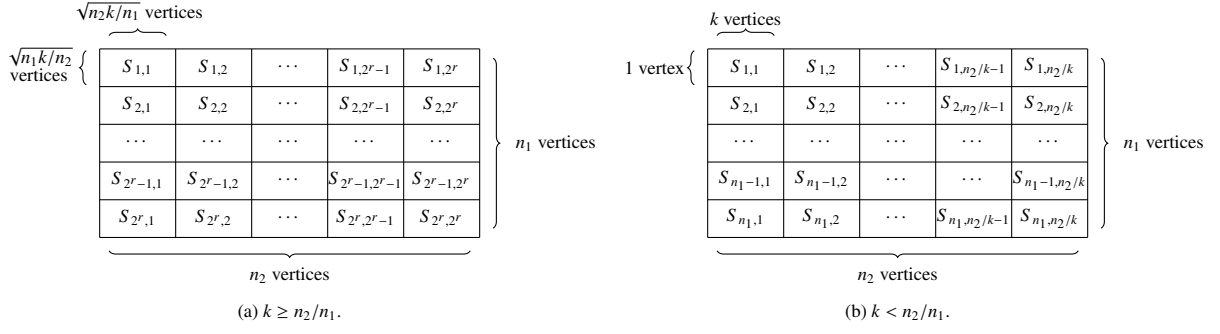
\begin{figure}
    \centering
\begin{tikzpicture}
     \centering
     \draw (0,0) -- (5,0) (0,0.5) -- (5,0.5) (0,1) -- (5,1) (0,1.5) -- (5,1.5) (0,2) -- (5,2) (0,2.5) -- (5,2.5);
     \draw (0,0) -- (0,2.5) (1,0) -- (1,2.5) (2,0) -- (2,2.5) (3,0) -- (3,2.5) (4,0) -- (4,2.5) (5,0) -- (5,2.5);

     \draw (0.5,0.25) node{\scriptsize $S_{2^r,1}$} (1.5,0.25) node{\scriptsize $S_{2^r,2}$} (2.5,0.25) node{\scriptsize $\cdots$} (3.5,0.25) node{\scriptsize $S_{2^r,2^{r}-1}$} (4.5,0.25) node{\scriptsize $S_{2^r,2^{r}}$} ;

     \draw (0.5,0.75) node{\scriptsize $S_{2^r-1,1}$} (1.5,0.75) node{\scriptsize $S_{2^r-1,2}$} (2.5,0.75) node{\scriptsize $\cdots$} (3.5,0.75) node{\scriptsize $S_{2^r-1,2^r-1}$} (4.5,0.75) node{\scriptsize $S_{2^r-1,2^r}$};

     \draw (0.5,1.25) node{\scriptsize $\cdots$} (1.5,1.25) node{\scriptsize $\cdots$} (2.5,1.25) node{\scriptsize $\cdots$} (3.5,1.25) node{\scriptsize $\cdots$} (4.5,1.25) node{\scriptsize $\cdots$};

     \draw (0.5,1.75) node{\scriptsize $S_{2,1}$} (1.5,1.75) node{\scriptsize $S_{2,2}$} (2.5,1.75) node{\scriptsize $\cdots$} (3.5,1.75) node{\scriptsize $S_{2,2^{r}-1}$} (4.5,1.75) node{\scriptsize $S_{2,2^{r}}$};

     \draw (0.5,2.25) node{\scriptsize $S_{1,1}$} (1.5,2.25) node{\scriptsize $S_{1,2}$} (2.5,2.25) node{\scriptsize $\cdots$} (3.5,2.25) node{\scriptsize $S_{1,2^{r}-1}$} (4.5,2.25) node{\scriptsize $S_{1,2^{r}}$};

    \node (a) at (5,0) {};
     \node (b) at (5,2.5) {};
     \draw[decorate,decoration={brace,raise=5pt}] (b) -- (a);

     \node (c) at (0,0) {};
     \node (d) at (5,0) {};
     \draw[decorate,decoration={brace,raise=5pt}] (d) -- (c);

      \node (e) at (0,1.9) {};
      \node (f) at (0,2.6) {};
      \draw[decorate,decoration={brace,raise=5pt}] (e) -- (f);

      \node (g) at (0,2.5) {};
      \node (h) at (1,2.5) {};
      \draw[decorate,decoration={brace,raise=5pt}] (g) -- (h);

     \draw (2.5,-0.5) node{\scriptsize $n_2$~vertices} (6,1.25) node{\scriptsize $n_1$~vertices}(-0.8,2.4) node{\scriptsize $\sqrt{n_1k/n_2}$ } (-0.8,2.15) node{\scriptsize  vertices}   (0.5,3) node{\scriptsize $\sqrt{n_2k/n_1}$ vertices} (2.5,-1) node{\scriptsize (a) $k\ge n_2/n_1$.};

     \draw (8,0) -- (13,0) (8,0.5) -- (13,0.5) (8,1) -- (13,1) (8,1.5) -- (13,1.5) (8,2) -- (13,2) (8,2.5) -- (13,2.5);
     \draw (8,0) -- (8,2.5) (9,0) -- (9,2.5) (10,0) -- (10,2.5) (11,0) -- (11,2.5) (12,0) -- (12,2.5) (13,0) -- (13,2.5);

     \draw (8.5,0.25) node{\scriptsize $S_{n_1,1}$} (9.5,0.25) node{\scriptsize $S_{n_1,2}$} (10.5,0.25) node{\scriptsize $\cdots$} (11.5,0.25) node{\scriptsize $S_{n_1,n_2/k-1}$} (12.5,0.25) node{\scriptsize $S_{n_1,n_2/k}$} ;

     \draw (8.5,0.75) node{\scriptsize $S_{n_1-1,1}$} (9.5,0.75) node{\scriptsize $S_{n_1-1,2}$} (10.5,0.75) node{\scriptsize $\cdots$} (11.5,0.75) node{\scriptsize $\cdots$} (12.5,0.75) node{\scriptsize $S_{n_1-1,n_2/k}$};

     \draw (8.5,1.25) node{\scriptsize $\cdots$} (9.5,1.25) node{\scriptsize $\cdots$} (10.5,1.25) node{\scriptsize $\cdots$} (11.5,1.25) node{\scriptsize $\cdots$} (12.5,1.25) node{\scriptsize $\cdots$};

     \draw (8.5,1.75) node{\scriptsize $S_{2,1}$} (9.5,1.75) node{\scriptsize $S_{2,2}$} (10.5,1.75) node{\scriptsize $\cdots$} (11.5,1.75) node{\scriptsize $S_{2,n_2/k-1}$} (12.5,1.75) node{\scriptsize $S_{2,n_2/k}$};

     \draw (8.5,2.25) node{\scriptsize $S_{1,1}$} (9.5,2.25) node{\scriptsize $S_{1,2}$} (10.5,2.25) node{\scriptsize $\cdots$} (11.5,2.25) node{\scriptsize $S_{1,n_2/k-1}$} (12.5,2.25) node{\scriptsize $S_{1,n_2/k}$};

    \node (a) at (13,0) {};
     \node (b) at (13,2.5) {};
     \draw[decorate,decoration={brace,raise=5pt}] (b) -- (a);

     \node (c) at (8,0) {};
     \node (d) at (13,0) {};
     \draw[decorate,decoration={brace,raise=5pt}] (d) -- (c);

      \node (e) at (8,1.9) {};
      \node (f) at (8,2.6) {};
      \draw[decorate,decoration={brace,raise=5pt}] (e) -- (f);

      \node (g) at (8,2.5) {};
      \node (h) at (9,2.5) {};
      \draw[decorate,decoration={brace,raise=5pt}] (g) -- (h);

     \draw (10.5,-0.5) node{\scriptsize $n_2$~vertices} (14,1.25) node{\scriptsize $n_1$~vertices}(7.3,2.25) node{\scriptsize $1$ vertex} (8.5,2.9) node{\scriptsize $k$ vertices} (10.5,-1) node{\scriptsize (b) $k<n_2/n_1$.};
\end{tikzpicture}
    \caption{A partition of $\Grid_n^{n_1,n_2}$, each $S_{i,j}$ contains $k$ vertices. (a) If $k\ge n_2/n_1$, each $S_{i,j}$ is a grid $\Grid^{\sqrt{n_1k/n_2},\sqrt{n_2k/n_1}}_k$. (b)  If $k< n_2/n_1$, each $S_{i,j}$ is a grid $\Grid^{1,k}_k$.}
    \label{fig:grid_partition}
\end{figure}

Theorem \ref{thm:Dicke_unitary} immediately implies the following result of the Dicke state preparation.
\begin{corollary}\label{coro:Dicke_grid_improve}
     The $(n,k)$-Dicke state $\ket{D^n_k}$ can be prepared by a standard quantum circuit of depth $O(k\log(n/k)+n_2)$ if $k\ge n_2/n_1$, and of depth $O(n_2)$ if $k\le n_2/n_1$ under $\Grid_n^{n_1, n_2}$ constraint.
\end{corollary}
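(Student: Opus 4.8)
The plan is to reduce the corollary directly to Theorem~\ref{thm:Dicke_unitary} via the defining property of the Dicke state unitary. By Eq.~\eqref{eq:Dicke_unitary} specialized to $\ell=k$, we have $\U^n_k\ket{0^{n-k}1^k}=\ket{D^n_k}$. Hence preparing $\ket{D^n_k}$ starting from $\ket{0^n}$ decomposes into two stages: (i) producing the fixed computational basis state $\ket{0^{n-k}1^k}$, and (ii) applying the circuit of Theorem~\ref{thm:Dicke_unitary} that realizes $\U^n_k$.

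For stage (i), I would consult the qubit layout used in the proof of Theorem~\ref{thm:Dicke_unitary}. There the initial state feeding the first divide unitary is $\ket{0^{k}}_{S_{1,1}}\ket{0^{k-\ell}1^\ell}_{S_{1,2}}$ with all other blocks in $\ket{0^k}$; taking $\ell=k$ means placing the $k$ ones exactly on the qubits of the block $S_{1,2}$ (see Fig.~\ref{fig:grid_partition}) and leaving every other qubit in $\ket{0}$. Since this amounts to flipping a fixed set of $k$ qubits, it is achieved by applying single-qubit $X$ gates to those qubits in parallel. This costs depth $1$, uses no two-qubit gate, and therefore trivially satisfies the $\Grid_n^{n_1,n_2}$ constraint while adding no asymptotic overhead.

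Stage (ii) then invokes Theorem~\ref{thm:Dicke_unitary} verbatim, yielding depth $O(k\log(n/k)+n_2)$ when $k\ge n_2/n_1$ and depth $O(n_2)$ when $k<n_2/n_1$ under the $\Grid_n^{n_1,n_2}$ constraint. Composing the two stages, the $O(1)$ cost of the initial $X$ layer is dominated in both regimes, so the total depth matches the stated bounds.

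There is essentially no obstacle in this argument; the only point worth checking is that the $k$ ones are initialized on precisely the qubits where the first recursive divide unitary expects them, so that no extra permutation (and hence no added depth) is needed before $\U^n_k$ begins to act. Given the explicit partition in the proof of Theorem~\ref{thm:Dicke_unitary}, this placement is immediate, and the corollary follows.
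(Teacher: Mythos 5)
Your proposal is correct and matches the paper's (implicit) argument: the paper simply states that Theorem~\ref{thm:Dicke_unitary} "immediately implies" the corollary, and your elaboration---initialize $\ket{0^{n-k}1^k}$ on the qubits of $S_{1,2}$ with a depth-$1$ layer of $X$ gates and then apply the Dicke state unitary circuit---is exactly the intended reduction. The placement check you flag is the right detail to verify, and it holds by the partition used in the theorem's proof.
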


\subsection{Low-level symmetric states}
Since the Dicke states $\{\ket{D^n_\ell}:\forall \ell\in[n]_0\}$ form an orthonormal basis for the symmetric subspace, any symmetric quantum state can be expressed as $\sum_{\ell=0}^n \alpha_\ell \ket{D^n_\ell}$ for some coefficients $\alpha_\ell\in \mathbb{C}$ with $\sum_{\ell=0}^n |\alpha_\ell|^2=1$. The circuits constructed in Section \ref{sec:all-to-all} and \ref{sec:grid} can be used to prepare for any symmetric state composed of low-level basis. More precisely, an $n$-qubit symmetric state $\ket{\Psi_k^n}$ is at level at most $k$ if 
\begin{equation}
\ket{\Psi_k^n}=\sum_{\ell=0}^{k}\alpha_\ell \ket{D^{n}_\ell},
\end{equation}
where $\alpha_\ell\in\mathbb{C}$ for any $\ell\in [k]_0$ and $\sum_{\ell=0}^k|\alpha_\ell|^2=1$.

\begin{lemma}[\cite{bartschi2019deterministic}]\label{lem:unary-state}
    Any $k$-qubit quantum state $\sum_{\ell=0}^k \alpha_\ell \ket{0^{k-\ell}1^\ell}$ can be prepared by a quantum circuit of depth $O(k)$ under the $\Path_k$ constraint, using no ancillary qubits.
\end{lemma}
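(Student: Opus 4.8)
The plan is to exploit the ``domain-wall'' structure of the target basis: in $\ket{0^{k-\ell}1^\ell}$ the $j$-th qubit from the right equals $1$ exactly when $\ell\ge j$, so the $k+1$ basis states are linearly ordered by the position of the unique $0\!\to\!1$ boundary. I would label the qubits $q_1,\dots,q_k$ from right to left along the path and grow the superposition by a left-to-right cascade of controlled single-qubit gates: the operation at step $j$ acts on $q_j$ and (for $j\ge 2$) is controlled by $q_{j-1}$. Because consecutive steps touch only the adjacent pair $(q_{j-1},q_j)$, every gate respects the $\Path_k$ constraint and no ancilla is needed.

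To fix the gates, set the tail masses $A_\ell:=\sum_{j=\ell}^{k}|\alpha_j|^2$, so that $A_0=1$. Starting from $\ket{0^k}$, step $1$ is a single-qubit gate on $q_1$ sending $\ket{0}\mapsto \alpha_0\ket{0}+\sqrt{A_1}\,\ket{1}$, and for $2\le j\le k-1$ step $j$ is a gate on $q_j$ that, conditioned on $q_{j-1}=1$, sends $\ket{0}\mapsto \tfrac{\alpha_{j-1}}{\sqrt{A_{j-1}}}\ket{0}+\sqrt{\tfrac{A_j}{A_{j-1}}}\,\ket{1}$. The invariant, proved by induction on $j$, is that after step $j$ the state equals $\sum_{\ell=0}^{j-1}\alpha_\ell\ket{0^{k-\ell}1^\ell}+\sqrt{A_j}\,\ket{0^{k-j}1^{j}}$; the inductive step is immediate, since in the frontier term $\sqrt{A_{j-1}}\ket{0^{k-(j-1)}1^{j-1}}$ the control qubit $q_{j-1}$ is $1$ while it is $0$ in every finished term $\ket{0^{k-\ell}1^\ell}$ with $\ell\le j-2$, so the gate splits only the frontier amplitude into the finished term $\alpha_{j-1}$ and the new frontier $\sqrt{A_j}$.

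The resource count is then straightforward: each step is a single-qubit gate or a controlled single-qubit gate on an adjacent pair, which compiles into $O(1)$ one- and two-qubit gates and thus has depth $O(1)$; the steps are sequential (steps $j$ and $j{+}1$ share $q_j$), giving total depth $O(k)$ under $\Path_k$ with no ancilla. The one point requiring care is the phases. Since each intermediate frontier amplitude $\sqrt{A_j}$ is kept real and nonnegative, every $\alpha_\ell$ with $\ell<k$ emerges with its correct phase through the complex $\ket{0}$-branch coefficient $\alpha_\ell/\sqrt{A_\ell}$, whereas the final coefficient $\alpha_k$ must be injected at the last step by setting its $\ket{1}$-branch to $\alpha_k/\sqrt{A_{k-1}}$ (the first column $\big(\alpha_{k-1}/\sqrt{A_{k-1}},\,\alpha_k/\sqrt{A_{k-1}}\big)$ is a unit vector and hence extends to a valid single-qubit unitary). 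The only other subtlety, hardly an obstacle, is the degenerate case $A_{j-1}=0$, where all remaining $\alpha$'s vanish and step $j$ may be taken to be the identity.
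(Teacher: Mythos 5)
Your construction is correct: the left-to-right cascade of adjacent controlled single-qubit gates with tail-mass amplitudes $\sqrt{A_j}$ is essentially the standard construction from \cite{bartschi2019deterministic}, which the paper cites for this lemma without reproving it. The induction on the frontier term, the normalization checks, and your handling of the final phase injection and the degenerate case $A_{j-1}=0$ are all sound.
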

Based on Theorems \ref{thm:Dicke_unitary_noconstraint_improve}, \ref{thm:Dicke_unitary} and Lemma \ref{lem:unary-state}, the circuit depth of the low-level symmetric state is shown as follows.
\begin{corollary}
    For any $k\in [\lfloor n/2 \rfloor]$, any $n$-qubit symmetric quantum state $ \ket{\Psi_k^n}$ at level at most $k$ can be prepared in depth $O(\log(k)\log(n/k)+k)$ for all-to-all qubit connectivity; under the $\Grid^{n_1,n_2}_n$ connectivity constraint $\ket{\Psi_k^n}$ can be prepared in depth $O(k\log(n/k)+n_2)$ if $k\ge n_2/n_1$ and $O(n_2)$ if $k<n_2/n_1$.
\end{corollary}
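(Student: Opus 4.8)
The plan is to exploit the fact that the Dicke state unitary $\U^n_k$ is linear and, by Eq.~\eqref{eq:Dicke_unitary}, maps the unary basis state $\ket{0^{n-\ell}1^\ell}$ to $\ket{D^n_\ell}$ for \emph{every} $\ell\in[k]_0$ simultaneously. Consequently, if we first prepare the unary superposition $\sum_{\ell=0}^k \alpha_\ell\ket{0^{n-\ell}1^\ell}$ and then apply $\U^n_k$, linearity yields
\begin{align*}
\U^n_k\sum_{\ell=0}^k \alpha_\ell\ket{0^{n-\ell}1^\ell}&=\sum_{\ell=0}^k \alpha_\ell\,\U^n_k\ket{0^{n-\ell}1^\ell}\\
&=\sum_{\ell=0}^k \alpha_\ell\ket{D^n_\ell}=\ket{\Psi_k^n}.
\end{align*}
Thus the task reduces to two subroutines: preparing the unary superposition, and applying $\U^n_k$, whose costs we already control via Theorems~\ref{thm:Dicke_unitary_noconstraint_improve} and~\ref{thm:Dicke_unitary}.

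For the first subroutine, I would observe that $\sum_{\ell=0}^k \alpha_\ell\ket{0^{n-\ell}1^\ell}=\ket{0^{n-k}}\otimes\bigl(\sum_{\ell=0}^k\alpha_\ell\ket{0^{k-\ell}1^\ell}\bigr)$, so it suffices to prepare the $k$-qubit state $\sum_{\ell=0}^k\alpha_\ell\ket{0^{k-\ell}1^\ell}$ on the designated $k$ input qubits while leaving the remaining $n-k$ qubits in $\ket{0}$. By Lemma~\ref{lem:unary-state}, this costs depth $O(k)$ under a $\Path_k$ constraint. For all-to-all connectivity a path on $k$ qubits is trivially available. For the grid case one places these $k$ qubits on the small contiguous grid that serves as the input block in the construction of Theorem~\ref{thm:Dicke_unitary} (namely $S_{1,2}$); since every rectangular grid admits a Hamiltonian path, the $\Path_k$ requirement of Lemma~\ref{lem:unary-state} is met inside that block.

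It then remains to add the two depths in each regime. For all-to-all connectivity we obtain $O(k)+O(\log(k)\log(n/k)+k)=O(\log(k)\log(n/k)+k)$; for the grid with $k\ge n_2/n_1$ we obtain $O(k)+O(k\log(n/k)+n_2)=O(k\log(n/k)+n_2)$; and for the grid with $k< n_2/n_1$ we obtain $O(k)+O(n_2)=O(n_2)$, where the additive $O(k)$ is absorbed because $k< n_2/n_1\le n_2$. These match the claimed bounds.

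The construction is essentially an assembly of two established primitives, so I do not expect a substantial obstacle. The only point demanding care is the grid case: one must confirm that the unary-superposition preparation, run along a Hamiltonian path \emph{within} the input block $S_{1,2}$, is compatible with the qubit layout assumed by the $\U^n_k$ circuit of Theorem~\ref{thm:Dicke_unitary}. Once the input block is fixed to coincide with that layout, the two subroutines compose directly with no extra routing and hence no extra depth.
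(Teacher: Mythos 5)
Your proposal is correct and follows essentially the same route as the paper: prepare the $k$-qubit unary superposition $\sum_{\ell=0}^k\alpha_\ell\ket{0^{k-\ell}1^\ell}$ via Lemma~\ref{lem:unary-state} in depth $O(k)$, then apply the Dicke state unitary $\U^n_k$ and invoke linearity together with Eq.~\eqref{eq:Dicke_unitary}, adding the depths from Theorems~\ref{thm:Dicke_unitary_noconstraint_improve} and~\ref{thm:Dicke_unitary}. Your extra remark about aligning the unary-state preparation with the input block $S_{1,2}$ of the grid construction is a detail the paper leaves implicit, but it does not change the argument.
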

\begin{proof}
    The state $ \ket{\Psi_k^n}$ can be prepared in two steps. First, we prepare a $k$-qubit quantum state $\ket{\phi}=\sum_{\ell=0}^k \alpha_\ell \ket{0^{k-\ell}1^\ell}$, which can be achieved by a circuit of depth $O(k)$ under the $\Path_k$  constraint based on Lemma \ref{lem:unary-state}. Second, by applying Dicke state unitary $\U^n_k$ to $\ket{0^{n-k}}\ket{\phi}$, we obtain the target state $\ket{\Psi_k^n}$. According to Theorems \ref{thm:Dicke_unitary_noconstraint_improve} and \ref{thm:Dicke_unitary}, the circuit depth to prepare $\ket{\Psi_k^n}$ is $O(\log(k)\log(n/k)+k)$ for all-to-all qubit connectivity; $O(k\log(n/k)+n_2)$ if $k\ge n_2/n_1$ and $O(n_2)$ if $k<n_2/n_1$ under the $\Grid^{n_1,n_2}_n$ connectivity constraint.    
\end{proof}

\section{Depth lower bound for Dicke state preparation}
\label{sec:lowerbound}
In this section, we show the fundamental limits on the circuit depth for Dicke state preparation under various qubit connectivity constraint. Our analysis employs light cone arguments to quantify how quantum information propagates through constrained architectures. 

First, we review the definitions of directed graphs for quantum circuits and reachable subsets as introduced in \cite{yuan2024does}.

\begin{definition}[Directed graphs for quantum circuits]\label{def:circuit-digraph_main}
Let $C$ be a quantum circuit on $n$ qubits consisting of $d$ depth-1 layers, with odd layers consisting only of single-qubit gates, even layers consisting only of CNOT gates, and any two (non-identity) single-qubit gates acting on the same qubit must be separated by at least one CNOT gate acting on that qubit.  Let $L_1,L_2,\cdots,L_d$ denote the $d$ layers of this circuit, i.e., $C=L_dL_{d-1}\cdots L_1$.  Define the directed graph $H=(V_C,E_C)$ associated with $C$ as follows. 
\begin{enumerate}
    \item Vertex set $V_C$: For each $i\in[d+1]$, define $S_{i}:=\{v_i^j:j\in[n]\}$, where $v_i^j$ is a label corresponding to the $j$-th qubit {at time step $i$}. Let $V_C:=\bigcup_{i=1}^{d+1} S_i$. 
    \item Edge set $E_C$: For all $i\in [d]$:
    \begin{enumerate}
        \item If there is a single-qubit gate acting on the $j$-th qubit in layer $L_i$ then, for all $i \le i' \le d$ there exists a directed edge $(v_{i'+1}^j,v_{i'}^j)$.
        
        \item If there is a CNOT gate acting on qubits $j_1$ and $j_2$ in layer $L_i$, then there exist $4$ directed edges $(v_{i+1}^{j_1},v_i^{j_1})$, $(v_{i+1}^{j_2}, v_i^{j_1})$, $(v_{i+1}^{j_1},v_i^{j_2})$ and $(v_{i+1}^{j_2},v_i^{j_2})$.
    \end{enumerate}
  Note that edges are directed from $S_{i+1}$ to $S_i$. 
\end{enumerate}
\end{definition}

\begin{definition}[Reachable subsets of one qubit]\label{def:reachable} Let $H=(V_C,E_C)$ be the directed graph associated with quantum circuit $C$ of depth $d$, with vertex set $V_C = \bigcup_{i=1}^{d+1} S_i$.  For each $i\in[d+1]$ define the reachable subsets $S'_{i}$ of $H$ as follows:
\begin{itemize}
    \item $S'_{d+1} = \{v^j_{d+1}\}$ for some $j\in [n]$, i.e., the subset of a vertex in $S_{d+1}$ corresponding to the the $j$-th input qubit.
    \item For $i\in[d]$, $S'_{i}\subseteq S_i$ is the subset of vertices $v^j_i$ in $S_i$ which are (i) reachable by a directed path from vertices in $S'_{d+1}$, and (ii) there is a quantum gate acting on qubit $j$ in circuit layer $L_{i}$. 
\end{itemize}
\end{definition}

Second, we show the depth lower bound of the Dicke state.

\begin{theorem}\label{thm:lb_diam}
    Any standard quantum circuit generating the $(n,k)$-Dicke state $\ket{D^n_k}$ needs depth at least
    \begin{enumerate}
        \item $\Omega(\log(n))$ with all-to-all qubit connectivity;
        
        \item $\Omega(n_2)$ under $\Grid^{n_1,n_2}_n$ constraints;
        
        \item $\Omega(n)$ under $\Path_n$ constraints.
    \end{enumerate}
\end{theorem}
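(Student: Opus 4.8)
The plan is to prove all three bounds by a single light-cone (causal-cone) argument, combining one structural fact about the Dicke state with the circuit digraph formalism of Definitions~\ref{def:circuit-digraph_main} and~\ref{def:reachable}. For an output qubit $a$, let $B_a\subseteq[n]$ denote its \emph{backward light cone}---the set of input qubits that can causally affect $a$, as captured by the reachable subsets of Definition~\ref{def:reachable} rooted at $a$. Two ingredients drive the argument: (i) in $\ket{D^n_k}$ every pair of qubits is correlated, and (ii) if two output qubits have disjoint backward cones then their joint reduced state factorizes. Together these force $B_a\cap B_b\neq\emptyset$ for every pair $a\neq b$, and the constraint graph then dictates how expensive this is.

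First I would record the correlation fact. Since every basis state in the superposition~\eqref{eq:Dicke_state} has Hamming weight exactly $k$, the indicator variables $x_c$ (equal to $1$ iff qubit $c$ is in $\ket1$) satisfy $\sum_c x_c=k$ deterministically, so $\mathrm{Var}(\sum_c x_c)=0$. By the permutation symmetry of $\ket{D^n_k}$ this gives $\mathrm{Cov}(x_a,x_b)=-\mathrm{Var}(x_a)/(n-1)$, with $\mathrm{Var}(x_a)=\tfrac{k}{n}(1-\tfrac{k}{n})>0$ whenever $1\le k\le n-1$. Hence $\langle Z_aZ_b\rangle\neq\langle Z_a\rangle\langle Z_b\rangle$, so the two-qubit marginal $\rho_{ab}$ is \emph{not} a product state for any $a\neq b$ in the stated range $1\le k\le\lfloor n/2\rfloor$. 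This step is elementary.

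The crux is the light-cone factorization: if $B_a\cap B_b=\emptyset$ then $\rho_{ab}=\rho_a\otimes\rho_b$. I would prove this from the digraph: any vertex shared by the two backward cones would, by following edges to the opposite temporal layer, yield a common input qubit in $B_a\cap B_b$; so disjointness at the input forces the two cones to be vertex-disjoint throughout, the sub-circuits supported on them tensor-factorize, and since the input $\ket{0^n}$ is a product state the marginal on $\{a,b\}$ factorizes. Contrapositively, the correlation in $\ket{D^n_k}$ forces $B_a\cap B_b\neq\emptyset$ for \emph{all} pairs. I would then extract the two geometric consequences of a common input qubit $i\in B_a\cap B_b$: under a constraint graph $G$, a depth-$d$ cone reaches from $a$ to $i$ only if $\mathrm{dist}_G(a,i)\le d$ (each layer expands a cone by graph-distance at most $1$, as gates act only on edges of $G$), giving $\mathrm{dist}_G(a,b)\le 2d$; and in the all-to-all model each backward or forward cone at most doubles per layer, so $|B_a|\le 2^d$ and $|F_i|\le 2^d$, where $F_i:=\{b:i\in B_b\}$ is the forward cone.

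Finally I would finish each case. For $\Path_n$ (resp.\ $\Grid^{n_1,n_2}_n$) I pick $a,b$ at maximum distance---opposite endpoints, giving $\mathrm{dist}=n-1$ (resp.\ opposite corners, $n_1+n_2-2\ge n_2-1$)---so $2d\ge\mathrm{dist}(a,b)$ yields $d=\Omega(n)$ (resp.\ $d=\Omega(n_2)$). For $K_n$ the diameter is $1$, so the distance bound is vacuous and I instead count: since $B_a\cap B_b\neq\emptyset$ for every $b$, the forward cones of the inputs in $B_a$ cover all outputs, $\bigcup_{i\in B_a}F_i=[n]$, whence $n\le|B_a|\cdot\max_i|F_i|\le 2^d\cdot 2^d=4^d$ and $d=\Omega(\log n)$. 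The main obstacle is the factorization lemma of the previous paragraph: making precise, within the reachable-set formalism, that disjoint input cones force a genuine tensor-product decomposition of the circuit (and hence of the marginal) is the one step requiring care, whereas the correlation computation and the distance/counting estimates are routine.
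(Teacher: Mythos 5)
Your proposal is correct and follows the same overall skeleton as the paper's proof --- a light-cone argument combined with the fact that every two-qubit marginal of $\ket{D^n_k}$ is non-product --- but several of the sub-steps are executed by genuinely different (and in places tighter) means, which is worth recording. First, to certify that $\rho_{ab}$ is not a product state, the paper writes out the $4\times 4$ marginal explicitly and derives a contradiction from the off-diagonal entries of a hypothetical tensor factorization; your covariance argument ($\mathrm{Var}(\sum_c x_c)=0$ plus permutation symmetry forces $\mathrm{Cov}(x_a,x_b)=-\mathrm{Var}(x_a)/(n-1)<0$) reaches the same conclusion from the diagonal alone and is cleaner and more robust. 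Second, for the all-to-all case the paper only bounds $|S'_i|\le 2|S'_{i+1}|$ and then \emph{asserts} that when $d=o(\log n)$ two output qubits with disjoint input cones exist; this does not follow from the size bound alone (all backward cones could a priori share a common input qubit), and your forward-cone counting argument $n\le |B_a|\cdot\max_i|F_i|\le 4^d$ is precisely the missing justification --- this is a genuine strengthening of the paper's write-up. Third, for the grid and path you use graph distance and the triangle inequality ($\mathrm{dist}_G(a,b)\le 2d$ for antipodal $a,b$) where the paper bounds the area of the cones; the two are equivalent here, though the distance form is more directly reusable for other constraint graphs. The one step you flag as delicate --- that disjoint backward cones at the input layer force $\rho_{ab}=\rho_a\otimes\rho_b$ --- is exactly the step the paper spends most of its effort on (the decomposition $C=(\mathbb{I}_1\otimes V\otimes\mathbb{I}_1)(U_1\otimes\mathbb{I}\otimes U_2)$ followed by the partial trace computation), and your sketch of it is the standard and correct route; spelling it out would complete the proof.
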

\begin{proof}
     The basic idea is, for any deterministic quantum circuit, to consider the light cones of the qubits at the last layer. If the circuit depth is not large enough, then there are two qubits whose light cones do not intersect, which makes the two qubits unentangled at the end of the circuit, if the starting state is product state. But it is not hard to verify that any two qubits are entangled in the Dicke state, therefore the depth needs to be large. For different constraint graphs the light cone expands at different paces, resulting in different lower bounds. 
     
     Next, we make the argument more precise. Let $C=L_dL_{d-1}\cdots L_1$ denote a depth-$d$ circuit for preparing Dicke state $\ket{D^n_k}$. The qubits of $C$ are labeled as $\{1,2,\cdots, n\}$. Let $H=(V_C,E_C)$ be the directed graph associated with quantum circuit $C$ of depth $d$. For each $i\in[d+1]$, define $S_{i}:=\{v_i^j:j\in[n]\}$ 
     and $S_i'$ 
     as in Definitions \ref{def:circuit-digraph_main} and \ref{def:reachable}.
    \begin{enumerate}
        \item {\bf Complete graph $K_n$.} By Definition \ref{def:circuit-digraph_main}, if there is a CNOT gate acting on qubits $j_1$ and $j_2$ in layer $L_i$, then there exist $4$ directed edges $(v_{i+1}^{j_1},v_i^{j_1})$, $(v_{i+1}^{j_2}, v_i^{j_1})$, $(v_{i+1}^{j_1},v_i^{j_2})$ and $(v_{i+1}^{j_2},v_i^{j_2})$. Then for a complete graph, $|S'_i|\le 2|S'_{i+1}|$ for $1\le i \le \log(n)$. Therefore, the size of the reachable set $S'_i$ of any qubit is $\abs{S'_{i}} \le 
    O(2^{d-i+2})$ if $1\le d-i+2\le \log(n)$.
    Assume that $d=o(\log(n))$. 
    Then for the directed graph $H_C$, we can find two sequences of reachable sets $P'_{d+1}\subseteq P'_{d}\subseteq\ldots\subseteq P'_{1}$ and $Q'_{d+1}\subseteq Q'_{d}\subseteq\ldots\subseteq Q'_{1}$ such that (i) $P'_{d+1}\neq Q'_{d+1}$, (ii) $P'_{1}\cap Q'_1=\emptyset$ and (iii) $p:=|P_1'|=o(n)$ and $q:=|Q'_1|=o(n)$. Without loss of generality, let $P'_{d+1}=\{v_{d+1}^1\}$ and $ Q'_{d+1}=\{v_{d+1}^n\}$. For any $i\in [d]$, let
    \[L_i=L_i(P_i')\otimes L_i(S_i-P_i'\cup Q_i')\otimes L_i(Q'_i)\]
    where $L_i(P_i')$, $L_i(S_i-P_i'\cup Q_i')$ and $L_i(Q'_i)$ consist of all quantum gates of $L_i$ acting on qubit sets $P_i'$, $S_i-P_i'\cup Q_i'$ and $Q_i'$ respectively.
    Therefore, quantum circuit $C$ for the Dicke state can be represented as
    \begin{align}
        C = & L_d L_{d-1} \cdots L_2 L_1, \nonumber\\
          = & (L_d(P_d')\otimes L_d(S_d-P_d'\cup Q_d')\otimes L_d(Q'_d)) \cdots (L_1(P_1')\otimes L_1(S_1-P_1'\cup Q_1')\otimes L_1(Q'_1))\\
          = & (\mathbb{I}_1\otimes V \otimes \mathbb{I}_1)(U_1\otimes \mathbb{I}_{n-(p+q)}\otimes U_2),\nonumber
    \end{align}
    where $\mathbb{I}_j$ denotes an identity operator acting on $j$ qubits and 
    \begin{align*}
        &U_1:=L_d(P_d')L_{d-1}(P_{d-1}')\cdots L_1(P_1')\\
        &U_2:=L_d(Q_d')L_{d-1}(Q_{d-1}')\cdots L_1(Q_1')\\
        &V:= L_d(S_d-P_d'\cup Q_d')L_{d-1}(S_{d-1}-P_{d-1}'\cup Q_{d-1}')\cdots L_1(S_1-P_1'\cup Q_1').
    \end{align*}
    For simplicity, we omit all identity operators in $U_1$, $U_2$ and $V$. Define $\ket{\phi_1}_{\{1,2,\ldots,p\}}:=U_1\ket{0^p}_{\{1,2,\ldots,p\}}$ and $\ket{\phi_2}_{\{n-q+1,\ldots,n\}}:=U_2\ket{0^q}_{\{n-q+1,\ldots,n\}}$. By Schmidt decomposition, state $\ket{\phi_1}$ and $\ket{\phi_2}$ can be decomposed as  
    \begin{align*}
        \ket{\phi_1}_{\{1,2,\ldots,p\}}&=\lambda_0\ket{\alpha_0}_{\{1\}}\ket{\beta_0}_{\{2,3,\ldots,p\}}+\lambda_1\ket{\alpha_1}_{\{1\}}\ket{\beta_1}_{\{2,3,\ldots,p\}},\\
        \ket{\phi_2}_{\{n-q+1,\ldots,n\}}&=\sigma_0\ket{\gamma_0}_{\{n-q+1,\ldots,n-1\}}\ket{\zeta_0}_{\{n\}}+\sigma_1\ket{\gamma_1}_{\{n-q+1,\ldots,n-1\}}\ket{\zeta_1}_{\{n\}},
    \end{align*}
    where $\lambda_0^2+\lambda_1^2=1$, $\sigma_0^2+\sigma_1^2=1$ and $\{\ket{\alpha_0},\ket{\alpha_1}\}$, $\{\ket{\beta_0},\ket{\beta_1}\}$, $\{\ket{\gamma_0},\ket{\gamma_1}\}$ and $\{\ket{\zeta_0},\ket{\zeta_1}\}$ are orthogonal vector sets.
    Then we have
    \begin{align*}
        C\ket{0^{n}}_{\{1,2,\ldots,n\}}&=(\mathbb{I}_1\otimes V \otimes \mathbb{I}_1)(U_1\otimes \mathbb{I}_{n-(p+q)}\otimes U_2)\ket{0^n}_{\{1,2,\cdots, n\}}\\
        &=(\mathbb{I}_1\otimes V \otimes \mathbb{I}_1)(\ket{\phi_1}_{\{1,2,\cdots,p\}}\otimes \ket{0^{n-(p+q)}}_{\{p+1,p+2,\ldots,n-q\}}\otimes \ket{\phi_2}_{\{n-q+1,\ldots,n\}})\\
        &=(\mathbb{I}_1\otimes V \otimes \mathbb{I}_1)\sum_{i,j=0}^1\lambda_i\sigma_j\ket{\alpha_i}_{\{1\}}\ket{\beta_i 0^{n-(p+q)}\gamma_j}_{\{2,3,\ldots,n-1\}}\ket{\zeta_j}_{\{n\}}\\
        &=\sum_{i,j=0}^1\lambda_i\sigma_j\ket{\alpha_i}_{\{1\}}(V\ket{\beta_i 0^{n-(p+q)}\gamma_j}_{\{2,3,\ldots,n-1\}})\ket{\zeta_j}_{\{n\}}
    \end{align*}
    Since $\{\ket{\beta_0},\ket{\beta_1}\}$ and $\{\ket{\gamma_0},\ket{\gamma_1}\}$ are orthogonal sets, after tracing out qubits $2,3,\ldots,n-1$ of $C\ket{0^n}$, we have
     \begin{align}
        &{\rm Tr}_{\{2,3,\ldots,n-1\}}(C\ket{0^{n}}\bra{0^{n}}_{\{1,2,\ldots,n\}}C^\dagger)\nonumber\\
        =&{\rm Tr}_{\{2,3,\ldots,n-1\}}(\sum_{i,j=0}^1\lambda_i\sigma_j\ket{\alpha_i}\bra{\alpha_i}_{\{1\}}(V\ket{\beta_i 0^{n-(p+q)}\gamma_j}\bra{\beta_i 0^{n-(p+q)}\gamma_j}_{\{2,3,\ldots,n-1\}}V^\dagger)\ket{\zeta_j}\bra{\zeta_j}_{\{n\}}) \nonumber\\
         =&\sum_{i,j=0}^1\lambda_i\sigma_j\ket{\alpha_i}\bra{\alpha_i}_{\{1\}}\ket{\zeta_j}\bra{\zeta_j}_{\{n\}} \nonumber\\
         =&(\sum_{i=0}^1\lambda_i\ket{\alpha_i}\bra{\alpha_i}_{\{1\}})\otimes (\sum_{j=0}^1\sigma_j\ket{\zeta_j}\bra{\zeta_j}_{\{n\}}).\label{eq:circuit_trace}
    \end{align}
    Namely, it can be represented as a tensor product of two (mixed) states. For an $(n,k)$-Dicke state $\ket{D^n_k}$, after tracing out qubits in $\{2,3,\ldots,n-1\}$, we have
    \begin{align}
       {\rm Tr}_{\{2,3,\ldots,n-1\}}( \ket{D^n_k}\bra{D^n_k}_{\{1,2,\ldots,n\}})
    = &
    \frac{1}{\binom{n}{k}}(\binom{n-2}{k}\ket{0^2}\bra{0^2}_{\{1,n\}}+2\binom{n-2}{k-1}\ket{D_1^2}\bra{D^2_1}_{\{1,n\}}+\binom{n-2}{k-2}\ket{1^2}\bra{1^2}_{\{1,n\}})
    \end{align}
    where $\binom{n-2}{k-2}=0$ if $k=1$. The matrix representation of ${\rm Tr}_{\{2,3,\ldots,n-1\}}( \ket{D^n_k}\bra{D^n_k}_{\{1,2,\ldots,n\}})$ with respect to the orthonmal basis $\{\ket{00},\ket{01},\ket{10},\ket{11}\}$ is 
    \begin{align}
     \frac{1}{\binom{n}{k}}\left[\begin{array}{cccc}
         \binom{n-2}{k} & & &\\
          &\binom{n-2}{k-1} & \binom{n-2}{k-1}&\\
          &\binom{n-2}{k-1} & \binom{n-2}{k-1}&\\
          & & & \binom{n-2}{k-2}
     \end{array}\right],\label{eq:dicke_trace}
    \end{align}
    Assume that Eq. \eqref{eq:dicke_trace} can be represented as a tensor product of two mixed states. Then for some integers $s$ and $t$, Eq. \eqref{eq:dicke_trace} can be represented as
    \begin{equation}\label{eq:tensor_representation}
       \rho_1\otimes \rho_2=\left(\sum_{i=1}^s p_i\left[\begin{array}{cc}
          a_i   & x_i \\
           x_i^*  & 1-a_i
        \end{array}\right]\right)\otimes\left(\sum_{j=1}^tq_j\left[\begin{array}{cc}
          b_j   & y_j \\
          y^*_j  & 1-b_i
        \end{array}\right]\right),
    \end{equation}
    where $p_i,q_j, a_i, b_i\in [0,1]$, $\sum_{i=1}^s p_i=1$ and $\sum_{j=1}^t q_j=1$. Since Eqs. \eqref{eq:dicke_trace} and \eqref{eq:tensor_representation} are equivalent, we have
    \begin{align*}
        & (\sum_{i=1}^s p_i a_i)(\sum_{j=1}^tq_j b_j)=\binom{n-2}{k}/\binom{n}{k}>0,\\
        & (\sum_{i=1}^s p_i a_i)(\sum_{j=1}^tq_j y^*_j)=0,\\
        &(\sum_{i=1}^s p_i x_i)(\sum_{j=1}^tq_j y^*_j)=\binom{n-2}{k-1}/\binom{n}{k}>0.
    \end{align*}
   The first two equations imply $\sum_{j=1}^tq_j y^*_j=0$, but the last equation implies $\sum_{j=1}^tq_j y^*_j>0$. Therefore, the above equations have no solution, i.e., Eq. \eqref{eq:dicke_trace} can not be represented as a tensor product of two mixed states. Hence our assumption $d=o(\log(n))$ is not valid.
   
        \item {\bf Grid $\Grid_n^{n_1,n_2}$.} Assume that $S'_{d+1}$ is the set of the upper left (lower right) vertex of the grid.
    Note that, for $\Grid_{n}^{n_1, n_2}$, $S'_{d+1} \subseteq [1]\times [1]$, $S'_{d}\subseteq [2]\times[2]$, and so on. We have the following bounds for $|S_i'|$,
     \begin{equation}\label{eq:grid-Si}
    \abs{S'_{i}} \le \begin{cases}
    O((d-i+2)^2), &\quad \text{if }d-i+2\le n_1,\\
    O(n_1(d-i+2)), &\quad \text{if } n_{1}< d-i+2\le n_{2},\\
    n_1 n_2=n,& \quad \text{if } d-i+2>n_2.
    \end{cases}
    \end{equation}
    Assume that $d=o(n_2)$. Therefore, for the directed graph $H_C$, we can find two sequences of reachable sets $P'_{d+1}\subseteq P'_{d}\subseteq\ldots\subseteq P'_{1}$ and $Q'_{d+1}\subseteq Q'_{d}\subseteq\ldots\subseteq Q'_{1}$ such that (i) $P'_{d+1}\neq Q'_{d+1}$, (ii) $P'_{1}\cap Q'_1=\emptyset$ and (iii) $p:=|P_1'|=o(n)$ and $q:=|Q'_1|=o(n)$. By the same discussion above, we can show that $d=\Omega(n_2)$.
  
        \item {\bf Path $\Path_n$.} A path $\Path_n$ is a grid $\Grid_n^{1,n}$. Therefore, the depth lower bound is $\Omega(n)$.
    \end{enumerate}
\end{proof}

{\noindent \textbf{Remark.}}    For the circuit depth of $(n,k)$-Dicke state ($k\le n/2$), combining Theorem \ref{thm:Dicke_unitary_noconstraint_improve}, Corollary \ref{coro:Dicke_grid_improve} and Theorem \ref{thm:lb_diam}, the following conclusions can be drawn:
If there are no qubit connectivity constraints, the depth of Theorem \ref{thm:Dicke_unitary_noconstraint_improve} is asymptotically optimal when $k=O(1)$.
If there are qubit connectivity constraints $\Grid_n^{n_1,n_2}$, the depth of Corollary  \ref{coro:Dicke_grid_improve} is asymptotically optimal when $k=O(n_2/\log(n_1))$.

\section{Conclusion}
\label{sec:conclusion}
In this paper, we have shown that any $(n,k)$-Dicke state ($k\le n/2$) can be prepared by a quantum circuit consisting of single-qubit and CNOT gates of depth $O(\log(k)\log(n/k)+k)$ with all-to-all qubit connectivity. Under the $\Grid_n^{n_1,n_2}$ qubit connectivity constraint $n_1\le n_2$, we construct circuits of depth $O(k\log(n/k)+n_2)$ if $k\ge n_2/n_1$ and $O(n_2)$ if $k < n_2/n_1$. Furthermore, we also presented the depth lower bounds $\Omega(\log(n))$ and $\Omega(n_2)$ with all-to-all qubit connectivity and under $\Grid_{n}^{n_1,n_2}$ constraint, respectively. A prominent open problem is to close the gap between the depth upper and lower bounds, for which we conjecture that $\Omega(k)$ is a lower bound even for all-to-all qubit connectivity. This, if true, implies that our constructions are all optimal (up to a logarithm factor) for all-to-all connectivity and under $\Grid_{n}^{n_1,n_2}$ constraint with different parameter regimes.

\bibliographystyle{alpha}
\bibliography{reference} 

\newcommand{\etalchar}[1]{$^{#1}$}
\begin{thebibliography}{YMW{\etalchar{+}}24}

\bibitem[ABBE22]{aktar2022divide}
Shamminuj Aktar, Andreas B{\"a}rtschi, Abdel-Hameed~A Badawy, and Stephan
  Eidenbenz.
\newblock A divide-and-conquer approach to {D}icke state preparation.
\newblock {\em IEEE Transactions on Quantum Engineering}, 3:1--16, 2022.

\bibitem[ALL23]{an2023linear}
Dong An, Jin-Peng Liu, and Lin Lin.
\newblock Linear combination of hamiltonian simulation for nonunitary dynamics
  with optimal state preparation cost.
\newblock {\em Physical Review Letters}, 131(15):150603, 2023.

\bibitem[BCC{\etalchar{+}}15]{berry2015simulating}
Dominic~W Berry, Andrew~M Childs, Richard Cleve, Robin Kothari, and Rolando~D
  Somma.
\newblock Simulating hamiltonian dynamics with a truncated taylor series.
\newblock {\em Physical Review Letters}, 114(9):090502, 2015.

\bibitem[BDHC19]{baker2019decomposing}
Jonathan~M Baker, Casey Duckering, Alexander Hoover, and Frederic~T Chong.
\newblock Decomposing quantum generalized toffoli with an arbitrary number of
  ancilla.
\newblock {\em arXiv preprint arXiv:1904.01671}, 2019.

\bibitem[BE19]{bartschi2019deterministic}
Andreas B{\"a}rtschi and Stephan Eidenbenz.
\newblock Deterministic preparation of {D}icke states.
\newblock In {\em International Symposium on Fundamentals of Computation
  Theory}, pages 126--139. Springer, 2019.

\bibitem[BE22]{bartschi2022short}
Andreas B{\"a}rtschi and Stephan Eidenbenz.
\newblock Short-depth circuits for {D}icke state preparation.
\newblock In {\em 2022 IEEE International Conference on Quantum Computing and
  Engineering (QCE)}, pages 87--96. IEEE, 2022.

\bibitem[Ber14]{berry2014high}
Dominic~W Berry.
\newblock High-order quantum algorithm for solving linear differential
  equations.
\newblock {\em Journal of Physics A: Mathematical and Theoretical},
  47(10):105301, 2014.

\bibitem[BFLN24]{buhrman2024state}
Harry Buhrman, Marten Folkertsma, Bruno Loff, and Niels~MP Neumann.
\newblock State preparation by shallow circuits using feed forward.
\newblock {\em Quantum}, 8:1552, 2024.

\bibitem[BWP{\etalchar{+}}17]{biamonte2017quantum}
Jacob Biamonte, Peter Wittek, Nicola Pancotti, Patrick Rebentrost, Nathan
  Wiebe, and Seth Lloyd.
\newblock Quantum machine learning.
\newblock {\em Nature}, 549(7671):195--202, 2017.

\bibitem[CEB20]{cook2020quantum}
Jeremy Cook, Stephan Eidenbenz, and Andreas B{\"a}rtschi.
\newblock The quantum alternating operator ansatz on maximum $k$-vertex cover.
\newblock In {\em 2020 IEEE International Conference on Quantum Computing and
  Engineering (QCE)}, pages 83--92. IEEE, 2020.

\bibitem[CFG{\etalchar{+}}19]{cruz2019efficient}
Diogo Cruz, Romain Fournier, Fabien Gremion, Alix Jeannerot, Kenichi Komagata,
  Tara Tosic, Jarla Thiesbrummel, Chun~Lam Chan, Nicolas Macris, Marc-Andr{\'e}
  Dupertuis, et~al.
\newblock Efficient quantum algorithms for {GHZ} and {W} states, and
  implementation on the ibm quantum computer.
\newblock {\em Advanced Quantum Technologies}, 2(5-6):1900015, 2019.

\bibitem[CFGG02]{childs2000finding}
Andrew~M. Childs, Edward Farhi, Jeffrey Goldstone, and Sam Gutmann.
\newblock Finding cliques by quantum adiabatic evolution.
\newblock {\em Quantum Information $\&$ Computation}, 2(3):181–191, April
  2002.

\bibitem[CL20]{childs2020quantum}
Andrew~M Childs and Jin-Peng Liu.
\newblock Quantum spectral methods for differential equations.
\newblock {\em Communications in Mathematical Physics}, 375(2):1427--1457,
  2020.

\bibitem[Dic54]{dicke1954coherence}
Robert~H Dicke.
\newblock Coherence in spontaneous radiation processes.
\newblock {\em Physical Review}, 93(1):99, 1954.

\bibitem[Gid15]{multi-controlled-gate}
Craig Gidney.
\newblock Using quantum gates instead of ancilla bits.
\newblock
  \url{https://algassert.com/circuits/2015/06/22/Using-Quantum-Gates-instead-of-Ancilla-Bits.html},
  2015.

\bibitem[HCRW09]{hume2009preparation}
DB~Hume, Chin-Wen Chou, Till Rosenband, and David~J Wineland.
\newblock Preparation of {D}icke states in an ion chain.
\newblock {\em Physical Review A—Atomic, Molecular, and Optical Physics},
  80(5):052302, 2009.

\bibitem[HHL09]{harrow2009quantum}
Aram~W Harrow, Avinatan Hassidim, and Seth Lloyd.
\newblock Quantum algorithm for linear systems of equations.
\newblock {\em Physical Review Letters}, 103(15):150502, 2009.

\bibitem[JST{\etalchar{+}}20]{jiang2020optimal}
Jiaqing Jiang, Xiaoming Sun, Shang-Hua Teng, Bujiao Wu, Kewen Wu, and Jialin
  Zhang.
\newblock Optimal space-depth trade-off of {CNOT} circuits in quantum logic
  synthesis.
\newblock In {\em Proceedings of the Fourteenth Annual ACM-SIAM Symposium on
  Discrete Algorithms}, pages 213--229. SIAM, 2020.

\bibitem[LC17]{low2017optimal}
Guang~Hao Low and Isaac~L Chuang.
\newblock Optimal hamiltonian simulation by quantum signal processing.
\newblock {\em Physical Review Letters}, 118(1):010501, 2017.

\bibitem[LC19]{low2019hamiltonian}
Guang~Hao Low and Isaac~L Chuang.
\newblock Hamiltonian simulation by qubitization.
\newblock {\em Quantum}, 3:163, 2019.

\bibitem[LCG24]{liu2024low}
Zhenning Liu, Andrew~M Childs, and Daniel Gottesman.
\newblock Low-depth quantum symmetrization.
\newblock {\em arXiv preprint arXiv:2411.04019}, 2024.

\bibitem[LL24]{luo2024circuit}
Jingquan Luo and Lvzhou Li.
\newblock Circuit complexity of sparse quantum state preparation.
\newblock {\em arXiv preprint arXiv:2406.16142}, 2024.

\bibitem[LLL{\etalchar{+}}13]{lamata2013deterministic}
Lucas Lamata, Carlos~E Lopez, BP~Lanyon, Thierry Bastin, Juan~Carlos Retamal,
  and Enrique Solano.
\newblock Deterministic generation of arbitrary symmetric states and
  entanglement classes.
\newblock {\em Physical Review A—Atomic, Molecular, and Optical Physics},
  87(3):032325, 2013.

\bibitem[MJPV99]{murao1999quantum}
M~Murao, D~Jonathan, MB~Plenio, and V~Vedral.
\newblock Quantum telecloning and multiparticle entanglement.
\newblock {\em Physical Review A}, 59(1):156, 1999.

\bibitem[OB22]{ouyang2022finite}
Yingkai Ouyang and Gavin~K Brennen.
\newblock Finite-round quantum error correction on symmetric quantum sensors.
\newblock {\em arXiv preprint arXiv:2212.06285}, 2022.

\bibitem[{\"O}SI07]{ozdemir2007necessary}
Sahin~K {\"O}zdemir, Junichi Shimamura, and Nobuyuki Imoto.
\newblock A necessary and sufficient condition to play games in quantum
  mechanical settings.
\newblock {\em New Journal of Physics}, 9(2):43, 2007.

\bibitem[Ouy14]{ouyang2014permutation}
Yingkai Ouyang.
\newblock Permutation-invariant quantum codes.
\newblock {\em Physical Review A}, 90(6):062317, 2014.

\bibitem[Ouy21]{ouyang2021permutation}
Yingkai Ouyang.
\newblock Permutation-invariant quantum coding for quantum deletion channels.
\newblock In {\em 2021 IEEE International Symposium on Information Theory
  (ISIT)}, pages 1499--1503. IEEE, 2021.

\bibitem[PSC24]{piroli2024approximating}
Lorenzo Piroli, Georgios Styliaris, and J~Ignacio Cirac.
\newblock Approximating many-body quantum states with quantum circuits and
  measurements.
\newblock {\em Physical Review Letters}, 133(23):230401, 2024.

\bibitem[STY{\etalchar{+}}23]{sun2023asymptotically}
Xiaoming Sun, Guojing Tian, Shuai Yang, Pei Yuan, and Shengyu Zhang.
\newblock Asymptotically optimal circuit depth for quantum state preparation
  and general unitary synthesis.
\newblock {\em IEEE Transactions on Computer-Aided Design of Integrated
  Circuits and Systems}, 42(10):3301--3314, 2023.

\bibitem[TWG{\etalchar{+}}10]{toth2010permutationally}
G{\'e}za T{\'o}th, Witlef Wieczorek, David Gross, Roland Krischek, Christian
  Schwemmer, and Harald Weinfurter.
\newblock Permutationally invariant quantum tomography.
\newblock {\em Physical Review Letters}, 105(25):250403, 2010.

\bibitem[WKK{\etalchar{+}}09]{wieczorek2009experimental}
Witlef Wieczorek, Roland Krischek, Nikolai Kiesel, Patrick Michelberger,
  G{\'e}za T{\'o}th, and Harald Weinfurter.
\newblock Experimental entanglement of a six-photon symmetric {D}icke state.
\newblock {\em Physical Review Letters}, 103(2):020504, 2009.

\bibitem[WT21]{wang2021preparing}
Yang Wang and Barbara~M Terhal.
\newblock Preparing {D}icke states in a spin ensemble using phase estimation.
\newblock {\em Physical Review A}, 104(3):032407, 2021.

\bibitem[XZG07]{xiao2007generation}
Yun-Feng Xiao, Xu-Bo Zou, and Guang-Can Guo.
\newblock Generation of atomic entangled states with selective resonant
  interaction in cavity quantum electrodynamics.
\newblock {\em Physical Review A—Atomic, Molecular, and Optical Physics},
  75(1):012310, 2007.

\bibitem[YAZ24]{yuan2024does}
Pei Yuan, Jonathan Allcock, and Shengyu Zhang.
\newblock Does qubit connectivity impact quantum circuit complexity?
\newblock {\em IEEE Transactions on Computer-Aided Design of Integrated
  Circuits and Systems}, 43(2):520--533, 2024.

\bibitem[YMW{\etalchar{+}}24]{yu2024efficient}
Jeffery Yu, Sean~R Muleady, Yu-Xin Wang, Nathan Schine, Alexey~V Gorshkov, and
  Andrew~M Childs.
\newblock Efficient preparation of {D}icke states.
\newblock {\em arXiv preprint arXiv:2411.03428}, 2024.

\bibitem[YZ23]{yuan2023optimal}
Pei Yuan and Shengyu Zhang.
\newblock Optimal (controlled) quantum state preparation and improved unitary
  synthesis by quantum circuits with any number of ancillary qubits.
\newblock {\em Quantum}, 7:956, 2023.

\bibitem[YZ24]{yuan2024full}
Pei Yuan and Shengyu Zhang.
\newblock Full characterization of the depth overhead for quantum circuit
  compilation with arbitrary qubit connectivity constraint.
\newblock {\em arXiv preprint arXiv:2402.02403}, 2024.

\bibitem[ZLY22]{zhang2022quantum}
Xiao-Ming Zhang, Tongyang Li, and Xiao Yuan.
\newblock Quantum state preparation with optimal circuit depth: Implementations
  and applications.
\newblock {\em Physical Review Letters}, 129(23):230504, 2022.

\bibitem[ZNS25]{zi2025constant}
Wei Zi, Junhong Nie, and Xiaoming Sun.
\newblock Constant-depth quantum circuits for arbitrary quantum state
  preparation via measurement and feedback.
\newblock {\em arXiv preprint arXiv:2503.16208}, 2025.

\end{thebibliography}

\end{document}